\theoremstyle{definition}
\newtheorem{theorem}{Theorem}
\newtheorem{lemma}[theorem]{Lemma}
\newtheorem{corollary}[theorem]{Corollary}
\newtheorem{definition}[theorem]{Definition}
\newtheorem*{lemma*}{Lemma}
\newcommand{\cA}{\mathcal{A}}
\newcommand{\bA}{\mathbf{A}}
\newcommand{\cB}{\mathcal{B}}
\newcommand{\cC}{\mathcal{C}}
\newcommand{\cP}{\mathcal{P}}
\newcommand{\hQ}{\widehat{Q}}
\newcommand{\cR}{\mathcal{R}}
\newcommand{\cS}{\mathcal{S}}
\newcommand{\hS}{\widehat{S}}
\newcommand{\hcS}{\widehat{\cS}}
\newcommand{\cT}{\mathcal{T}}
\newcommand{\fa}{{a}}
\newcommand{\ba}{\bar{a}}
\newcommand{\bq}{\bar{q}}
\newcommand{\br}{\bar{r}}
\newcommand{\bs}{\bar{s}}
\newcommand{\bx}{\bar{x}}
\newcommand{\equals}{=}
\newcommand{\<}{\langle}
\renewcommand{\>}{\rangle}
\newcommand{\Mapsto}{{\mathop{\mapsto}}}
\newcommand{\Real}{\mathbb{R}}
\newcommand{\Rational}{\mathbb{Q}}
\newcommand{\vars}{\text{\upshape{vars}}}
\newcommand{\consts}{\text{\upshape{consts}}}
\newcommand{\bconsts}{\text{\upshape{bconsts}}}
\newcommand{\fconsts}{\text{\upshape{fconsts}}}
\newcommand{\hbeta}{\widehat{\beta}}
\newcommand{\hchi}{\widehat{\chi}}
\newcommand{\semequiv}{\mathrel{|}\joinrel\Relbar\joinrel\mathrel{|}}
\newcommand{\JA}{{\mathcal{J}_\cA}}
\newcommand{\JB}{{\mathcal{J}_\cB}}
\newcommand{\Jgamma}{{\mathcal{J}_\gamma}}
\newcommand{\Fr}{\text{fr}}
\newcommand{\floor}[1]{\lfloor #1 \rfloor}
\newcommand{\ceil}[1]{\lceil #1 \rceil}
\newcommand{\rel}{\mathrel{\triangleleft}}
\newcommand{\BsrSlr}{\text{BSR}(\text{SLR})}
\newcommand{\BsrBd}{\text{BSR}(\text{BD})}
\newcommand{\CC}{\text{\textsc{cc}}}
\newcommand{\inv}{\text{\upshape{inv}}}
\newcommand{\Reach}{\text{\upshape{Reach}}}
\newcommand{\Loc}{\text{\upshape{Loc}}}
\newcommand{\hook}{\hookrightarrow}
\newcommand{\hsimeq}{\mathrel{\widehat{\simeq}}}
\newcommand{\Ndef}{N_{\text{def}}}
\begin{document}
\title{The Bernays--Sch\"onfinkel--Ramsey Fragment\\ with Bounded Difference Constraints\\ over the Reals is Decidable}
\author{
	Marco Voigt\\
	\small\textit{Max Planck Institute for Informatics, Saarland Informatics Campus, Saarbr\"ucken, Germany,}\\
	\small\textit{Saarbr\"ucken Graduate School of Computer Science}
}	
\date{}
\maketitle

\begin{abstract}
	First-order linear real arithmetic enriched with uninterpreted predicate symbols yields an interesting modeling language. However, satisfiability of such formulas is undecidable, even if we restrict the  uninterpreted predicate symbols to arity one. In order to find decidable fragments of this language, it is necessary to restrict the expressiveness of the arithmetic part. One possible path is to confine arithmetic expressions to difference constraints of the form $x - y \rel c$, where $\rel$ ranges over the standard relations $<, \leq, =, \neq, \geq, >$ and $x,y$ are universally quantified. However, it is known that combining difference constraints with uninterpreted predicate symbols yields an undecidable satisfiability problem again. In this paper, it is shown that satisfiability becomes decidable if we in addition bound the ranges of universally quantified variables. As bounded intervals over the reals still comprise infinitely many values, a trivial instantiation procedure is not sufficient to solve the problem.
\end{abstract}

%%%%%%%%%%%%%%%%%%%%%%%%%%%%%%%%%%%%%%%%%%%%%%%%%%%%%%%%%%%%%%%%%%%%%%%%%%%%%%%
%%%%%%%%%%%%%%%%%%%%%%%%%%%%%%%%%%%%%%%%%%%%%%%%%%%%%%%%%%%%%%%%%%%%%%%%%%%%%%%
%%%%%%%%%%%%%%%%%%%%%%%%%%%%%%%%%%%%%%%%%%%%%%%%%%%%%%%%%%%%%%%%%%%%%%%%%%%%%%%

%%%%%%%%%%%%%%%%%%%%%%%%%%%%%%%%%%%%%%%%%%%%%%%%%%%%%%%%%%%%%%%%%%%%%%%%%%%%%%%
\section{Introduction}\label{section:introduction}
%%%%%%%%%%%%%%%%%%%%%%%%%%%%%%%%%%%%%%%%%%%%%%%%%%%%%%%%%%%%%%%%%%%%%%%%%%%%%%%

It has been discovered about half a century ago that linear arithmetic with additional uninterpreted predicate symbols has an undecidable satisfiability problem~\cite{Putnam1957}.
Even enriching Presburger arithmetic with only a single uninterpreted predicate symbol of arity one suffices to facilitate encodings of the halting problem for two-counter machines~\cite{Downey1972,VoigtArXiv2017}.
These results do not change substantially when we use the reals as underlying domain instead of the integers.
This means, in order to obtain a decidable subfragment of the combination of linear arithmetic with uninterpreted predicate symbols, the arithmetic part has to be restricted considerably.
In this paper, two subfragments with a decidable satisfiability problem are presented.
Both are based on the Bernays--Sch\"onfinkel--Ramsey fragment (BSR) of first-order logic, which is the $\exists^* \forall^*$ prefix class.
Uninterpreted constant symbols and the distinguished equality predicate are allowed, non-constant function symbols are not.
The arity of uninterpreted predicate symbols is not restricted.
We extend BSR in two ways and call the obtained fragments \emph{BSR modulo simple linear real constraints (\BsrSlr)} and \emph{BSR modulo bounded difference constraints (\BsrBd)}.

	The first clause class---defined in Definition~\ref{definition:BsrSlaSyntax} and treated in detail in Section~\ref{section:DecidabilityBsrSla}---adds constraints of the form $s \rel t$, $x \rel t$, and $x \rel y$ to BSR clauses, where $x$ and $y$ are real-valued variables that are implicitly universally quantified, $s$ and $t$ are linear arithmetic terms that are ground, and $\rel {\in} \{<, \leq, =, \neq, \geq,$ $>\}$.
	We allow Skolem constants in the ground terms $s$ and $t$.
	Since their value is not predetermined, they can be conceived as being existentially quantified.
	The constraints used in this clause fragment are similar to the kind of constraints that appear in the context of the \emph{array property fragment}~\cite{Bradley2006} and extensions thereof (see, e.g.,~\cite{Ge2009,VoigtCADE2017}).
	The main differences are that we use the real domain in this paper instead of the integer domain, and that we allow strict inequalities and disequations between universally quantified variables.
	In the presence of uninterpreted function symbols, strict inequality or disequations can be used to assert that some uninterpreted function $f$ is injective.
	This expressiveness prevents certain instantiation-based approaches to satisfiability checking from being applicable, e.g.\ the methods in~\cite{Bradley2006,VoigtCADE2017}.
	In the context of the array property fragment, this expressiveness even leads to undecidability.

	The \BsrBd\ clause class---presented in Definition~\ref{definition:BsrBdSyntax} and in Section~\ref{section:DecidabilityBsrBd}---adds constraints of the form $x \rel c$, $x \rel y$ and $x - y \rel c$ to BSR clauses, where $x$ and $y$ are real-valued variables, $c$ could be any rational number, and $\rel$ ranges over $\{<, \leq, =, \neq, \geq, >\}$ again.
	We refer to constraints of the form $x - y \rel c$ as \emph{difference constraints}.
	Already in the seventies, Pratt identified difference constraints and boolean combinations thereof as an important tool for the formalization of verification conditions~\cite{Pratt1977}.
	Applications include the verification of timed systems and scheduling problems (see, e.g.,~\cite{Kroening2016}    for references).
	As unrestricted combinations of uninterpreted predicate symbols with difference constraints lead to an undecidable satisfiability problem (once more, two-counter machines can be encoded in a simple way~\cite{VoigtArXiv2015}), we have to further confine the language.
	Every difference constraint $x - y \rel c$ has to be conjoined with four additional constraints $c_x \leq x$, $x \leq d_x$, $c_y \leq y$, $y \leq d_y$, where $c_x, d_x, c_y, d_y$ are rationals.
	This restriction seems to weaken expressiveness severely.
	Indeed, it has to, since we aim for a decidable satisfiability problem.
	Yet, we show in Section~\ref{section:ReachabilityForTimedAutomata} that \BsrBd\ clause sets are expressive enough to formulate the reachability problem for timed automata.
	In~\cite{Niebert2002} an encoding of the reachability problem for timed automata in \emph{difference logic} (boolean combinations of difference constraints \emph{without} uninterpreted predicate symbols) is given, which facilitates deciding bounded reachability, i.e.\ the problem of reaching a given set of states within a bounded number of transitions.
	When using \BsrBd\ as a modeling language, we do not have to fix an upper bound on the number of steps a priori.
	
The main result of the present paper is that satisfiability of finite \BsrSlr\ clause sets and finite \BsrBd\ clause sets is decidable (Theorems~\ref{theorem:DecidabilityOfBsrSla} and~\ref{theorem:DecidabilityOfBsrBd}), respectively.
The proof technique is very similar for the two fragments.
It is partially based on methods from Ramsey theory, which are briefly introduced in Section~\ref{section:RamseyTheory}.
The used approach may turn out to be applicable to other fragments of BSR modulo linear real arithmetic as well.

In order to facilitate smooth reading, long proofs are only sketched in the main text and presented in full in the appendix. 
The present paper is an extended version of~\cite{VoigtFroCoS2017}.

%%%%%%%%%%%%%%%%%%%%%%%%%%%%%%%%%%%%%%%%%%%%%%%%%%%%%%%%%%%%%%%%%%%%%%%%%%%%%%%
%%%%%%%%%%%%%%%%%%%%%%%%%%%%%%%%%%%%%%%%%%%%%%%%%%%%%%%%%%%%%%%%%%%%%%%%%%%%%%%
%%%%%%%%%%%%%%%%%%%%%%%%%%%%%%%%%%%%%%%%%%%%%%%%%%%%%%%%%%%%%%%%%%%%%%%%%%%%%%%

%%%%%%%%%%%%%%%%%%%%%%%%%%%%%%%%%%%%%%%%%%%%%%%%%%%%%%%%%%%%%%%%%%%%%%%%%%%%%%%
\section{Preliminaries and notation}\label{section:preliminaries}
%%%%%%%%%%%%%%%%%%%%%%%%%%%%%%%%%%%%%%%%%%%%%%%%%%%%%%%%%%%%%%%%%%%%%%%%%%%%%%%

Hierarchic combinations of first-order logic with background theories build upon sorted logic with equality \cite{Bachmair1994b,Baumgartner2013,Kruglov2012}. We instantiate this framework with the BSR fragment and linear arithmetic over the reals as the \emph{base theory}. The \emph{base sort $\cR$} shall always be interpreted by the reals $\Real$. For simplicity, we restrict our considerations to a single \emph{free sort $\cS$}, which may be freely interpreted as some nonempty domain, as usual. 

We denote by $V_\cR$ a countably infinite set of base-sort variables.
\emph{Linear arithmetic (LA) terms} are build from rational constants $0, 1, \tfrac{1}{2}, -2, -\tfrac{3}{4}$, etc., the operators $+, -$, and the variables from $V_\cR$.
We moreover allow base-sort constant symbols whose values have to be determined by an interpretation (\emph{Skolem constants}). 
They can be conceived as existentially quantified.
As predicates over the reals we allow the standard relations $<, \leq, \equals, \not\equals, \geq, >$. 

In order to hierarchically extend the base theory by the BSR fragment, we introduce the free sort $\cS$, a countably infinite set $V_\cS$ of \emph{free-sort variables}, 
a finite set $\Omega_\cS$ of \emph{free (uninterpreted) constant symbols of sort $\cS$} and a finite set $\Pi$ of \emph{free predicate symbols} equipped with sort information. 
Note that every predicate symbol in $\Pi$ has a finite, nonnegative arity and can be of a mixed sort over the two sorts $\cR$ and $\cS$, e.g.\ $P : \cR \times \cS \times \cR$.
We use the symbol $\approx$ to denote the built-in equality predicate on $\cS$. 
To avoid confusion, we tacitly assume that no constant or predicate symbol is overloaded, i.e.\ they have a unique sort.

%%%%%%%%%%%%%%%%%%%%%%%%%%%%%%%%%%%%%%%%%%%%%%%%%%%%%%%%%%%%%%%%%%%%%%%%%%%%%%%
\begin{definition}[BSR with simple linear real constraints---\BsrSlr]\label{definition:BsrSlaSyntax}	
	A \emph{\BsrSlr\ clause} has the form $\Lambda \,\|\, \Gamma \to \Delta$, where $\Lambda$, $\Gamma$, $\Delta$ are multisets of atoms satisfying the following conditions.
	(i) Every atom in $\Lambda$ is an LA constraint of the form $s\rel t$ or $x\rel t$ or $x \rel y$ 
                              where $s, t$ are ground (i.e.\ variable-free) LA terms, $x, y \in V_\cR$, and $\rel \,{\in} \{<, \leq,$ $\equals, \not\equals, \geq, >\}$.
	(ii) Every atom in $\Gamma$ and $\Delta$ is either an equation $s\approx s'$ over free-sort variables and constant symbols,
                              or a non-equational atom $P(s_1, \ldots, s_m)$ that is well sorted and where the $s_i$ range over base-sort variables, free-sort variables, and free-sort constant symbols.
\end{definition}

%%%%%%%%%%%%%%%%%%%%%%%%%%%%%%%%%%%%%%%%%%%%%%%%%%%%%%%%%%%%%%%%%%%%%%%%%%%%%%%
\begin{definition}[BSR with bounded difference constraints---\BsrBd]\label{definition:BsrBdSyntax}	
		A \emph{\BsrBd\ clause} has the form $\Lambda \,\|\, \Gamma \to \Delta$, where the multisets $\Gamma$, $\Delta$ satisfy Condition~(ii) of Definition~\ref{definition:BsrSlaSyntax}, and every atom in $\Lambda$ is an LA constraint of the form $x \rel c$, $x \rel y$, or $x - y \rel c$ where $c$ may be any rational constant (not a Skolem constant), $x, y \in V_\cR$, and $\rel \,{\in} \{<, \leq,$  $\equals, \not\equals, \geq, >\}$.
		Moreover, we require that whenever $\Lambda$ contains a constraint of the form $x - y \rel c$, then $\Lambda$ also contains constraints $c_x \leq x$, $x \leq d_x$, $c_y \leq y$, and $y \leq d_y$ with $c_x, d_x, c_y, d_y \in \Rational$.
\end{definition}

We omit the empty multiset left of ``$\rightarrow$'' and denote it by $\Box$ right of ``$\rightarrow$'' 
(where $\Box$ at the same time stands for \emph{falsity}).
The introduced clause notation separates arithmetic constraints from the free first-order part. 
We use the vertical double bar ``$\|$'' to indicate this syntactically.  
Intuitively, clauses $\Lambda \,\|\, \Gamma \to \Delta$ can be read as 
$\bigl(\bigwedge\Lambda \wedge \bigwedge\Gamma\bigr) \to \bigvee\Delta$, i.e.\ the multisets $\Lambda, \Gamma$ 
stand for conjunctions of atoms and $\Delta$ stands for a disjunction of atoms. 
Requiring the free parts $\Gamma$ and $\Delta$ of clauses to not contain any base-sort terms apart from variables does not limit expressiveness. 
Every base-sort term $t \not\in V_\cR$ in the free part can safely be replaced by a fresh base-sort variable $x_t$ when an atomic constraint $x_t \equals t$ is added to the constraint part of the clause (a process known as \emph{purification} or \emph{abstraction}~\cite{Bachmair1994b,Kruglov2012}).

A \emph{(hierarchic) interpretation} is an algebra $\cA$ which interprets the base sort $\cR$ as $\cR^\cA = \Real$, assigns real values to all occurring base-sort Skolem constants and interprets all LA terms and constraints in the standard way. 
Moreover, $\cA$ comprises a nonempty domain $\cS^\cA$, assigns to each free-sort constant symbol $c$ in $\Omega_\cS$ a domain element $c^\cA \in \cS^\cA$, and interprets every sorted predicate symbol $P\!:\!\xi_1\times\ldots\times\xi_m$ in $\Pi$ by some set $P^\cA\subseteq \xi_1^\cA\times\ldots\times\xi_m^\cA$. 
Summing up, $\cA$ extends the standard model of linear arithmetic and adopts the standard approach to semantics of (sorted) first-order logics when interpreting the free part of clauses.

Given an  interpretation $\cA$ and a sort-respecting \emph{variable assignment} $\beta: V_\cR\cup V_\cS \to \cR^\cA \cup \cS^\cA$, we write $\cA(\beta)(s)$ to mean the \emph{value of the term $s$ under $\cA$ with respect to the variable assignment $\beta$}. 
The variables occurring in clauses are implicitly universally quantified. 
Therefore, given a clause $C$, we call $\cA$ a \emph{(hierarchic) model of $C$}, denoted $\cA\models C$, if and only if $\cA,\beta\models C$ holds for every variable assignment $\beta$.
For clause sets $N$, we write $\cA\models N$ if and only if $\cA \models C$ holds for every clause $C \in N$.
We call a clause $C$ (a clause set $N$) \emph{satisfiable} if and only if there exists a  model $\cA$ of $C$ (of $N$). 
Two clauses $C,D$ (clause sets $N,M$) are \emph{equisatisfiable} if and only if $C$ ($N$) is satisfiable whenever $D$ ($M$) is satisfiable and vice versa.

Given a \BsrSlr\ or \BsrBd\ clause $C$, we use the following notation: the set of all constant symbols occurring in $C$ is denoted by $\consts(C)$. 
The set $\bconsts(C)$ ($\fconsts(C)$) is the restriction of $\consts(C)$ to base-sort (free-sort) constant symbols. 
We denote the set of all variables occurring in a clause $C$ by $\vars(C)$.
The same notation is used for sets of clauses.

%%%%%%%%%%%%%%%%%%%%%%%%%%%%%%%%%%%%%%%%%%%%%%%%%%%%%%%%%%%%%%%%%%%%%%%%%%%%%%%
\begin{definition}[Normal form of \BsrSlr\ and \BsrBd\ clauses]\label{definition:BSRwithConstrNormalform}
A \BsrSlr\ or \BsrBd\ clause $\Lambda \,\|\, \Gamma \to \Delta$ is in \emph{normal form} if
(1) all non-ground atoms in $\Lambda$ have the form $x \rel c$, $x \rel y$, or $x - y \rel c$ 
	where $c$ is a rational constant or a Skolem constant, and
(2) every variable that occurs in $\Lambda$ also occurs in $\Gamma$ or in $\Delta$.
A \BsrSlr\ or \BsrBd\ clause set $N$ is in \emph{normal form} if all clauses in $N$ are in normal form and pairwise variable disjoint.
Moreover, we assume that $N$ contains at least one free-sort constant symbol.

For \BsrSlr\ clause sets, we pose the following additional requirement.
$N$ can be divided into two parts $\Ndef$ and $N'$ such that
(a) every clause in $\Ndef$ has the form $c \neq t \,\|\, \rightarrow \Box$ where $c$ is a Skolem constant and $t$ is some ground LA term, and
(b) any ground atom $s \rel t$ in any constraint part $\Lambda$ in any clause $\Lambda \,\|\, \Gamma \rightarrow \Delta$ in $N'$ is such that $s$ and $t$ are constants (Skolem or rational, respectively).
\end{definition}

%%%%%%%%%%%%%%%%%%%%%%%%%%%%%%%%%%%%%%%%%%%%%%%%%%%%%%%%%%%%%%%%%%%%%%%%%%%%%%%
For every \BsrSlr\ clause set $N$ there is an equisatisfiable \BsrSlr\ clause set $N'$ in normal form.
The same holds for \BsrBd\ clause sets.
Requirement~(2) can be established by any procedure for eliminating existentially quantified variables in LA constraints (see, e.g.,~\cite{Fietzke2012}).
Establishing the other requirements is straightforward.

For two sets $R, Q \subseteq \Real$ we write $R < Q$ if $r < q$ holds for all $r \in R$ and $q \in Q$.
Given a real $r$, we denote the \emph{integral part of $r$} by $\floor{r}$, i.e.\ $\floor{r}$ is the largest integer for which $\floor{r} \leq r$.
By $\Fr(r)$ we denote the \emph{fractional part of $r$}, i.e.\ $\Fr(r) := r - \lfloor r \rfloor$. 
Notice that $\Fr(r)$ is always nonnegative, e.g.\ $\Fr(3.71) = 0.71$, whereas $\Fr(-3.71) = 0.29$. 
Given any tuple $\bar{r}$ of reals, we write $\Fr(\bar{r})$ to mean the corresponding tuple of fractional parts, i.e.\ $\Fr\bigl(\<r_1, \ldots, r_\mu\>\bigr) := \bigl\< \Fr(r_1), \ldots, \Fr(r_\mu) \bigr\>$.
We use the notation $\lfloor \bar{r} \rfloor$ in a component-wise fashion as well.

We write $[k]$ to address the set $\{1, \ldots, k\}$ for any positive integer $k > 0$.
Finally, $\cP$ denotes the power set operator, i.e.\ for any set $S$, $\cP(S)$ denotes the set of all subsets of $S$.

%%%%%%%%%%%%%%%%%%%%%%%%%%%%%%%%%%%%%%%%%%%%%%%%%%%%%%%%%%%%%%%%%%%%%%%%%%%%%%%
%%%%%%%%%%%%%%%%%%%%%%%%%%%%%%%%%%%%%%%%%%%%%%%%%%%%%%%%%%%%%%%%%%%%%%%%%%%%%%%
%%%%%%%%%%%%%%%%%%%%%%%%%%%%%%%%%%%%%%%%%%%%%%%%%%%%%%%%%%%%%%%%%%%%%%%%%%%%%%%

\section{Basic tools from Ramsey theory}\label{section:RamseyTheory}
%%%%%%%%%%%%%%%%%%%%%%%%%%%%%%%%%%%%%%%%%%%%%%%%%%%%%%%%%%%%%%%%%%%%%%%%%%
In this section we establish two technical results based on methods usually applied in Ramsey theory.
We shall use these results later on to prove the existence of  models of a particular kind for \BsrSlr\ or \BsrBd\ clause sets that are finite and satisfiable.
These models meet certain uniformity conditions.
In order to construct them, we rely on the existence of certain finite subsets of $\Real$ that are used to construct prototypical tuples of reals.
These finite subsets, in turn, have to behave nicely as well, since tuples that are not distinguishable by \BsrSlr\ or \BsrBd\ constraints are required to have certain uniformity properties.

A tuple $\<r_1, \ldots, r_m\> \in \Real^m$ is called \emph{ascending} if $r_1 < \ldots < r_m$. 
A \emph{coloring} is a mapping $\chi : S \to \cC$ for some arbitrary set $S$ and some finite set $\cC$.
For the most basic result of this section (Lemma~\ref{lemma:RamseyBasic:One}), we consider an arbitrary coloring $\chi$ of $m$-tuples of real numbers and stipulate the existence of a finite subset $Q \subseteq \Real$ of a given cardinality $n$ such that all ascending $m$-tuples of elements from $Q$ are assigned the same color by $\chi$.

%%%%%%%%%%%%%%%%%%%%%%%%%%%%%%%%%%%%%%%%%%%%%%%%%%%%%%%%%%%%%%%%%%%%%%%%%%
%%%%%%%%%%%%%%%%%%%%%%%%%%%%%%%%%%%%%%%%%%%%%%%%%%%%%%%%%%%%%%%%%%%%%%%%%%
\begin{lemma}\label{lemma:RamseyBasic:One}
	Let $n, m > 0$ be positive integers. 
	Let $\chi: \Real^m \to \cC$ be some coloring.
	For every set $R \subseteq \Real$ of \emph{sufficient size} (either infinite or finite with \emph{sufficiently many elements}) there exists a subset $Q \subseteq R$ of cardinality $n$ such that all ascending tuples $\<r_1, \ldots, r_m\> \in Q^m$ are assigned the same color by $\chi$.
\end{lemma}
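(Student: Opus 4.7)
The plan is to reduce this directly to the classical finite Ramsey theorem, exploiting the natural bijection between ascending $m$-tuples from a subset of $\Real$ and the $m$-element subsets of that set.

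First I would observe that for a finite set $Q \subseteq \Real$, the ascending $m$-tuples in $Q^m$ are in one-to-one correspondence with the $m$-element subsets of $Q$: each such subset has a unique ordering that makes it an ascending tuple. Therefore one can define an auxiliary coloring $\chi'$ on the collection $\binom{R}{m}$ of $m$-element subsets of $R$ by setting $\chi'(\{r_1, \ldots, r_m\}) := \chi(r_{i_1}, \ldots, r_{i_m})$, where $r_{i_1} < \ldots < r_{i_m}$ is the ascending enumeration. The set of colors is still $\cC$, hence finite.

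Next I would invoke the finite Ramsey theorem: for every positive integers $n, m$ and every finite color set $\cC$, there exists an integer $N = N(n, m, |\cC|)$ such that any coloring of the $m$-element subsets of an $N$-element set admits a monochromatic subset of size $n$. Taking ``sufficient size'' to mean $|R| \geq N(n, m, |\cC|)$ in the finite case (and noting that an infinite $R$ trivially contains a finite subset of size $N$), I would apply this to $\chi'$ restricted to $\binom{R'}{m}$ for some finite $R' \subseteq R$ with $|R'| = N$. This produces a subset $Q \subseteq R' \subseteq R$ with $|Q| = n$ such that $\chi'$ is constant on $\binom{Q}{m}$.

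Finally I would translate back: because every ascending $m$-tuple in $Q^m$ corresponds to some element of $\binom{Q}{m}$ via the bijection above, $\chi$ assigns the same color to all such ascending tuples. There is no real obstacle here; the only mild subtleties are keeping the finite/infinite cases uniform (both handled by passing to a finite subset of the appropriate Ramsey size) and making sure the ``same color'' claim is stated for ascending tuples rather than for arbitrary $m$-tuples in $Q^m$, which is precisely what the bijection with $\binom{Q}{m}$ delivers.
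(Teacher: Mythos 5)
Your proposal is correct, but it takes a different route from the paper. You treat the statement as an immediate corollary of the classical finite Ramsey theorem, using the bijection between ascending $m$-tuples over a finite $Q \subseteq \Real$ and the $m$-element subsets of $Q$, and reading ``sufficient size'' as the Ramsey number $N(n,m,|\cC|)$; this is sound, gives the shortest argument, and even yields an explicit quantitative meaning for ``sufficiently many elements'' (the only loose end, the degenerate case $n < m$, is harmless since then $Q^m$ contains no ascending tuples and the condition is vacuous, as the paper also notes). The paper instead reproves the statement from scratch by adapting the textbook proof of Ramsey's theorem: an induction on $m$, picking successive least elements $s_{m-1}, s_m, \ldots$ and shrinking to a largest equivalence class at each step, then passing to an induced $(m-1)$-ary coloring. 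What the paper's self-contained construction buys is that the same machinery is reused almost verbatim for the subsequent lemmas (the product-style Lemma~\ref{lemma:RamseyBasic:Two} and Lemma~\ref{lemma:RamseyPermute:One}), whereas your black-box invocation buys brevity and a cleaner separation between the combinatorial core and its application; either proof establishes Lemma~\ref{lemma:RamseyBasic:One} as stated.
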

\begin{proof}[adaptation of the proof of Ramsey's Theorem on page 7 in \cite{Graham1990}]
	For $n < m$ the lemma is trivially satisfied, since in this case $Q^m$ cannot contain ascending tuples. 
	Hence, we assume $n \geq m$.
	In order to avoid technical difficulties when defining the sequence of elements $s_{m-1}, s_{m}, s_{m+1}, \ldots$ below, we assume for the rest of the proof that $R$ is finite but sufficiently large.
	This assumption does not pose a restriction, as we can always consider a sufficiently large subset of $R$.

	We proceed by induction on $m \geq 1$. The base case $m=1$ is easy, since $\chi$ can assign only finitely many colors to elements in $R$ and thus some color must be assigned at least $\bigl\lfloor \tfrac{|R|}{|\cC|} \bigr\rfloor$ times. 
	Hence, if $R$ contains at least $n |\cC|$ elements, we find a uniformly colored subset $Q$ of size $n$.
	Suppose $m>1$. At first, we pick the $m-2$ smallest reals $s_1 < \ldots < s_{m-2}$ from $R$ and set $S_{m-2} := R \setminus \{ s_1, \ldots, s_{m-2}\}$. 
	Thereafter, we simultaneously construct two \emph{sufficiently long but finite} sequences $s_{m-1}, s_m, s_{m+1}, \ldots$ and $S_{m-1}, S_m, S_{m+1}, \ldots$ as follows: \\
	Given $S_i$, we define $s_{i+1}$ to be the smallest real in $S_i$.\\
	Given $S_i$ and the element $s_{i+1}$, we define an equivalence relation $\sim_i$ on the set $S'_i := S_i \setminus \{ s_{i+1} \}$ so that $s \sim_i s'$ holds if and only if for every sequence of indices $j_1, \ldots, j_{m-1}$ with $1 \leq j_1 < \ldots < j_{m-1} \leq i+1$,  we have $\chi(s_{j_1}, \ldots, s_{j_{m-1}}, s) = \chi(s_{j_1}, \ldots, s_{j_{m-1}}, s')$. 
		This equivalence relation partitions $S'_i$ into at most $|\cC|^{{i+1}\choose{m-1}}$ equivalence classes. 
		We choose one such class with largest cardinality to be $S_{i+1}$.
		
	By construction of the sequence $s_1, s_2, s_3, \ldots$, we must have 
		$\chi(s_{j_1}, \ldots, s_{j_{m-1}},$ $s_{k}) = \chi(s_{j_1}, \ldots,$ $s_{j_{m-1}}, s_{k'})$
	for every sequence of indices $j_1 < \ldots < j_{m-1}$ and all indices $k, k' \geq j_{m-1}+1$. 
	Please note that this covers all ascending $m$-tuples in $\{s_1, s_2, s_3, \ldots\}^m$ starting with $s_{j_1}, \ldots, s_{j_m-1}$, i.e.\ they all share the same color.
	We now define a new coloring $\chi' : \{s_1, s_2, s_3, \ldots\}^{m-1} \to \cC$ so that 
		$\chi'(s_{j_1}, \ldots,$ $s_{j_{m-1}}) := \chi(s_{j_1}, \ldots, s_{j_{m-1}}, s_{j_{m-1}+1})$
	for every sequence of indices $j_1 < \ldots < j_{m-1}$ (in case of $j_{m-1}$ being the index of the last element in the sequence $s_1, s_2, s_3, \ldots$, $\chi'(s_{j_1}, \ldots, s_{j_{m-1}})$ shall be an arbitrary color from $\cC$). 
	By induction, there exists a subset $Q \subseteq \{s_1, s_2, s_3, \ldots\}$ of cardinality $n$, such that every ascending $(m-1)$-tuple $\bar{r} \in Q^{m-1}$ is colored the same by $\chi'$. 
	The definition of $\chi'$ entails that now all ascending $m$-tuples $\bar{r}' \in Q^m$ are colored the same by $\chi$. 
	Hence, $Q$ is the sought set.
\end{proof}

Based on Lemma~\ref{lemma:RamseyBasic:One}, one can derive similar results for more structured ways of coloring tuples of reals.
One such result is given in the next lemma.
Its proof can be found in the appendix.

%%%%%%%%%%%%%%%%%%%%%%%%%%%%%%%%%%%%%%%%%%%%%%%%%%%%%%%%%%%%%%%%%%%%%%%%%%
%%%%%%%%%%%%%%%%%%%%%%%%%%%%%%%%%%%%%%%%%%%%%%%%%%%%%%%%%%%%%%%%%%%%%%%%%%

\begin{lemma}\label{lemma:RamseyPermute:Two}
	Let $n, m, p > 0$ be positive integers, let $\kappa \geq 0$ be a nonnegative integer and let $\chi: \Real^m \to \cC$ be an arbitrary coloring.
	Let $R_1, \ldots, R_p$ be \emph{sufficiently large} but finite subsets of $\Real$.
	Let $q_1, \ldots, q_\kappa$ be fixed reals.
	Let $\varrho_1, \ldots, \varrho_L$ be some enumeration of all mappings $\varrho_j : [m] \to [p+\kappa]\times[m]$ for which $\varrho_{j}(i) = \<k,\ell\>$ with $k > p$ entails $\ell = 1$.
	There exist subsets $Q_1 \subseteq R_1, \ldots, Q_p \subseteq R_p$, each of cardinality $n$, such that for all ascending tuples 
		$\bar{r}_1, \bar{r}'_1 \in Q_1^m, \ldots, \bar{r}_p, \bar{r}'_p \in Q_p^m$
	and the reals $r_{\<p+1,1\>} := q_1, \dots, r_{\<p+\kappa,1\>} := q_\kappa$	
	and every index $j$, $1\leq j\leq L$, we have $\chi\bigl(r_{\varrho_j(1)}, \ldots, r_{\varrho_j(m)}\bigr) = \chi\bigl(r'_{\varrho_j(1)}, \ldots, r'_{\varrho_j(m)}\bigr)$.	
\end{lemma}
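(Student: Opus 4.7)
The plan is to prove the lemma by induction on $p$, applying Lemma~\ref{lemma:RamseyBasic:One} once per ``dimension'' of the product $Q_1 \times \ldots \times Q_p$. The underlying idea is the standard reduction from product Ramsey to single-dimensional Ramsey: at each stage, the current dimension is collapsed to a coloring with finitely many colors by precommitting to a sufficiently large pool of candidate elements in the as-yet-unprocessed dimensions.

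For the base case $p=1$, I would define an aggregate coloring $\chi' : \Real^m \to \cC^L$ whose $j$-th component applies $\chi$ to the substitution dictated by $\varrho_j$, using entries of $\bar r \in R_1^m$ together with the fixed reals $q_1, \ldots, q_\kappa$. A direct application of Lemma~\ref{lemma:RamseyBasic:One} to $R_1$ with $\chi'$ then yields $Q_1 \subseteq R_1$ of cardinality $n$ on which all ascending $m$-tuples are monochromatic simultaneously for every $\varrho_j$, which is exactly the desired conclusion.

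For the inductive step from $p-1$ to $p$, I would first choose, inside each $R_1, \ldots, R_{p-1}$, an arbitrary subset $Q_k^{(0)}$ of some cardinality $N_0$ large enough to meet the ``sufficient size'' requirement of the induction hypothesis. Next I would apply Lemma~\ref{lemma:RamseyBasic:One} to $R_p$ with the following meta-coloring: to each ascending $\bar r_p \in R_p^m$ assign the function $f_{\bar r_p}$ that, for every choice of ascending $\bar r_1 \in \bigl(Q_1^{(0)}\bigr)^m, \ldots, \bar r_{p-1} \in \bigl(Q_{p-1}^{(0)}\bigr)^m$, returns the tuple $\bigl(\chi(r_{\varrho_j(1)}, \ldots, r_{\varrho_j(m)})\bigr)_{j=1}^L$ with substitutions taken from $\bar r_1, \ldots, \bar r_p$ and the fixed reals. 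Since the domain of $f_{\bar r_p}$ is a finite collection of ascending tuples drawn from pre-chosen finite pools and the codomain $\cC^L$ is finite, this meta-coloring has finite range, and Lemma~\ref{lemma:RamseyBasic:One} delivers $Q_p \subseteq R_p$ of cardinality $n$ on which $\bar r_p \mapsto f_{\bar r_p}$ is constant. Fixing a representative $\bar r_p^* = (r_1^*, \ldots, r_m^*) \in Q_p^m$, I would absorb its entries into the list of fixed reals (extending its length from $\kappa$ to $\kappa + m$) and reindex the mappings as $\varrho'_j : [m] \to [(p-1) + (\kappa + m)] \times [m]$ by rerouting every position originally targeting $\bar r_p$ into the enlarged fixed-real block. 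The induction hypothesis applied to $Q_1^{(0)}, \ldots, Q_{p-1}^{(0)}$ with the $\varrho'_j$ and the extended constant list then produces $Q_k \subseteq Q_k^{(0)}$ of cardinality $n$ for each $k < p$ with the required invariance, and combining this with the uniformity of $f_{\bar r_p}$ on $Q_p^m$ delivers the full conclusion for $p$.

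The main obstacle will be bookkeeping. One must verify that the rerouted mappings $\varrho'_j$ still respect the structural constraint ``$k > p' \Rightarrow \ell = 1$'': this holds because every rerouted slot lands at a first coordinate strictly greater than $p-1$ paired, by design, with second coordinate $1$. One must also track how the ``sufficient size'' requirement propagates through the induction, which produces a tower-type bound in $n$, $m$, $p$, $L$, and $|\cC|$ because each Ramsey step is invoked with a color set whose size is exponential in the domain of the meta-coloring. Apart from these technicalities, the argument is essentially product Ramsey specialized to ascending tuples over the reals.
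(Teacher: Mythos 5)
Your proposal is correct, but it takes a different route from the paper. The paper factors the statement into three steps: a product Ramsey lemma for a single coloring of $\Real^{mp}$ (proved by induction on $p$, collapsing the last block to an equivalence-class coloring and fixing a representative ascending tuple), then a lemma for a \emph{single} mapping $\varrho$ obtained by flattening $\chi(r_{\varrho(1)},\ldots,r_{\varrho(m)})$ into a coloring of $mp$-tuples, and finally the full statement by iterating that lemma $L$ times, shrinking the sets $S_{\ell,0}\supseteq\ldots\supseteq S_{\ell,L}$ once per mapping $\varrho_j$. You instead run a single induction on $p$ that treats all $L$ mappings simultaneously via the aggregate coloring with values in $\cC^L$, and you handle the mixed indices by absorbing the fixed representative $\bar r_p^*$ into the block of constants and rerouting the mappings, so that the fixed reals $q_1,\ldots,q_\kappa$ and the constraint ``$k>p\Rightarrow\ell=1$'' are threaded through the induction rather than dealt with in a separate lemma. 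Your inductive step is essentially the same move as the paper's proof of its product Ramsey lemma (meta-color the last dimension by its behaviour against precommitted pools, extract $Q_p$, fix a representative, recurse), so the mathematical core coincides; what your version buys is a self-contained argument that avoids both the separate single-$\varrho$ lemma and the $L$-fold sequential shrinking, at the price of heavier index bookkeeping (the reindexed $\varrho'_j$, the enlarged constant list, and a sufficiency threshold $N_0$ for the pools that must be fixed before the meta-coloring step). Two small points to make explicit: the representative $\bar r_p^*$ must be chosen \emph{ascending}, which requires the harmless standing assumption $n\geq m$ (the paper makes the same assumption, the case $n<m$ being vacuous), and the meta-coloring on $R_p^m$ should be extended arbitrarily off the ascending tuples so that Lemma~\ref{lemma:RamseyBasic:One} literally applies; neither affects the validity of the argument.
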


%%%%%%%%%%%%%%%%%%%%%%%%%%%%%%%%%%%%%%%%%%%%%%%%%%%%%%%%%%%%%%%%%%%%%%%%%%%%%%%
%%%%%%%%%%%%%%%%%%%%%%%%%%%%%%%%%%%%%%%%%%%%%%%%%%%%%%%%%%%%%%%%%%%%%%%%%%%%%%%
%%%%%%%%%%%%%%%%%%%%%%%%%%%%%%%%%%%%%%%%%%%%%%%%%%%%%%%%%%%%%%%%%%%%%%%%%%%%%%%

\section{Decidability of satisfiability for \BsrSlr\ clause sets}\label{section:DecidabilityBsrSla}
%%%%%%%%%%%%%%%%%%%%%%%%%%%%%%%%%%%%%%%%%%%%%%%%%%%%%%%%%%%%%%%%%%%%%%%%%%

For the rest of this section we fix two positive integers $m, m' > 0$ and some finite \BsrSlr\ clause set $N$ in normal form.
For the sake of simplicity, we assume that all uninterpreted predicate symbols $P$ occurring in $N$ have the sort $P : \cS^{m'} \times \cR^{m}$.
This assumption does not limit expressiveness, as the arity of a predicate symbol $P$ can easily be increased in an (un)satisfiability-preserving way by padding the occurring atoms with additional arguments.
For instance, every occurrence of atoms $P(t_1, \ldots, t_m)$ can be replaced with $P(t_1, \ldots, t_m, v, \ldots, v)$ for some fresh variable $v$ that is added sufficiently often as argument.

%%%%%%%%%%%%%%%%%%%%%%%%%%%%%%%%%%%%%%%%%%%%%%%%%%%%%%%%%%%%%%%%%%%%%%%%%%
Given the \BsrSlr\ clause set $N$, every  interpretation $\cA$ induces a partition of $\Real$ into finitely many intervals: the interpretations of all the rational and Skolem constants $c$ occurring in $N$ yield point intervals that are interspersed with and enclosed by open intervals.
\begin{definition}[$\cA$-induced partition of $\Real$]
	Let $\cA$ be an  interpretation and let $r_1, \ldots, r_k$ be all the values in the set $\{c^\cA \mid c \in \bconsts(N)\}$ in ascending order.
	By $\JA$ we denote the following partition of $\Real$: \\
		\centerline{$\JA := \bigl\{ (-\infty, r_1), [r_1, r_1], (r_1, r_2), [r_2, r_2], \ldots, (r_{k-1}, r_k), [r_k, r_k], (r_k, +\infty) \bigr\}$.}
\end{definition}

%%%%%%%%%%%%%%%%%%%%%%%%%%%%%%%%%%%%%%%%%%%%%%%%%%%%%%%%%%%%%%%%%%%%%%%%%%
The idea of the following equivalence is that equivalent tuples are indistinguishable by the constraints that we allow in the \BsrSlr\ clause set $N$.
\begin{definition}[$\JA$-equivalence, $\sim_\JA$]
	Let $\cA$ be an  interpretation and let $k$ be a positive integer.  
	We call two $k$-tuples $\br, \bq \in \Real^{k}$ \emph{$\JA$-equivalent} if \\
	(i) for every $J \in \JA$ and every $i$, $1 \leq i \leq k$, we have $r_i \in J$ if and only if $q_i \in J$ and \\
	(ii) for all $i,j$, $1 \leq i,j \leq k$ we have $r_i < r_j$ if and only if $q_i <  q_j$.
	
	\smallskip\noindent
	The induced equivalence relation on tuples of positive length is denoted by $\sim_\JA$.
\end{definition}
For every positive $k$ the relation $\sim_\JA$ induces only finitely many equivalence classes on the set of all $k$-tuples over the reals.
%%%%%%%%%%%%%%%%%%%%%%%%%%%%%%%%%%%%%%%%%%%%%%%%%%%%%%%%%%%%%%%%%%%%%%%%%%%
We intend to show that, if $N$ is satisfiable, then there is some model $\cA$ for $N$ which does not distinguish between different $\JA$-equivalent tuples.
First, we need some notion that reflects how the  interpretation $\cA$ treats a given tuple $\br \in \Real^m$.
This role will be taken by the coloring $\chi_\cA$, which maps $\br$ to a set of expressions of the form $P\ba$, where $P$ is some predicate symbol occurring in $N$ and $\ba$ is an $m'$-tuple of domain elements from $\cS^\cA$.
The presence of $P\ba$ in the set $\chi_\cA(\br)$ indicates that $\cA$ interprets $P$ in such a way that $P^\cA$ contains the pair $\<\ba, \br\>$.
In this sense, $\chi_\cA(\br)$ comprises all the relevant information that $\cA$ contains regarding the tuple $\br$.

%%%%%%%%%%%%%%%%%%%%%%%%%%%%%%%%%%%%%%%%%%%%%%%%%%%%%%%%%%%%%%%%%%%%%%%%%%
\begin{definition}[$\cA$-coloring $\chi_\cA$]\label{definition:AColoring}
	Given an  interpretation $\cA$, let $\hcS := \{ \fa \in \cS^\cA \mid \text{$\fa = c^\cA$ for some $c \in$}$ $\fconsts(N)\}$
	be the set of all domain elements assigned to free-sort constant symbols by $\cA$.
	The \emph{$\cA$-coloring of $\Real^m$} is the mapping \\
		\centerline{$\chi_\cA : \Real^m \to \cP \{P \ba \mid \text{$\ba \in \hcS^{m'}$ and $P$ is an uninterpreted predicate symbol in $N$}\}$}
	defined such that for every $\br \in \Real^m$ we have $P \ba \in \chi_\cA(\br)$ if and only if $\<\ba, \br\> \in P^\cA$.
\end{definition}
Having the coloring $\chi_\cA$ at hand, it is easy to formulate a uniformity property for a given  interpretation $\cA$.
Two tuples $\br, \br' \in \Real^m$ are treated \emph{uniformly} by $\cA$, if the colors $\chi_\cA(\br)$ and $\chi_\cA(\br')$ agree.
Put differently, $\cA$ does not distinguish $\br$ from $\br'$.

%%%%%%%%%%%%%%%%%%%%%%%%%%%%%%%%%%%%%%%%%%%%%%%%%%%%%%%%%%%%%%%%%%%%%%%%%%
\begin{definition}[$\JA$-uniform interpretation]
	An  interpretation $\cA$ is \emph{$\JA$-uni\-form} if $\chi_\cA$ colors each and every $\sim_\JA$-equivalence class uniformly, i.e.\ for all $\sim_\JA$-equivalent tuples $\br, \br'$ we have $\chi_\cA(\br) = \chi_\cA(\br')$.
\end{definition}

%%%%%%%%%%%%%%%%%%%%%%%%%%%%%%%%%%%%%%%%%%%%%%%%%%%%%%%%%%%%%%%%%%%%%%%%%%
We next show that there exists a $\JB$-uniform model $\cB$ of $N$, if $N$ is satisfiable.
Since such a model does not distinguish between $\JB$-equivalent $m$-tuples, and as there are only finitely many equivalence classes induced by $\sim_\JB$, only a finite amount of information is required to describe $\cB$.
This insight will give rise to a decision procedure that nondeterministically guesses how each and every equivalence class shall be treated by the uniform model.

Given some model $\cA$ of $N$, the following lemma assumes the existence of certain finite sets $Q_i$ with a fixed cardinality which are subsets of the open intervals in $\JA$.
All $\JA$-equivalent $m$-tuples that can be constructed from the reals belonging to the $Q_i$ are required to be colored identically by $\chi_\cA$.
The existence of the $Q_i$ is the subject of Lemma~\ref{lemma:IndistuingishableReals}.

%%%%%%%%%%%%%%%%%%%%%%%%%%%%%%%%%%%%%%%%%%%%%%%%%%%%%%%%%%%%%%%%%%%%%%%%%%
\begin{lemma}\label{lemma:ExistenceOfUniformModeslForBsrSla}
	Let $\lambda$ be the maximal number of distinct base-sort variables in any single clause in $N$.
	In case of $\lambda < m$, we set $\lambda := m$. 
	Let $\cA$ be a  model of $N$.
	Let $J_0, \ldots, J_\kappa$ be an enumeration of all open intervals in $\JA$ sorted in ascending order, i.e.\ $J_0 < \ldots < J_\kappa$. 
	Suppose we are given a collection of finite sets $Q_0, \ldots, Q_\kappa$ possessing the following properties:\\
		(i) $Q_i \subseteq J_i$ and $|Q_i| = \lambda$ for every $i$, $0 \leq i \leq \kappa$.\\
		(ii) Let $Q := \bigcup_i Q_i \cup \{c^\cA \mid c \in \bconsts(N) \}$. 
				For all $\JA$-equivalent $m$-tuples $\bq, \bq' \in Q^m$ we have $\chi_\cA(\bq) = \chi_\cA(\bq')$.
				
	\smallskip\noindent		
	Then we can construct a  model $\cB$ of $N$ that is $\JB$-uniform and that interprets the free sort $\cS$ as a finite set.
\end{lemma}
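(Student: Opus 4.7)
The plan is to build $\cB$ directly from $\cA$ using the given finite sets $Q_0, \ldots, Q_\kappa$. I would interpret the base sort, all base-sort Skolem constants, and the built-in equality on $\cS$ in $\cB$ exactly as in $\cA$; this guarantees $\JB = \JA$. I take the free-sort domain to be the finite nonempty set $\hcS$ with every $c \in \fconsts(N)$ interpreted as $c^\cA$. Each uninterpreted predicate $P^\cB$ is then defined via a system of representatives: for every $\sim_\JA$-equivalence class $[\br]$ of $m$-tuples over $\Real$, I fix one representative $\mathrm{rep}(\br) \in Q^m$ from that class, and declare $\<\ba, \br\> \in P^\cB$ iff $P\ba \in \chi_\cA\bigl(\mathrm{rep}(\br)\bigr)$ for every $\ba \in \hcS^{m'}$. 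Such a representative always exists, because the point intervals of $\JA$ are precisely the singletons $\{c^\cA\} \subseteq Q$, while each open interval $J_i$ contributes $|Q_i| = \lambda \geq m$ elements, enough to realize any prescribed distribution of components to intervals together with any prescribed ordering pattern among the components of $\br$.

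Since $\chi_\cB(\br)$ is by construction determined solely by $\mathrm{rep}(\br)$, and $\mathrm{rep}$ is constant on each class, $\cB$ is $\JB$-uniform, and $\cS^\cB = \hcS$ is finite. It remains to show $\cB \models N$. Fix a clause $C = \Lambda \,\|\, \Gamma \to \Delta$ of $N$ and any sort-respecting assignment $\beta$ for $\cB$. I would construct an accompanying assignment $\beta^*$ for $\cA$ such that every atom of $C$ has the same truth value under $(\cB, \beta)$ as under $(\cA, \beta^*)$. On free-sort variables set $\beta^* := \beta$, which is legitimate because $\hcS \subseteq \cS^\cA$. On the base-sort variables $x_1, \ldots, x_k$ of $C$ (where $k \leq \lambda$), set $\beta^*(x_j) := c^\cA$ whenever $\beta(x_j) = c^\cA$ for some $c \in \bconsts(N)$; otherwise $\beta(x_j)$ lies in some open interval $J_i$, and I pick $\beta^*(x_j) \in Q_i$, choosing pairwise-distinct elements from $Q_i$ for variables whose $\beta$-values land in the same $J_i$ while preserving their ascending order. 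Because $|Q_i| = \lambda \geq k$, this is always possible, and by construction the tuples $\<\beta(x_1), \ldots, \beta(x_k)\>$ and $\<\beta^*(x_1), \ldots, \beta^*(x_k)\>$ are $\sim_\JA$-equivalent.

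The main obstacle, and the step where hypothesis (ii) is indispensable, is verifying atom-wise agreement between $(\cB, \beta)$ and $(\cA, \beta^*)$. Non-ground LA atoms of the shapes $x \rel c$ and $x \rel y$ agree because their truth values depend only on interval-membership of each variable and on the pairwise ordering among variables, both preserved by $\sim_\JA$-equivalence of the value tuples. Ground LA atoms in $\Lambda$ agree because $\cB$ inherits $\cA$'s values on all base-sort constants. Free-sort equations $s \approx s'$ agree because $\beta = \beta^*$ on the relevant arguments. For a non-equational predicate atom $P(\ba, \bar t)$ of $\Gamma$ or $\Delta$, write $\br := \beta(\bar t) \in \Real^m$ and $\br^* := \beta^*(\bar t) \in Q^m$; then $\br \sim_\JA \br^*$, and both $\mathrm{rep}(\br)$ and $\br^*$ lie in $Q^m$ and are $\sim_\JA$-equivalent. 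Hypothesis (ii) therefore yields $\chi_\cA\bigl(\mathrm{rep}(\br)\bigr) = \chi_\cA(\br^*)$, so $\<\ba, \br\> \in P^\cB$ iff $P\ba \in \chi_\cA(\br^*)$ iff $\<\ba, \br^*\> \in P^\cA$, giving the desired agreement. Since $\cA \models N$ entails $\cA, \beta^* \models C$, we conclude $\cB, \beta \models C$; as $C$ and $\beta$ were arbitrary, $\cB \models N$, completing the construction.
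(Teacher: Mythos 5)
Your construction is essentially the paper's own proof: the same model $\cB$ (finite free-sort domain $\hcS$, constants interpreted as in $\cA$, predicates defined from $\sim_\JA$-representatives in $Q^m$), the same uniformity argument via hypothesis~(ii), and the same transfer of satisfaction through an auxiliary assignment (your $\beta^*$ playing the role of the paper's $\hbeta_C$) that is $\sim_\JA$-equivalent to $\beta$ on the clause's base-sort variables. One slip to fix: when two base-sort variables of $C$ receive the \emph{same} value under $\beta$ inside an open interval $J_i$, they must be mapped to the \emph{same} element of $Q_i$, not to pairwise-distinct ones, since otherwise condition~(ii) of $\JA$-equivalence ($r_i < r_j$ iff $q_i < q_j$) fails and atoms $x \rel y$ with $\rel \in \{\equals, \not\equals\}$ need not transfer between $(\cB,\beta)$ and $(\cA,\beta^*)$. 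With that correction---which is exactly how the paper's Claim~I handles repeated components, by enumerating only the \emph{distinct} values falling into each interval---your argument goes through.
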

\begin{proof}[Proof sketch]~\\
	\underline{Claim I:} Let $\mu$ be a positive integer with $\mu \leq \lambda$.  
			Every $\sim_\JA$-equivalence class over $\Real^\mu$ contains some representative lying in $Q^\mu$.
			\strut\hfill$\Diamond$
			
	\smallskip
	Let $\hcS$ denote the set $\{ \fa \in \cS^\cA \mid \text{$\fa = c^\cA$ for some $c \in \fconsts(N)$} \}$.
	We construct the  interpretation $\cB$ as follows:
		$\cS^\cB := \hcS$;
		$c^\cB := c^\cA$ for every constant symbol $c$;
		for every uninterpreted predicate symbol $P$ and for all tuples $\ba \in \hcS^{m'}$ and $\br \in \Real^m$ we pick some tuple $\bq \in Q^m$ with $\bq \sim_\JA \br$, and we define $P^\cB$ so that 
			$\<\ba, \br\> \in P^\cB$ if and only if $\<\ba,\bq\> \in P^\cA$.
	By construction, $\cB$ is $\JB$-uniform.
	
	It remains to show $\cB\models N$. 
	Consider any clause $C = \Lambda \;\|\; \Gamma \to \Delta$ in $N$ and let $\beta$ be any variable assignment ranging over $\cS^\cB \cup \Real$. 
	Starting from $\beta$, we derive a special variable assignment $\hbeta_C$ as follows. 
	Let $x_1, \ldots, x_{\ell}$ be all base-sort variables in $C$.
	By Claim~I, there is some tuple $\<q_1, \ldots, q_{\ell}\> \in Q^{\ell}$ such that $\<q_1, \ldots, q_{\ell}\> \sim_\JA \bigl\< \beta(x_1), \ldots, \beta(x_{\ell}) \bigr\>$.
	We set $\hbeta_C(x_i) := q_i$ for every $x_i$.
	For all other base-sort variables, $\hbeta_C$ can be defined arbitrarily.
	For every free-sort variable $u$ we set $\hbeta_C(u) := \beta(u)$.

	As $\cA$ is a  model of $N$, we get $\cA,\hbeta_C \models C$. By case distinction on why $\cA,\hbeta_C \models C$ holds, one can infer $\cB,\beta\models C$.
	Consequently, $\cB\models N$.	
\end{proof}

%%%%%%%%%%%%%%%%%%%%%%%%%%%%%%%%%%%%%%%%%%%%%%%%%%%%%%%%%%%%%%%%%%%%%%%%%%
In order to show that uniform models always exist satisfiable clause sets $N$, we still need to prove the existence of the sets $Q_i$ mentioned in Lemma~\ref{lemma:ExistenceOfUniformModeslForBsrSla}.
We use Lemma~\ref{lemma:RamseyPermute:Two} to show this.

\begin{lemma}\label{lemma:IndistuingishableReals}
	Let $\cA$ be an interpretation.
	Moreover, let $q_1, \ldots, q_\kappa$ be all reals in $\{c^\cA \mid c \in \bconsts(N)\}$ in ascending order and
	let $J_1, \ldots, J_{\kappa+1}$ be all open intervals in $\JA$ in ascending order, i.e.\ $J_1 < \{q_1\} < J_2 < \{q_2\} < \ldots < J_\kappa < \{q_\kappa\} < J_{\kappa+1}$. 
	Let $\lambda$ be a positive integer.
	There is a collection of finite sets $Q_1, \ldots, Q_{\kappa+1}$ such that the following requirements are met.\\
	(i) For every $i$, $1 \leq i \leq \kappa+1$, we have $Q_i \subseteq J_i$ and $|Q_i| = \lambda$.\\
	(ii) Let $Q := \bigcup_i Q_i \cup \{q_1, \ldots, q_\kappa\}$. For all $\JA$-equivalent $m$-tuples $\bs, \bs' \in Q^m$ we have $\chi_\cA(\bs) = \chi_\cA(\bs')$.
\end{lemma}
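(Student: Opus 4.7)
The plan is to apply Lemma~\ref{lemma:RamseyPermute:Two} with $p := \kappa+1$, $n := \max(\lambda,m)$, coloring $\chi := \chi_\cA$ (whose codomain is finite since $N$, and therefore $\hcS$, is finite), fixed reals $q_1,\ldots,q_\kappa$, and sufficiently large finite subsets $R_i \subseteq J_i$, which exist because each open interval $J_i$ is infinite. The lemma will yield subsets $Q_i \subseteq J_i$ of cardinality $n$; since property~(ii) is preserved under taking subsets, I can afterwards shrink each $Q_i$ down to cardinality $\lambda$ to meet requirement~(i), while still being able to assume $|Q_i| \geq m$ throughout the argument for~(ii).

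For requirement~(ii), I would fix arbitrary $\JA$-equivalent $\bs,\bs' \in Q^m$ and exhibit one admissible mapping $\varrho \in \{\varrho_1, \ldots, \varrho_L\}$ together with ascending tuples $\br_\ell, \br'_\ell \in Q_\ell^m$ (for $\ell \in [p]$) such that $s_i = r_{\varrho(i)}$ and $s'_i = r'_{\varrho(i)}$ for every $i$, in the notation of Lemma~\ref{lemma:RamseyPermute:Two}. For each position $i$ where $s_i = q_k$, $\JA$-equivalence forces $s'_i = q_k$, and I would set $\varrho(i) := \<p+k,1\>$. For a position $i$ with $s_i \in Q_\ell$, $\JA$-equivalence gives $s'_i \in Q_\ell$ for the same $\ell \in [p]$ (because $Q \cap J_\ell = Q_\ell$). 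Let $j_\ell$ denote the number of distinct components of $\bs$ lying in $Q_\ell$; $\JA$-equivalence entails that $\bs'$ has the same number of distinct components in $Q_\ell$, in the same positions, and with matching ascending ranks. I would then fix a strictly increasing $\pi_\ell : [j_\ell] \to [m]$, define $\varrho(i) := \<\ell, \pi_\ell(k)\>$ whenever $s_i$ has rank $k$ among these distinct values, and extend the $j_\ell$ prescribed values at positions $\pi_\ell(1) < \ldots < \pi_\ell(j_\ell)$ to an ascending $m$-tuple $\br_\ell \in Q_\ell^m$ by filling the remaining positions with arbitrary elements of $Q_\ell$ in the unique compatible increasing order; this is possible because $|Q_\ell| \geq m$. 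The analogous construction applied to $\bs'$ yields $\br'_\ell$.

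Applying Lemma~\ref{lemma:RamseyPermute:Two} to these tuples and the chosen $\varrho$ then directly yields $\chi_\cA(\bs) = \chi_\cA(\bs')$, which is precisely requirement~(ii). The main obstacle I foresee is synthesising a \emph{single} mapping $\varrho$ that describes both $\bs$ and $\bs'$ simultaneously, since this requires careful bookkeeping of coordinates, interval memberships, and ranks across the two tuples. The construction succeeds because $\JA$-equivalence is defined precisely to preserve the symbolic data (interval membership of each coordinate together with the strict order between coordinates) that $\varrho$ is built to record.
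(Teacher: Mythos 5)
Your proposal is correct and follows essentially the same route as the paper: it applies Lemma~\ref{lemma:RamseyPermute:Two} with $p := \kappa+1$, $\chi := \chi_\cA$, the fixed reals $q_1,\ldots,q_\kappa$ and (finite subsets of) the open intervals $J_i$, and then establishes requirement~(ii) by encoding the common $\sim_\JA$-class of $\bs$ and $\bs'$ as a single admissible mapping $\varrho$ together with ascending representing tuples, which is precisely the content of the paper's auxiliary Lemma~\ref{lemma:Correspondence}. Your choices $n := \max(\lambda,m)$ (with subsequent shrinking) and of finite $R_i \subseteq J_i$ are minor refinements of the paper's $n := \lambda$, $R_i := J_i$, and the one step you gloss---that the prescribed values can always be completed to ascending $m$-tuples inside $Q_\ell$ for one and the same $\varrho$---is treated at the same level of detail in the paper's own proof.
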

\begin{proof}[Proof sketch]
	Let the sets $Q_1, \ldots, Q_{\kappa+1}$ be the $Q_1, \ldots, Q_p$ that we obtain by virtue of Lemma \ref{lemma:RamseyPermute:Two} when we set $n := \lambda$, $p := \kappa+1$, $\chi := \chi_\cA$, $R_1 := J_1, \ldots, R_{\kappa+1} := J_{\kappa+1}$.
	Requirement \ref{enum:lemmaIndistuingishableReals:One} is obviously satisfied for $Q_1, \ldots, Q_{\kappa+1}$.
	
	One can show that for every equivalence class $S \in \Real^m/_{\sim_\JA}$ there is some mapping $\varrho : [m] \to [2\kappa+1] \times [m]$ such that
	\begin{enumerate}[label=(\arabic{*}), ref=(\arabic{*})]
		\item\label{enum:lemmaCorrespondenceMainText:One} whenever $\varrho(i) = \<k,\ell\>$ with $k > \kappa + 1$ then $\ell = 1$, and
		\item\label{enum:lemmaCorrespondenceMainText:Two} for all ascending tuples \\
				$\br_1 = \<r_{\<1,1\>}, \ldots, r_{\<1,m\>}\> \in J_1^m; 
					\ldots; 
					\br_{\kappa+1} = \<r_{\<\kappa+1,1\>}, \ldots, r_{\<\kappa+1,m\>}\> \in J_{\kappa+1}^m$; \\
				$\br_{\kappa+2} = \<r_{\<\kappa+2,1\>}\> = \<q_1\>;
					\ldots;
					\br_{2\kappa+1} = \<r_{\<2\kappa+1,1\>}\> = \<q_\kappa\>$\\
			we have $\<r_{\varrho(1)}, \ldots, r_{\varrho(m)}\> \in S$, and
			
		\item\label{enum:lemmaCorrespondenceMainText:Three} for every tuple $\<s_1, \ldots, s_m\> \in S$ there exist ascending tuples $\br_1, \ldots, \br_{2\kappa+1}$ defined as in \ref{enum:lemmaCorrespondenceMainText:Two} such that $\<s_1, \ldots, s_m\> = \<r_{\varrho(1)}, \ldots, r_{\varrho(m)}\>$.
	\end{enumerate}	
	Consider any $\bs, \bs' \in S$. 
	By \ref{enum:lemmaCorrespondenceMainText:Two}, $\bs$ can be written in the form $\<r_{\varrho(1)}, \ldots, r_{\varrho(m)}\>$ for appropriate values $r_{\<k,\ell\>}$ and $\bs'$ can be represented in the form $\<r'_{\varrho(1)}, \ldots, r'_{\varrho(m)}\>$ for appropriate $r'_{\<k,\ell\>}$. 
	Lemma~\ref{lemma:RamseyPermute:Two} entails 
		$\chi_\cA(\bs) = \chi_\cA(\<r_{\varrho(1)}, \ldots, r_{\varrho(m)}\>) = \chi_\cA(\<r'_{\varrho(1)}, \ldots, r'_{\varrho(m)}\>) = \chi_\cA(\bs')$.
\end{proof}

%%%%%%%%%%%%%%%%%%%%%%%%%%%%%%%%%%%%%%%%%%%%%%%%%%%%%%%%%%%%%%%%%%%%%%%%%%
%%%%%%%%%%%%%%%%%%%%%%%%%%%%%%%%%%%%%%%%%%%%%%%%%%%%%%%%%%%%%%%%%%%%%%%%%%
Lemmas~\ref{lemma:ExistenceOfUniformModeslForBsrSla} and~\ref{lemma:IndistuingishableReals} together entail the existence of some $\JA$-uniform model $\cA \models N$ with a finite free-sort domain $\cS^\cA$, if $N$ is satisfiable.
\begin{corollary}\label{corollary:ExistenceOfUniformModeslForBsrSla}
	If $N$ has a  model, then it has a  model $\cA$ that is $\JA$-uniform and that interprets the sort $\cS$ as some finite set.
\end{corollary}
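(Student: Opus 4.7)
The plan is to derive this corollary as an immediate composition of the two preceding lemmas, with only some minor bookkeeping about parameters. First I would assume that $N$ is satisfiable and fix any model $\cA$ of $N$. Let $\lambda$ be chosen as in Lemma~\ref{lemma:ExistenceOfUniformModeslForBsrSla}, i.e.\ the maximal number of distinct base-sort variables in any clause of $N$, replaced by $m$ in case it happens to be smaller than $m$. This $\lambda$ is a positive integer depending only on the syntactic shape of $N$ and on $m$.

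Next I would invoke Lemma~\ref{lemma:IndistuingishableReals} with this $\cA$ and this $\lambda$. Up to the (harmless) shift in indexing between ``$J_0, \ldots, J_\kappa$'' (used in Lemma~\ref{lemma:ExistenceOfUniformModeslForBsrSla}) and ``$J_1, \ldots, J_{\kappa+1}$'' (used in Lemma~\ref{lemma:IndistuingishableReals}), both enumerations list exactly the open intervals of $\JA$ in ascending order, interspersed with the point intervals $\{q_1\}, \ldots, \{q_\kappa\}$ coming from the reals $\{c^\cA \mid c \in \bconsts(N)\}$. Lemma~\ref{lemma:IndistuingishableReals} then produces finite sets $Q_1, \ldots, Q_{\kappa+1}$ such that (i) $Q_i \subseteq J_i$ and $|Q_i| = \lambda$ for every $i$, and (ii) setting $Q := \bigcup_i Q_i \cup \{q_1, \ldots, q_\kappa\}$, all $\JA$-equivalent $m$-tuples from $Q^m$ receive the same color under $\chi_\cA$.

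These are precisely the two hypotheses required by Lemma~\ref{lemma:ExistenceOfUniformModeslForBsrSla}. I would therefore apply that lemma to $\cA$ and the sets $Q_1, \ldots, Q_{\kappa+1}$ (re-indexed as $Q_0, \ldots, Q_\kappa$), yielding an interpretation $\cB$ which is a model of $N$, which is $\JB$-uniform, and which interprets $\cS$ as a finite set. Since $\cB$ agrees with $\cA$ on all rational and Skolem constants, we in fact have $\JB = \JA$, so $\cB$ is $\JA$-uniform as stated in the corollary.

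There is essentially no obstacle here beyond the indexing alignment just noted, because both lemmas were designed precisely to be composed in this order: Lemma~\ref{lemma:IndistuingishableReals} manufactures the combinatorial input that Lemma~\ref{lemma:ExistenceOfUniformModeslForBsrSla} needs, and the only requirement on the cardinality parameter $\lambda$ is the obvious one ($\lambda \geq m$ and $\lambda$ at least the number of base-sort variables per clause), which we have enforced by construction.
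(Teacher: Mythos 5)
Your proposal is correct and matches the paper's intent exactly: the paper gives no separate proof of Corollary~\ref{corollary:ExistenceOfUniformModeslForBsrSla}, stating only that Lemmas~\ref{lemma:ExistenceOfUniformModeslForBsrSla} and~\ref{lemma:IndistuingishableReals} together entail it, which is precisely the composition you carry out (including the correct choice of $\lambda$ and the harmless index shift). Your closing remark that $\cB$ agrees with $\cA$ on all base-sort constants, so $\JB = \JA$, is also consistent with the appendix version of Lemma~\ref{lemma:ExistenceOfUniformModeslForBsrSla}, which states this explicitly.
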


%%%%%%%%%%%%%%%%%%%%%%%%%%%%%%%%%%%%%%%%%%%%%%%%%%%%%%%%%%%%%%%%%%%%%%%%%%
Given any  interpretation $\cA$, the partition $\JA$ of the reals is determined by the rational constants in $N$ and by the values that $\cA$ assigns to the base-sort Skolem constants in $N$.
Let $d_1, \ldots, d_\lambda$ be all the base-sort Skolem constants in $N$.
If we are given some mapping $\gamma : \{d_1, \ldots, d_\lambda\} \to \Real$, then $\gamma$ induces a partition $\Jgamma$, just as $\cA$ induces $\JA$.
We can easily verify whether $N$ has a  model $\cB$ that is \emph{compatible} with $\gamma$ (i.e.\ $\cB$ assigns the same values to $d_1, \ldots, d_\lambda$) and that is $\JB$-uniform.
Due to the uniformity requirement, there is only a finite number of candidate interpretations that have to be checked.

Consequently, in order to show decidability of the satisfiability problem for finite \BsrSlr\ clause sets in normal form, the only question that remains to be answered is whether it is sufficient to consider a finite number of assignments $\gamma$ of real values to the Skolem constants in $N$. 
Recall that since $N$ is in normal form, we can divide $N$ into two disjoint parts $N_\text{def}$ and $N'$ such that all ground LA terms occurring in $N'$ are either (Skolem) constants or rationals.
Moreover, every clause in $\Ndef$ constitutes a definition $c = t$ of some Skolem constant $c$.
As far as the LA constraints occurring in $N'$ are concerned, the most relevant information regarding the interpretation of Skolem constants is their ordering relative to one another and relative to the occurring rationals.
This means, the clauses in $N'$ cannot distinguish two assignments $\gamma, \gamma'$ if \\
(a) for every Skolem constant $d_i$ and every rational $r$ occurring in $N'$ we have
		(a.1) $\gamma(d_i) \leq r$ if and only if $\gamma'(d_i) \leq r$, and
		(a.2) $\gamma(d_i) \geq r$ if and only if $\gamma'(d_i) \geq r$, and \\
(b) for all $d_i, d_j$ we have that
		$\gamma(d_i) \leq \gamma(d_j)$ if and only if $\gamma'(d_i) \leq \gamma'(d_j)$.
		
This observation leads to the following nondeterministic decision procedure for finite \BsrSlr\ clause sets in normal form:
\begin{enumerate}[label=(\arabic{*}), ref=\arabic{*}]
	\item\label{enum:DecisionProcedure:I} Nondeterministically fix a total preorder $\preceq$ (reflexive and transitive) on the set of all base-sort Skolem constants and rational constants occurring in $N'$.
		
		Define a clause set $N_\preceq$ that enforces $\preceq$ for base-sort Skolem constants, i.e.
			$N_\preceq := \bigl\{ c > c' \,\| \rightarrow \Box \bigm| c \preceq c' \text{, either $c$ or $c'$ or both are Skolem constants} \bigr\}$.

	\item\label{enum:DecisionProcedure:II} Check whether there is some mapping $\gamma : \{d_1, \ldots, d_\lambda\} \to \Real$ such that $\gamma$ is a solution for the clauses in $\Ndef \cup N_{\preceq}$. (This step relies on the fact that linear arithmetic over existentially quantified variables is decidable.)
	\item\label{enum:DecisionProcedure:III} If such an assignment $\gamma$ exists, define an  interpretation $\cB$ as follows.
		\begin{enumerate}[label=(\ref{enum:DecisionProcedure:III}.\arabic{*}), ref=(\ref{enum:DecisionProcedure:III}.\arabic{*})]
			\item\label{enum:DecisionProcedure:III.I} Nondeterministically define $\cS^\cB$ to be some subset of $\fconsts(N)$, i.e.\ use a subset of the Herbrand domain with respect to the free sort $\cS$.
			\item\label{enum:DecisionProcedure:III.II} For every $e \in \fconsts(N)$ nondeterministically pick some $\fa \in \cS^\cB$ and set $e^\cB := \fa$.
			\item\label{enum:DecisionProcedure:III.III} Set $d_i^\cB := \gamma(d_i)$ for every $d_i$.
			\item\label{enum:DecisionProcedure:III.IV} For every uninterpreted predicate symbol $P$ occurring in $N$ nondeterministically define the set $P^\cB$ in such a way that $\cB$ is $\JB$-uniform.
		\end{enumerate}
	\item\label{enum:DecisionProcedure:IV} Check whether $\cB$ is a model of $N$.
\end{enumerate}

%%%%%%%%%%%%%%%%%%%%%%%%%%%%%%%%%%%%%%%%%%%%%%%%%%%%%%%%%%%%%%%%%%%%%%%%%%
%%%%%%%%%%%%%%%%%%%%%%%%%%%%%%%%%%%%%%%%%%%%%%%%%%%%%%%%%%%%%%%%%%%%%%%%%%
\begin{theorem}\label{theorem:DecidabilityOfBsrSla}
	Satisfiability of finite \BsrSlr\ clause sets is decidable.
\end{theorem}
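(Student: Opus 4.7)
The plan is to show that the nondeterministic procedure (\ref{enum:DecisionProcedure:I})--(\ref{enum:DecisionProcedure:IV}) described above constitutes a sound, complete, and terminating decision procedure for satisfiability of $N$. Without loss of generality, $N$ may be assumed to be in normal form, since the discussion following Definition~\ref{definition:BSRwithConstrNormalform} produces an equisatisfiable normal-form clause set. Soundness is immediate: any branch that survives Step~(\ref{enum:DecisionProcedure:IV}) explicitly produces a hierarchic interpretation $\cB$ with $\cB \models N$, so $N$ is satisfiable.

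For completeness, I would assume that $N$ is satisfiable and invoke Corollary~\ref{corollary:ExistenceOfUniformModeslForBsrSla} to obtain a $\JA$-uniform model $\cA$ of $N$ with $\cS^\cA$ finite. I then exhibit a successful branch that essentially reconstructs $\cA$: in Step~(\ref{enum:DecisionProcedure:I}), guess the total preorder $\preceq$ induced by the values of Skolem/rational constants under $\cA$, which automatically satisfies $N_\preceq$; in Step~(\ref{enum:DecisionProcedure:II}), the assignment $\gamma(d_i) := d_i^\cA$ is a solution of $\Ndef \cup N_\preceq$, so the LA-check succeeds; in Steps~(\ref{enum:DecisionProcedure:III.I})--(\ref{enum:DecisionProcedure:III.II}), choose $\cS^\cB := \{c^\cA \mid c \in \fconsts(N)\}$ (nonempty because $N$ contains a free-sort constant) and $e^\cB := e^\cA$; in Step~(\ref{enum:DecisionProcedure:III.IV}), copy the predicate interpretations from $\cA$, which is well-defined and $\JB$-uniform since $\JA = \JB$ under this choice of $\gamma$.

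For termination, every step explores only a finite search space. The set of total preorders on the finitely many constants occurring in $N'$ is finite, so Step~(\ref{enum:DecisionProcedure:I}) branches finitely often. Step~(\ref{enum:DecisionProcedure:II}) reduces to satisfiability of an existential formula in linear real arithmetic and is therefore decidable. In Step~(\ref{enum:DecisionProcedure:III}), $\fconsts(N)$ is finite, the partition $\JB$ has size at most $2|\bconsts(N)|{+}1$, and $\chi_\cB$ takes values in a power set over a finite index set, so the collection of $\JB$-uniform predicate assignments is finite.

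The main obstacle is Step~(\ref{enum:DecisionProcedure:IV}), where one must argue that the check $\cB \models N$ is effective despite the base-sort variables ranging over the uncountable set $\Real$. This is exactly where $\JB$-uniformity pays off: for any clause $C \in N$ with base-sort variables $x_1, \ldots, x_\ell$ and any assignment $\beta$, the truth value of $C$ under $\cB,\beta$ depends only on the $\sim_\JB$-equivalence class of $\bigl\<\beta(x_1), \ldots, \beta(x_\ell)\bigr\>$ together with the free-sort tuple $\bigl(\beta(u)\bigr)_{u}$, which ranges over the finite set $\cS^\cB$. Since there are only finitely many equivalence classes and a representative from each can be produced effectively (by picking one rational or Skolem value from each $J \in \JB$ as in Lemma~\ref{lemma:IndistuingishableReals}), model-checking reduces to a finite conjunction of ground LA-plus-propositional verifications, each decidable. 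Combining soundness, completeness, and termination yields decidability of satisfiability for finite \BsrSlr\ clause sets.
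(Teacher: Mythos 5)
Your proposal is correct and follows essentially the same route as the paper: the nondeterministic procedure (1)--(4) combined with Corollary~\ref{corollary:ExistenceOfUniformModeslForBsrSla} for completeness, finiteness of the preorder/uniform-interpretation search space plus decidability of existential linear real arithmetic for termination, and $\JB$-uniformity to reduce the model check in Step~(\ref{enum:DecisionProcedure:IV}) to finitely many effectively checkable equivalence classes. One small imprecision: a representative of a $\sim_\JB$-class of $\ell$-tuples may need up to $\ell$ \emph{distinct} points inside a single open interval of $\JB$ (as in Lemmas~\ref{lemma:ExistenceOfUniformModeslForBsrSla} and~\ref{lemma:IndistuingishableReals}), not just one value per interval as your parenthetical suggests, and Step~(\ref{enum:DecisionProcedure:III.I}) formally asks for a subset of $\fconsts(N)$ rather than of $\{c^\cA \mid c \in \fconsts(N)\}$, which is harmless up to the obvious identification.
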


%%%%%%%%%%%%%%%%%%%%%%%%%%%%%%%%%%%%%%%%%%%%%%%%%%%%%%%%%%%%%%%%%%%%%%%%%%%%%%%
%%%%%%%%%%%%%%%%%%%%%%%%%%%%%%%%%%%%%%%%%%%%%%%%%%%%%%%%%%%%%%%%%%%%%%%%%%%%%%%
%%%%%%%%%%%%%%%%%%%%%%%%%%%%%%%%%%%%%%%%%%%%%%%%%%%%%%%%%%%%%%%%%%%%%%%%%%%%%%%

%%%%%%%%%%%%%%%%%%%%%%%%%%%%%%%%%%%%%%%%%%%%%%%%%%%%%%%%%%%%%%%%%%%%%%%%%%
%\newpage
%%%%%%%%%%%%%%%%%%%%%%%%%%%%%%%%%%%%%%%%%%%%%%%%%%%%%%%%%%%%%%%%%%%%%%%%%%
\section{Decidability of satisfiability for \BsrBd\ clause sets}\label{section:DecidabilityBsrBd}
%%%%%%%%%%%%%%%%%%%%%%%%%%%%%%%%%%%%%%%%%%%%%%%%%%%%%%%%%%%%%%%%%%%%%%%%%%

Similarly to the previous section, we fix some finite \BsrBd\ clause set $N$ in normal form for the rest of this section,
and we assume that all uninterpreted predicate symbols $P$ occurring in $N$ have the sort $P : \cS^{m'} \times \cR^{m}$.
Moreover, we assume that all base-sort constants in $N$ are integers.
This does not lead to a loss of generality, as we could multiply all rational constants with the least common multiple of their denominators to obtain an equisatisfiable clause set in which all base-sort constants are integers.
We could even allow Skolem constants, if we added clauses stipulating that every such constant is assigned a value that is (a)~an integer and (b) is bounded from above and below by some integer bounds.
For the sake of simplicity, however, we do not consider Skolem constants here.

Our general approach to decidability of the satisfiability problem for finite \BsrBd\ clause sets is very similar to the path taken in the previous section.
Due to the nature of the LA constraints in \BsrBd\ clause sets, the employed equivalence relation characterizing indistinguishable tuples has to be a different one.
In fact, we use one equivalence relation $\hsimeq_\kappa$ on the unbounded space $\Real^m$ and another equivalence relation $\simeq_\kappa$ on the subspace $(-\kappa-1, \kappa+1)^m$ for some positive integer $\kappa$.
Our definition of the relations $\simeq_\kappa$ and $\hsimeq_\kappa$ is inspired by the notion of clock equivalence used in the context of timed automata (see, e.g.,~\cite{Alur1994J}).
%%%%%%%%%%%%%%%%%%%%%%%%%%%%%%%%%%%%%%%%%%%%%%%%%%%%%%%%%%%%%%%%%%%%%%%%%%
\begin{definition}[bounded region equivalence $\simeq_\kappa$]
	Let $\kappa$ be a positive integer.
	We define the equivalence relation $\simeq_\kappa$ on $(-\kappa-1, \kappa+1)^m$ such that we get
	$\<r_1, \ldots, r_m\> \simeq_\kappa \<s_1, \ldots, s_m\>$ if and only if the following conditions are met: \\
		(i) For every $i$ we have  $\lfloor r_i \rfloor = \lfloor s_i \rfloor$, and $\Fr(r_i) = 0$ if and only if $\Fr(s_i) = 0$.\\
		(ii) For all $i,j$ we have $\Fr(r_i) \leq \Fr(r_j)$ if and only if $\Fr(s_i) \leq \Fr(s_j)$.
\end{definition}
The relation $\simeq_\kappa$ induces only a finite number of equivalence classes over $(-\kappa-1, \kappa+1)^m$.
Over $\Real^m$, on the other hand, an analogous equivalence relation $\simeq_\infty$ would lead to infinitely many equivalence classes.
In order to overcome this problem and obtain an equivalence relation over $\Real^m$ that induces only a finite number of equivalence classes, we use the following compromise.

\begin{definition}[unbounded region equivalence $\hsimeq_\kappa$]
	Let $\kappa$ be a positive integer.
	We define the equivalence relation $\hsimeq_\kappa$ on $\Real^m$ in such a way that \\
	$\<r_1, \ldots, r_m\> \hsimeq_\kappa \<s_1, \ldots, s_m\>$ holds if and only if \\
		(i) for every $i$ either $r_i > \kappa$ and $s_i > \kappa$, or $r_i < -\kappa$ and $s_i < -\kappa$, or the following conditions are met:
			(i.i) $\lfloor r_i \rfloor = \lfloor s_i \rfloor$ and 
			(i.ii) $\Fr(r_i) = 0$ if and only if $\Fr(s_i) = 0$,
			and 
		(ii) for all $i,j$\\
			(ii.i) if $r_i, r_j > \kappa$ or $r_i, r_j < -\kappa$, then $r_i \leq r_j$ if and only if $s_i \leq s_j$, \\
			(ii.ii) if $-\kappa \leq r_i, r_j \leq \kappa$, then $\Fr(r_i) \leq \Fr(r_j)$ if and only if $\Fr(s_i) \leq \Fr(s_j)$.
\end{definition}

Obviously, the equivalence relations $\simeq_\kappa$ and $\hsimeq_\kappa$ coincide on the subspace $(-\kappa, \kappa)^m$.
Over $(-\kappa-1, \kappa+1)^m$ the relation $\simeq_\kappa$ constitutes a proper refinement of $\hsimeq_\kappa$.
Figure~\ref{figure:EquivalenceClasses} depicts the equivalence classes induced by $\simeq_\kappa$ and $\hsimeq_\kappa$ in a two-dimensional setting for $\kappa = 1$.
We need both relations in our approach.
%%%%%%%%%%%%%%%%%%%%
%%%%%%%%%%%%%%%%%%%%
\begin{figure}[h]
\centerline{
\begin{tabular}{p{80pt}p{20ex}p{80pt}}
	\begin{picture}(60, 60)
		\put(1.5,1.5){\dashbox(77,77)[5]{}}
		\multiput(20,20)(20,0){3}{\circle*{3}}
		\multiput(20,40)(20,0){3}{\circle*{3}}
		\multiput(20,60)(20,0){3}{\circle*{3}}
		\linethickness{1.5pt}
		% vertical lines
		\multiput(20,62)(20,0){3}{\line(0,1){16}}
		\multiput(20,42)(20,0){3}{\line(0,1){16}}
		\multiput(20,22)(20,0){3}{\line(0,1){16}}
		\multiput(20,2)(20,0){3}{\line(0,1){16}}
		% horizontal lines
		\multiput(2,60)(20,0){4}{\line(1,0){16}}
		\multiput(2,40)(20,0){4}{\line(1,0){16}}
		\multiput(2,20)(20,0){4}{\line(1,0){16}}
		%\diaginal lines
		\multiput(2,62.2)(20,20){1}{\line(1,1){15.8}}
		\multiput(2,62)(20,20){1}{\line(1,1){16}}
		\multiput(2.2,62)(20,20){1}{\line(1,1){15.8}}
		\multiput(2,42.2)(20,20){2}{\line(1,1){15.8}}
		\multiput(2,42)(20,20){2}{\line(1,1){16}}
		\multiput(2.2,42)(20,20){2}{\line(1,1){15.8}}
		\multiput(2,22.2)(20,20){3}{\line(1,1){15.8}}
		\multiput(2,22)(20,20){3}{\line(1,1){16}}
		\multiput(2.2,22)(20,20){3}{\line(1,1){15.8}}
		\multiput(2,2.2)(20,20){4}{\line(1,1){15.8}}
		\multiput(2,2)(20,20){4}{\line(1,1){16}}
		\multiput(2.2,2)(20,20){4}{\line(1,1){15.8}}
		\multiput(22,2.2)(20,20){3}{\line(1,1){15.8}}
		\multiput(22,2)(20,20){3}{\line(1,1){16}}
		\multiput(22.2,2)(20,20){3}{\line(1,1){15.8}}
		\multiput(42,2.2)(20,20){2}{\line(1,1){15.8}}
		\multiput(42,2)(20,20){2}{\line(1,1){16}}
		\multiput(42.2,2)(20,20){2}{\line(1,1){15.8}}
		\multiput(62,2.2)(20,20){1}{\line(1,1){15.8}}
		\multiput(62,2)(20,20){1}{\line(1,1){16}}
		\multiput(62.2,2)(20,20){1}{\line(1,1){15.8}}
	\end{picture}
	&
	\begin{picture}(0,0)
		\put(28.5,9){\mbox{$\<0,0\>$}}
		\put(28,16){\vector(-3,1){65}}
		\put(49,16){\vector(3,1){65}}
	\end{picture}
	&
	\begin{picture}(60, 60)
		\multiput(20,20)(20,0){3}{\circle*{3}}
		\multiput(20,40)(20,0){3}{\circle*{3}}
		\multiput(20,60)(20,0){3}{\circle*{3}}
		\linethickness{1.5pt}
		% vertical lines
		\multiput(20,62)(20,0){3}{\line(0,1){16}}
		\multiput(20,42)(20,0){3}{\line(0,1){16}}
		\multiput(20,22)(20,0){3}{\line(0,1){16}}
		\multiput(20,2)(20,0){3}{\line(0,1){16}}
		% horizontal lines
		\multiput(2,60)(20,0){4}{\line(1,0){16}}
		\multiput(2,40)(20,0){4}{\line(1,0){16}}
		\multiput(2,20)(20,0){4}{\line(1,0){16}}
		%\diaginal lines
		\multiput(22,42.2)(20,20){1}{\line(1,1){15.8}}
		\multiput(22,42)(20,20){1}{\line(1,1){16}}
		\multiput(22.2,42)(20,20){1}{\line(1,1){15.8}}
		\multiput(2,2.2)(20,20){4}{\line(1,1){15.8}}
		\multiput(2,2)(20,20){4}{\line(1,1){16}}
		\multiput(2.2,2)(20,20){4}{\line(1,1){15.8}}
		\multiput(42,22.2)(20,20){1}{\line(1,1){15.8}}
		\multiput(42,22)(20,20){1}{\line(1,1){16}}
		\multiput(42.2,22)(20,20){1}{\line(1,1){15.8}}
	\end{picture}
\end{tabular}	
}
\caption{
	Left: partition of the set $(-2,2)^2$ induced by $\simeq_1$. 
	Right: partition of $\Real^2$ induced by $\hsimeq_1$.
	Every dot, line segment, and white area represents an equivalence class.}
\label{figure:EquivalenceClasses}
\end{figure}
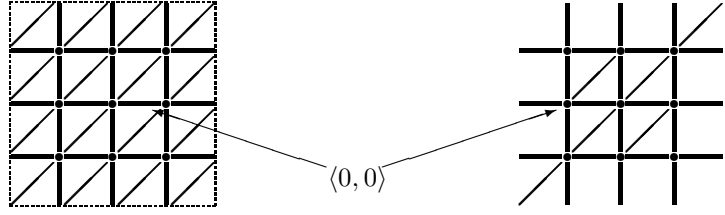
%%%%%%%%%%%%%%%%%%%%
%%%%%%%%%%%%%%%%%%%%
~\\[-7ex]
%%%%%%%%%%%%%%%%%%%%%%%%%%%%%%%%%%%%%%%%%%%%%%%%%%%%%%%%%%%%%%%%%%%%%%%%%%
\begin{definition}[$\simeq_\kappa$-uniform and $\hsimeq_\kappa$-uniform interpretations]
	Let $\kappa$ be a positive integer.
	Consider a  interpretation $\cA$.
	We call $\cA$ \emph{$\simeq_\kappa$-uniform} if its corresponding coloring $\chi_\cA$ (cf.\ Definition~\ref{definition:AColoring}) colors each $\simeq_\kappa$-equivalence class over $(-\kappa-1, \kappa+1)^m$ uniformly, i.e.\ for all tuples $\bq, \bq' \in (-\kappa-1, \kappa+1)^m$ with $\bq \simeq_\kappa \bq'$ we have $\chi_\cA(\bq) = \chi_\cA(\bq')$.
	We call $\cA$ \emph{$\hsimeq_\kappa$-uniform} if $\chi_\cA$ colors each $\simeq_\kappa$-equivalence class over $\Real^m$ uniformly.
\end{definition}

The parameter $\kappa$ will be determined by the base-sort constant in $N$ with the largest absolute value.
If $\kappa$ is defined in this way, one can show that the LA constraints occurring in $N$ cannot distinguish between two $\hsimeq_\kappa$-equivalent $m$-tuples of reals.
This observation is crucial for the proof of Lemma~\ref{lemma:ExistenceOfUniformModeslForBsrBd}.

In order to prove the existence of $\hsimeq_\kappa$-uniform models for satisfiable $N$, we start from some  model $\cA$ of $N$ and rely on the existence of a certain finite set $Q \subseteq [0,1)$ of fractional parts. 
This set $Q$ can be extended to a set $\hQ \subseteq (-\kappa-1, \kappa+1)$ by addition of the fractional parts in $Q$ with integral parts $k$ from the range $-\kappa-1 \leq k \leq \kappa$.
Hence, $\hQ$ contains $2(\kappa+1) \cdot |Q|$ reals.
We assume that all $\simeq_\kappa$-equivalent tuples $\bs, \bs'$ from $\hQ^m$ are treated uniformly by $\cA$.
Put differently, we require $\chi_\cA(\bs) = \chi_\cA(\bs')$.
We choose to formulate this requirement with respect to $\simeq_\kappa$ because of the more regular structure of its equivalence classes, which facilitates a more convenient way of invoking Lemma~\ref{lemma:RamseyBasic:One}.
Due to the fact that $\simeq_\kappa$ constitutes a refinement of $\hsimeq_\kappa$ on the subspace $(-\kappa-1, \kappa+1)^m$, and since for every $\hsimeq_\kappa$-equivalence class $\hS$ over $\Real^m$ there is some $\simeq_\kappa$-equivalence class $S \subseteq (-\kappa-1, \kappa+1)^m$ such that $S \subseteq \hS$, we can use the color $\chi_\cA(\br)$ of representative $m$-tuples $\br$ constructed from $\hQ$ to serve as a blueprint when constructing a $\hsimeq_\kappa$-uniform model $\cB$.

%%%%%%%%%%%%%%%%%%%%%%%%%%%%%%%%%%%%%%%%%%%%%%%%%%%%%%%%%%%%%%%%%%%%%%%%%%
%%%%%%%%%%%%%%%%%%%%%%%%%%%%%%%%%%%%%%%%%%%%%%%%%%%%%%%%%%%%%%%%%%%%%%%%%%

\begin{lemma}\label{lemma:ExistenceOfUniformModeslForBsrBd}
	Let $\lambda$ be the maximal number of distinct base-sort variables in any single clause in $N$; in case of $\lambda < m$, we set $\lambda := m$. 
	Let $\cA$ be a  model of $N$.
	Let $\kappa$ be the maximal absolute value of any rational occurring in $N$; in case this value is zero, we set $\kappa := 1$.
	Suppose we are given a finite set $Q \subseteq [0,1)$ of cardinality $\lambda+1$ such that $0 \in Q$ and for all tuples $\br, \bs \in \hQ^m$,
		$\br \simeq_\kappa \bs$ entails $\chi_\cA(\br) = \chi_\cA(\bs)$,
	where \\
		\centerline{$\hQ := \bigl\{ q + k \bigm| q \in Q \text{ and } k \in \{-\kappa-1, \ldots, 0, \ldots, \kappa \} \bigr\}$.}
	Then we can construct a model $\cB$ of $N$ that is $\hsimeq_\kappa$-uniform and that interprets the free sort $\cS$ as a finite set.
\end{lemma}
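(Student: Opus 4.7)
The plan is to adapt the proof of Lemma~\ref{lemma:ExistenceOfUniformModeslForBsrSla} to the \BsrBd\ setting, with the equivalence $\hsimeq_\kappa$ playing the role of $\sim_\JA$ and the set $\hQ$ playing the role of $Q$. I would first establish an analogue of Claim~I of that lemma: for every positive integer $\mu \leq \lambda$, every $\hsimeq_\kappa$-equivalence class over $\Real^\mu$ contains a representative lying in $\hQ^\mu$. A component-wise construction suffices. A component $r_i \in [-\kappa,\kappa]$ is matched by an element of the form $\lfloor r_i \rfloor + q$ with $q \in Q$ chosen to be $0$ iff $\Fr(r_i) = 0$ and otherwise positioned so as to respect the ordering of $\Fr(r_i)$ among the fractional parts of the bounded components. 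A component $r_i > \kappa$ is matched by an element $\kappa + q$ with $q \in Q \setminus \{0\}$ assigned according to the rank of $r_i$ among the unbounded-positive components; symmetrically for $r_i < -\kappa$ via $-\kappa-1 + q$. The assumption $|Q| = \lambda + 1$ with $0 \in Q$ provides the $\lambda \geq \mu$ distinct non-zero values needed in each region.

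Next I would construct $\cB$ in direct analogy to the \BsrSlr\ case: put $\cS^\cB := \hcS$, $c^\cB := c^\cA$ for every constant symbol $c$, and for every uninterpreted predicate symbol $P$ and every $\<\ba, \br\>$ with $\ba \in \hcS^{m'}$ and $\br \in \Real^m$ pick a representative $\bs \in \hQ^m$ with $\bs \hsimeq_\kappa \br$ according to a canonical rule that depends only on the $\hsimeq_\kappa$-equivalence class of $\br$, declaring $\<\ba, \br\> \in P^\cB$ iff $\<\ba, \bs\> \in P^\cA$. By construction $\cB$ is $\hsimeq_\kappa$-uniform and interprets $\cS$ as the finite set $\hcS$.

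To verify $\cB \models N$, I would take an arbitrary clause $C = \Lambda \,\|\, \Gamma \to \Delta$ of $N$ together with an arbitrary variable assignment $\beta$. Applying the auxiliary claim to the tuple $\<\beta(x_1),\ldots,\beta(x_\ell)\>$ of values of all $\ell$ base-sort variables of $C$, I would obtain some $\<q_1,\ldots,q_\ell\> \in \hQ^\ell$ in the same $\hsimeq_\kappa$-class, set $\hbeta_C(x_i) := q_i$ and $\hbeta_C(u) := \beta(u)$ on free-sort variables, and extend arbitrarily otherwise. A direct check then shows that the LA constraints $x \rel c$, $x \rel y$, and $x - y \rel c$ permitted in $\Lambda$ are invariant under $\hsimeq_\kappa$; for difference constraints this crucially uses the mandatory bounds forcing $x, y$ into $[-\kappa,\kappa]$, where $\hsimeq_\kappa$ and $\simeq_\kappa$ coincide. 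Hence $\cA, \hbeta_C \models C$, and the usual case distinction transfers satisfaction to $\cB, \beta$: free-sort equations are handled since $\beta$ and $\hbeta_C$ agree on free-sort variables and $\cA, \cB$ share the interpretation of free-sort constants, while predicate atoms are handled via the defining equivalence of $P^\cB$ in terms of $\cA$.

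The main obstacle is the predicate-atom case. For an atom $P(\bar u, \bar x)$, the canonical representative $\bs$ used to define $P^\cB$ on $\br := \beta(\bar x)$ must turn out to be $\simeq_\kappa$-equivalent to the sub-tuple $\hbeta_C(\bar x)$ of the canonicalized $\ell$-tuple; only then does the hypothesised $\simeq_\kappa$-uniformity of $\chi_\cA$ on $\hQ^m$ yield $\<\ba, \bs\> \in P^\cA$ iff $\<\ba, \hbeta_C(\bar x)\> \in P^\cA$, which is what the argument needs. Achieving this forces the canonical selection rule for $P^\cB$ to be projection-consistent (up to $\simeq_\kappa$) across arbitrary sub-tuples of tuples canonicalized from $\hQ^\ell$, which is the delicate combinatorial point that the construction must resolve and which distinguishes the present argument from its \BsrSlr\ predecessor.
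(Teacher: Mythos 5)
You correctly reduce the problem to its crux, but you leave that crux unresolved, so the proposal has a genuine gap. The hypothesis of the lemma only guarantees $\chi_\cA(\br) = \chi_\cA(\bs)$ for tuples in $\hQ^m$ that are $\simeq_\kappa$-equivalent, and $\simeq_\kappa$ also constrains the ordering of fractional parts \emph{across} the three regions (components below $-\kappa$, above $\kappa$, and inside $[-\kappa,\kappa]$), which $\hsimeq_\kappa$ does not. Your representative construction chooses fractional parts independently region by region (bounded components ranked among themselves, out-of-range components ranked among themselves), so two $\hsimeq_\kappa$-equivalent tuples --- e.g.\ the canonical representative used to define $P^\cB$ on $\beta(\bx)$ and the sub-tuple $\hbeta_C(\bx)$ obtained from canonicalizing the full $\ell$-tuple of the clause --- can end up in $\hQ^m$, $\hsimeq_\kappa$-equivalent, yet \emph{not} $\simeq_\kappa$-equivalent, because their cross-region fractional orderings disagree. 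At that point the hypothesis on $\chi_\cA$ simply does not apply, and the predicate-atom case of the verification collapses; declaring the selection rule ``canonical per class'' does not help, since $\hbeta_C(\bx)$ is produced by a different canonicalization (of a longer tuple) and need not coincide with, or even be $\simeq_\kappa$-equivalent to, the class representative. Naming this ``the delicate combinatorial point that the construction must resolve'' acknowledges the obligation but does not discharge it.

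The paper resolves exactly this point by strengthening its Claim~I: the representative $\bq \in S \cap \hQ^\mu$ of $\br$ is chosen so that, in addition to $\br \hsimeq_\kappa \bq$, one has $\Fr(q_{i_1}) < \Fr(q_{i_2}) < \Fr(q_{i_3})$ whenever $r_{i_1} < -\kappa$, $r_{i_2} > \kappa$, and $-\kappa \leq r_{i_3} \leq \kappa$ (the out-of-range values are first squeezed into $(\kappa,\kappa+1)$ resp.\ $(-\kappa-1,-\kappa)$ with fractional parts below all nonzero fractional parts of the in-range components, positives above negatives). This normalization is a pointwise condition on triples of indices, hence inherited by arbitrary sub-tuples, which is precisely the ``projection consistency'' you ask for. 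Its payoff is the paper's Claim~II: any two normalized representatives of $\hsimeq_\kappa$-equivalent tuples are automatically $\simeq_\kappa$-equivalent (the proof is a short case analysis ruling out every possible cross-region disagreement of fractional orderings), so the hypothesis on $\chi_\cA$ becomes applicable both when proving $\hsimeq_\kappa$-uniformity of $\cB$ and in the predicate-atom case of $\cB \models N$. Your remaining steps (construction of $\cB$, treatment of $x \rel c$, $x \rel y$, the bounded difference constraints via the mandatory bounds, and the free-sort cases) match the paper's argument, but without this normalization device the proof is incomplete.
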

\begin{proof}[Proof sketch]
	The construction of $\cB$ from $\cA$ is similar to the construction of uniform models outlined in the proof of Lemma~\ref{lemma:ExistenceOfUniformModeslForBsrSla}.
	
	\smallskip
	\noindent
	\underline{Claim I:} Let $\mu$ be a positive integer with $\mu \leq \lambda$.  
			For every $\hsimeq_\kappa$-equivalence class $S$ over $\Real^\mu$ and every $\br \in S$ there is some $\bq \in S \cap \hQ^\mu$ such that $\br \hsimeq_\kappa \bq$ and for all $i_1, i_2, i_3$ with $r_{i_1} < -\kappa$ and $r_{i_2} > \kappa$ and $-\kappa \leq r_{i_3} \leq \kappa$ we have $\Fr(q_{i_1}) < \Fr(q_{i_2}) < \Fr(q_{i_3})$.
				\strut\hfill$\Diamond$
	\smallskip
	
	Let $\hcS$ denote the set $\{ \fa \in \cS^\cA \mid \text{$\fa = c^\cA$ for some $c \in \fconsts(N)$} \}$.
	We construct the  interpretation $\cB$ as follows:
		$\cS^\cB := \hcS$;
		$c^\cB := c^\cA$ for every constant symbol $c$;
		for every uninterpreted predicate symbol $P$ occurring in $N$ and for all tuples $\ba \in \hcS^{m'}$ and $\br \in \Real^m$ we pick some tuple $\bq \in \hQ^m$ in accordance with Claim~I---i.e.\ $\bq$ satisfies $\br \hsimeq_\kappa \bq$---and define $P^{\cB}$ in such a way that 
			$\<\ba, \br\> \in P^{\cB}$ if and only if $\<\ba,\bq\> \in P^\cA$.
	
	\smallskip
	\noindent
	\underline{Claim II:} The  interpretation $\cB$ is $\hsimeq_\kappa$-uniform.
		\strut\hfill$\Diamond$
	\smallskip
	
	It remains to show $\cB\models N$. 
	We use the same approach as in the proof for Lemma~\ref{lemma:ExistenceOfUniformModeslForBsrSla}, this time based on the equivalence relation $\hsimeq_\kappa$ instead of $\sim_\JA$.
\end{proof}

We employ Lemma~\ref{lemma:RamseyBasic:One} to prove the existence of the set $Q$ used in Lemma~\ref{lemma:ExistenceOfUniformModeslForBsrBd}.

%%%%%%%%%%%%%%%%%%%%%%%%%%%%%%%%%%%%%%%%%%%%%%%%%%%%%%%%%%%%%%%%%%%%%%%%%%
\begin{lemma}\label{lemma:ExistenceOfIndistinguishableFractionalParts}
	Let $\cA$ be an  interpretation and let $\kappa, \lambda$ be positive integers with $\lambda \geq m$.
	There exists a finite set $Q \subseteq [0,1)$ of cardinality $\lambda+1$ such that $0 \in Q$ and for all tuples $\bs, \bs' \in \hQ^m$,
		$\bs \simeq_\kappa \bs'$ entails $\chi_\cA(\bs) = \chi_\cA(\bs')$,
	where \\
		\centerline{$\hQ := \bigl\{ q + k \bigm| q \in Q \text{ and } k \in \{-\kappa-1, \ldots, 0, \ldots, \kappa \} \bigr\}$.}
\end{lemma}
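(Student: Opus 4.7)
The plan is to invoke Lemma~\ref{lemma:RamseyBasic:One} with a single coloring encoding the values of $\chi_\cA$ on all possible $\simeq_\kappa$-equivalence classes representable in $\hQ^m$. Call a \emph{shape} $\sigma$ the following data: for each position $j \in [m]$, an integer shift $k_j \in \{-\kappa-1, \ldots, \kappa\}$; a subset $Z \subseteq [m]$ of ``zero-fractional'' positions; and an ordered partition $(I_1, \ldots, I_{\mu(\sigma)})$ of $[m] \setminus Z$ whose blocks group positions with equal (nonzero) fractional parts, enumerated in ascending order of those parts. Given an ascending tuple $\<q_1, \ldots, q_{\mu(\sigma)}\> \in (0,1)^{\mu(\sigma)}$, let $T_\sigma(\<q_1, \ldots, q_{\mu(\sigma)}\>)$ be the $m$-tuple whose $j$-th entry is $k_j$ if $j \in Z$ and $k_j + q_\ell$ if $j \in I_\ell$; every element of $\hQ^m$ realizing shape $\sigma$ has this form for some ascending $\mu(\sigma)$-tuple drawn from $Q \setminus \{0\}$, and only finitely many shapes exist.

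Next I would define $\chi' : \Real^m \to \cC'$ so that for every ascending $\<q_1, \ldots, q_m\> \in (0,1)^m$,
$$\chi'(\<q_1, \ldots, q_m\>) := \bigl( \chi_\cA\bigl(T_\sigma(\<q_1, \ldots, q_{\mu(\sigma)}\>)\bigr) \bigr)_\sigma,$$
and arbitrarily elsewhere. Finiteness of the number of shapes and of the image of $\chi_\cA$ (since $\hcS$ is finite) keeps $\cC'$ finite. Applying Lemma~\ref{lemma:RamseyBasic:One} to $\chi'$ on the infinite set $(0, 1)$ with $n := \lambda + m$ yields a subset $R' \subseteq (0,1)$ of cardinality $\lambda + m$ on which all ascending $m$-tuples share one $\chi'$-color. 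I would then take $R^*$ to be the $\lambda$ smallest elements of $R'$ and set $Q := \{0\} \cup R^*$; this gives $0 \in Q \subseteq [0, 1)$ and $|Q| = \lambda + 1$.

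For the uniformity verification, I would consider $\bs, \bs' \in \hQ^m$ with $\bs \simeq_\kappa \bs'$. The three clauses defining $\simeq_\kappa$ (matching integer parts, matching zero/nonzero-frac pattern, matching ordering of fractional parts) force $\bs$ and $\bs'$ to realize a common shape $\sigma$; writing $\mu := \mu(\sigma)$, I can express $\bs = T_\sigma(\<a_1, \ldots, a_\mu\>)$ and $\bs' = T_\sigma(\<a'_1, \ldots, a'_\mu\>)$ for two ascending $\mu$-tuples from $R^*$. Since $R' \setminus R^*$ has $m$ elements, each strictly greater than every element of $R^*$, appending its $m - \mu$ smallest elements to each of $\<a_1, \ldots, a_\mu\>$ and $\<a'_1, \ldots, a'_\mu\>$ produces two ascending $m$-tuples in $R'$. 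These two $m$-tuples receive the same $\chi'$-color by construction, hence their $\sigma$-components agree, which is precisely $\chi_\cA(\bs) = \chi_\cA(\bs')$.

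The main obstacle is that the required uniformity spans shapes of all arities $\mu \leq m$ at once, so no single application of Lemma~\ref{lemma:RamseyBasic:One} to $\mu$-tuples of fixed $\mu$ suffices. My workaround packs the $\chi_\cA$-information of every shape into the codomain of a single coloring of $m$-tuples, with the convention that the $\sigma$-component reads only the first $\mu(\sigma)$ entries; reserving the top $m$ elements of $R'$ as a universal ``padding pool'' then lets any ascending $\mu$-tuple in $R^*$ be completed to an ascending $m$-tuple in $R'$ without disturbing its first $\mu$ entries, which is exactly what makes the Ramsey-uniformity of $\chi'$ transfer to $\chi_\cA$-uniformity on $\hQ^m$.
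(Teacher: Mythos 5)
Your proposal is correct and follows essentially the same route as the paper: your ``shapes'' are exactly the pairs $\<\varrho,\sigma\>$ of Lemma~\ref{lemma:CorrespondenceSynchronous}, and you likewise pack all of them into one product coloring of $m$-tuples and invoke Lemma~\ref{lemma:RamseyBasic:One} once. The only deviation is that you request $\lambda+m$ Ramsey points and reserve the top $m$ as a padding pool so that any ascending $\mu$-tuple ($\mu\leq m$) of fractional parts from $Q\setminus\{0\}$ can always be completed to an ascending $m$-tuple inside the uniform set; this is a sound (indeed slightly more careful) treatment of the padding step that the paper handles only implicitly.
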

\begin{proof}[Proof sketch]
	One can show that every $\simeq_\kappa$-equivalence class $S$ over $(-\kappa-1, \kappa+1)^m$  can be represented by a pair of mappings $\varrho : [m] \to \{0, 1, \ldots, m\}$ and $\sigma : [m] \to \{-\kappa-1, \ldots, 0, \ldots, \kappa\}$ such that \\
		(i) for any ascending tuple $\<r_0, r_1, \ldots, r_m\> \in [0,1)^{m+1}$ with $r_0 = 0$ we have $\bigl\<r_{\varrho(1)} + \sigma(1), \ldots, r_{\varrho(m)} + \sigma(m)\bigr\> \in S$, and \\			
		(ii) for every tuple $\<s_1, \ldots, s_m\> \in S$ there is an ascending tuple $\<r_0, r_1, \ldots, r_m\> \in [0,1)^{m+1}$ with $r_0 = 0$ such that $\bigl\<s_1, \ldots, s_m\bigr\> = \bigl\<r_{\varrho(1)} + \sigma(1), \ldots, r_{\varrho(m)}+\sigma(m)\bigr\>$.

	Having an enumeration $\<\varrho_1, \sigma_1\>, \ldots, \<\varrho_k, \sigma_k\>$ of pairs of such mappings in which every $\simeq_\kappa$-equivalence class over $(-\kappa-1, \kappa+1)^m$ is represented, we construct a coloring $\hchi : \Real^m \to \bigl( \cP\{P_i \ba \mid \text{$\ba \in \hcS^{m'}$ and $P_i$ occurs in $N$}\} \bigr)^k$ by setting 
		~\\[-4ex]
		\begin{align*}
			\hchi(\br) := \bigl\< \chi_\cA &\bigl( \<r_{\varrho_1(1)} + \sigma_1(1), \ldots, r_{\varrho_1(m)} + \sigma_1(m)\> \bigr),\\
					 &\ldots, \chi_\cA \bigl( \<r_{\varrho_k(1)} + \sigma_k(1), \ldots, r_{\varrho_k(m)} + \sigma_k(m)\> \bigr) \bigr\>
		\end{align*}	
		~\\[-4ex]
	for every tuple $\br = \<r_1, \ldots, r_m\> \in (0,1)^m$, where we define $r_0$ to be $0$.	
	By virtue of Lemma~\ref{lemma:RamseyBasic:One}, there is a set $Q' \subseteq (0,1)$ of cardinality $\lambda$ such that all ascending tuples $\<r_1, \ldots, r_m\> \in {Q'}^m$ are assigned the same color by $\chi$. 
	Then $Q := Q' \cup \{0\}$ is the sought set.
\end{proof}

%%%%%%%%%%%%%%%%%%%%%%%%%%%%%%%%%%%%%%%%%%%%%%%%%%%%%%%%%%%%%%%%%%%%%%%%%%
Lemmas~\ref{lemma:ExistenceOfUniformModeslForBsrBd} and~\ref{lemma:ExistenceOfIndistinguishableFractionalParts} together entail the existence of $\hsimeq_\kappa$-uniform models for finite satisfiable \BsrBd\ clause sets, where $\kappa$ is defined as in Lemma~\ref{lemma:ExistenceOfUniformModeslForBsrBd}.
\begin{corollary}\label{corollary:ExistenceOfUniformModeslForBsrBd}
	Let $\kappa$ be defined as in Lemma~\ref{lemma:ExistenceOfUniformModeslForBsrBd}.
	If $N$ is satisfiable, then it has a  model $\cA$ that is $\hsimeq_\kappa$-uniform and that interprets the sort $\cS$ as some finite set.
\end{corollary}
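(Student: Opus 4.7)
The plan is to obtain the corollary as an immediate consequence of chaining Lemma~\ref{lemma:ExistenceOfIndistinguishableFractionalParts} into Lemma~\ref{lemma:ExistenceOfUniformModeslForBsrBd}, with the two parameters $\kappa$ and $\lambda$ chosen once at the outset so that the hypotheses of both lemmas line up.

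Concretely, I would proceed as follows. Assume $N$ is satisfiable and fix some model $\cA \models N$. Let $\lambda$ be the maximum number of distinct base-sort variables in any single clause of $N$, enlarged to $m$ if necessary so that $\lambda \geq m$; this is exactly the quantity appearing in Lemma~\ref{lemma:ExistenceOfUniformModeslForBsrBd}, and it also meets the side condition $\lambda \geq m$ of Lemma~\ref{lemma:ExistenceOfIndistinguishableFractionalParts}. Let $\kappa$ be the maximal absolute value of any rational occurring in $N$ (and $\kappa := 1$ otherwise), matching the definition prescribed by Lemma~\ref{lemma:ExistenceOfUniformModeslForBsrBd} and referenced in the statement of the corollary.

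Next, apply Lemma~\ref{lemma:ExistenceOfIndistinguishableFractionalParts} to the interpretation $\cA$ with these values of $\kappa$ and $\lambda$. This yields a finite set $Q \subseteq [0,1)$ of cardinality $\lambda+1$ containing $0$ such that, with $\hQ := \{ q + k \mid q \in Q,\ k \in \{-\kappa-1, \ldots, 0, \ldots, \kappa\}\}$, any two $\simeq_\kappa$-equivalent tuples $\bs, \bs' \in \hQ^m$ satisfy $\chi_\cA(\bs) = \chi_\cA(\bs')$. This is precisely the hypothesis on $Q$ required by Lemma~\ref{lemma:ExistenceOfUniformModeslForBsrBd}.

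Finally, feeding the model $\cA$ and the set $Q$ into Lemma~\ref{lemma:ExistenceOfUniformModeslForBsrBd} produces an interpretation $\cB$ that is a model of $N$, is $\hsimeq_\kappa$-uniform, and interprets $\cS$ as a finite set, which is exactly the conclusion of the corollary. There is essentially no obstacle here: the two lemmas have been engineered to dovetail, and the only point requiring attention is the coherent choice of the common parameters $\kappa$ and $\lambda$ so that both invocations are legal.
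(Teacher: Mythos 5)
Your proposal is correct and matches the paper exactly: the corollary is obtained by applying Lemma~\ref{lemma:ExistenceOfIndistinguishableFractionalParts} (with $\lambda$ and $\kappa$ as in Lemma~\ref{lemma:ExistenceOfUniformModeslForBsrBd}) to a given model $\cA$ of $N$ to produce the set $Q$, and then invoking Lemma~\ref{lemma:ExistenceOfUniformModeslForBsrBd} to build the $\hsimeq_\kappa$-uniform model with finite free-sort domain. The paper states the corollary as exactly this chaining of the two lemmas, so there is nothing to add.
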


By virtue of Corollary~\ref{corollary:ExistenceOfUniformModeslForBsrBd}, we can devise a nondeterministic decision procedure for finite \BsrBd\ clause sets $N$. 
We adapt the decision procedure for \BsrSlr\ as follows.
Since base-sort Skolem constants do not occur in $N$, Steps~(\ref{enum:DecisionProcedure:I}),~(\ref{enum:DecisionProcedure:II}), and~\ref{enum:DecisionProcedure:III.III} are skipped.
Moreover, Step~\ref{enum:DecisionProcedure:III.IV} has to be modified slightly.
The interpretations of uninterpreted predicate symbols need to be constructed in such a way that the candidate interpretation $\cB$ is $\hsimeq_\kappa$-uniform for $\kappa := \max\bigl( \{1\} \cup \{ |c| \bigm| c \in \bconsts(N) \} \bigr)$.

%%%%%%%%%%%%%%%%%%%%%%%%%%%%%%%%%%%%%%%%%%%%%%%%%%%%%%%%%%%%%%%%%%%%%%%%%%
\begin{theorem}\label{theorem:DecidabilityOfBsrBd}
	Satisfiability of finite \BsrBd\ clause sets is decidable.
\end{theorem}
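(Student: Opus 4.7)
The plan is to mirror the decision procedure developed for $\BsrSlr$, stripped of the handling of base-sort Skolem constants (which, by the standing assumption of this section, do not occur) and retrofitted with the region equivalence $\hsimeq_\kappa$ in place of $\sim_{\JA}$. First I would reduce the input clause set $N$ to normal form, and additionally clear denominators to make all rational constants integers; both reductions are equisatisfiability-preserving and were already justified in the text. Set $\kappa := \max\bigl(\{1\} \cup \{|c| \mid c \in \bconsts(N)\}\bigr)$, so that $\kappa$ depends only on $N$, and let $\lambda$ be the maximum arithmetic arity plus $m$ as in Lemma~\ref{lemma:ExistenceOfUniformModeslForBsrBd}.

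Next I would invoke Corollary~\ref{corollary:ExistenceOfUniformModeslForBsrBd}: if $N$ is satisfiable then it has a model $\cB$ that is $\hsimeq_\kappa$-uniform and interprets $\cS$ by a finite set. This reduces the satisfiability problem to a finite search: an $\hsimeq_\kappa$-uniform interpretation is determined, up to isomorphism, by (a) a choice of finite free-sort domain, (b) the interpretation of every free-sort constant symbol in that domain, and (c) for every uninterpreted predicate symbol $P$ and every tuple $\ba$ of free-sort domain elements of appropriate length, a subset of the $\hsimeq_\kappa$-equivalence classes over $\Real^m$ indicating on which classes $P(\ba, \cdot)$ holds. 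Since the rationals in $N$ are integers and bounded in absolute value by $\kappa$, the relation $\hsimeq_\kappa$ has only finitely many classes over $\Real^m$, so there are only finitely many such candidate $\cB$'s once we bound the cardinality of $\cS^\cB$; and by the proof of Lemma~\ref{lemma:ExistenceOfUniformModeslForBsrBd} it is enough to take $\cS^\cB \subseteq \fconsts(N)$ (a finite Herbrand-style domain).

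The decision procedure then proceeds nondeterministically as described just before Theorem~\ref{theorem:DecidabilityOfBsrSla}, modified as follows: skip Steps~(\ref{enum:DecisionProcedure:I}), (\ref{enum:DecisionProcedure:II}), (\ref{enum:DecisionProcedure:III.III}) because no base-sort Skolem constants are present; in Step~\ref{enum:DecisionProcedure:III.I} guess a subset $\cS^\cB \subseteq \fconsts(N)$; in Step~\ref{enum:DecisionProcedure:III.II} guess an assignment of each free-sort constant to an element of $\cS^\cB$; in Step~\ref{enum:DecisionProcedure:III.IV} guess, for each predicate symbol $P$ and each tuple $\ba \in (\cS^\cB)^{m'}$, a union of $\hsimeq_\kappa$-classes over $\Real^m$, and define $P^\cB$ accordingly, so that $\cB$ is $\hsimeq_\kappa$-uniform by construction. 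Finally, in Step~\ref{enum:DecisionProcedure:IV}, verify $\cB \models N$.

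The main obstacle, and the only nontrivial checking step, is this last verification: we must decide, for each clause $C = \Lambda \,\|\, \Gamma \to \Delta$ of $N$, whether $\cB \models C$ despite $C$ ranging over infinitely many real assignments. Here $\hsimeq_\kappa$-uniformity pays off. Every base-sort variable $x$ occurring in $C$ is forced by the definition of $\BsrBd$ (together with normal form, if $x$ appears in a difference constraint) to lie in a specific bounded region or, if unbounded, to be compared only via the constraints $x \rel c$ with $|c| \leq \kappa$; either way, the LA constraints in $\Lambda$ depend on the tuple of variable values only up to $\hsimeq_\kappa$-equivalence. Since $\cB$ also evaluates every predicate literal in $\Gamma \cup \Delta$ uniformly on each $\hsimeq_\kappa$-class, the truth value of $C$ under an assignment depends only on the $\hsimeq_\kappa$-class of the tuple of base-sort variable values and on the (finitely many) choices for the free-sort variables. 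Hence checking $\cB \models C$ reduces to a finite case distinction: for each $\hsimeq_\kappa$-class (chosen by a representative in $\hQ^{|\vars_\cR(C)|}$, where $\hQ$ is built from any fixed finite fractional-part set of size at least $\lambda+1$ in accordance with Lemma~\ref{lemma:ExistenceOfIndistinguishableFractionalParts}) and each free-sort assignment into $\cS^\cB$, evaluate $\Lambda$, $\Gamma$, $\Delta$ and test the clause. This test is effective, so the overall procedure is a decision procedure, and Corollary~\ref{corollary:ExistenceOfUniformModeslForBsrBd} guarantees completeness; soundness is immediate, since in case of success $\cB$ is an explicit model of $N$.
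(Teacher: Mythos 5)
Your proposal is correct and follows essentially the same route as the paper: normalization with integer constants, Corollary~\ref{corollary:ExistenceOfUniformModeslForBsrBd} for the existence of a $\hsimeq_\kappa$-uniform model with finite free sort, and the \BsrSlr\ decision procedure adapted by dropping the Skolem-constant steps and requiring $\hsimeq_\kappa$-uniformity in Step~\ref{enum:DecisionProcedure:III.IV} for $\kappa := \max\bigl(\{1\} \cup \{|c| \mid c \in \bconsts(N)\}\bigr)$. Your added explanation of why the final check $\cB \models N$ is effective (truth of a clause depends only on the $\hsimeq_\kappa$-class of the base-sort assignment, of which there are finitely many with computable representatives) is a detail the paper leaves implicit, and it is consistent with the argument in the proof of Lemma~\ref{lemma:ExistenceOfUniformModeslForBsrBd}.
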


%%%%%%%%%%%%%%%%%%%%%%%%%%%%%%%%%%%%%%%%%%%%%%%%%%%%%%%%%%%%%%%%%%%%%%%%%%%%%%%
%%%%%%%%%%%%%%%%%%%%%%%%%%%%%%%%%%%%%%%%%%%%%%%%%%%%%%%%%%%%%%%%%%%%%%%%%%%%%%%
%%%%%%%%%%%%%%%%%%%%%%%%%%%%%%%%%%%%%%%%%%%%%%%%%%%%%%%%%%%%%%%%%%%%%%%%%%%%%%%

\section{Formalizing reachability for timed automata}\label{section:ReachabilityForTimedAutomata}
%%%%%%%%%%%%%%%%%%%%%%%%%%%%%%%%%%%%%%%%%%%%%%%%%%%%%%%%%%%%%%%%%%%%

In this section we show that reachability for timed automata (cf.\ \cite{Alur1994J}) can be formalized using finite \BsrBd\ clause sets.
In what follows, we fix a finite sequence $\bx$ of pairwise distinct \emph{clock variables} that range over the reals.
For convenience, we occasionally treat $\bx$ as a set and use set notation such at $x \in \bx$, $|\bx|$, and $\cP(\bx)$.
A \emph{clock constraint over $\bx$} is a finite conjunction of LA constraints of the form $\texttt{true}$, $x \rel c$, or $x - y \rel c$, where $x,y \in \bx$, $c$ is an integer and $\rel {\in} \{<, \leq, =, \neq, \geq, >\}$.
We denote the \emph{set of all clock constraints over $\bx$} by $\CC(\bx)$.
A \emph{timed automaton} is a tuple
	$\< \Loc, \ell_0, \bx,$ $\<\inv_\ell\>_{\ell\in \Loc}, \cT \>$, 
where
	$\Loc$ is a finite set of locations;
	$\ell_0 \in \Loc$ is the initial location;
	$\<\inv_\ell\>_{\ell \in \Loc}$ is a family of clock constraints from $\CC(\bx)$ where each $\inv_\ell$ describes the \emph{invariant at location $\ell$};
	$\cT \subseteq \Loc \times \CC(\bx) \times \cP(\bx) \times \Loc$ is the location transition relation within the automaton,
		including guards with respect to clocks and the set of clocks that are being reset when the transition is taken.

%%%%%%%%%%%%%%%%%%%%%%%%%%%%%%%%%%%%%%%%%%%%%%%%%%%%%%%%%%%%%%%%%%%%%%%%%%%%%%%
Although the control flow of a timed automaton is described by finite means, the fact that clocks can assume uncountably many values yields an infinite state space, namely, $\Loc \times [0, \infty)^{|\bx|}$.
Transitions between states fall into two categories:\\
	\begin{tabular}{l@{\hspace{1ex}}l@{\hspace{1ex}}l}
		delay transitions
			&
			$\<\ell, \br\> \hook \<\ell, \br'\>$
			&
			with $\br'= \br+t$ for some $t\geq 0$ and \\&&
			$[\bx' \Mapsto \br'] \models \inv_\ell[\bx']$; and \\[0.5ex]
		location transitions
			&
			$\<\ell, \br\> \hook \<\ell', \br'\>$
			&
			for some $\< \ell, g, Z, \ell' \> \!\in\! \cT$ with $[\bx \Mapsto \br] \models g[\bx]$,  \\&&
			$\br' = \br[Z\mapsto 0]$, and $[\bx' \Mapsto \br'] \models \inv_{\ell'}[\bx']$.			 
	\end{tabular}\\		
	The operation $\br' := \br + t$ is defined by setting $r'_i := r_i + t$ for every $i$,
	and $\br' := \br[Z\mapsto 0]$ means that $r'_i = 0$ for every $x_i \in Z$ and $r'_i = r_i$ for every $x_i \not\in Z$.

%%%%%%%%%%%%%%%%%%%%%%%%%%%%%%%%%%%%%%%%%%%%%%%%%%%%%%%%%%%%%%%%%%%%%%%%%%%%%%%
In~\cite{Fietzke2012} Fietzke and Weidenbach present an encoding of reachability for a given timed automaton $\bA$ in terms of \emph{first-order logic modulo linear arithmetic}.
\begin{definition}[FOL(LA) encoding of a timed automaton,~\cite{Fietzke2012}]\label{definitionTAclauseSet}
	Given a timed automaton $\bA := \< \Loc, \ell_0, \bx, \<\inv_\ell\>_{\ell\in \Loc}, \cT \>$, the \emph{FOL(LA) encoding of $\bA$} is the following clause set $N_\bA$, where $\Reach$ is a $(1+|\bx|)$-ary predicate symbol:\\[0.5ex]
	\noindent
	\begin{tabular}{l@{\hspace{2ex}}l}
		the initial clause
		&
		$\bigwedge_{x \in \bx} x \equals 0 \;\;\wedge\;\; 
			\inv_{\ell_0}[\bx] \;\bigm\|\; \rightarrow \Reach(\ell_0, \bx)$; \\[0.5ex]
		delay clauses
		&
		$z\geq 0 \;\;\wedge\;\; 
			\bigwedge_{x \in \bx} x' \equals x + z \;\wedge\; 
			\inv_\ell[\bx']$ \\
			&
			\strut\hspace{28ex} $\bigm\|\; \Reach(\ell, \bx) \rightarrow \Reach(\ell, \bx')$ \\
		&
		for every location $\ell\in \Loc$;
	\end{tabular}	
	
	\noindent
	\begin{tabular}{l@{\hspace{2ex}}l}
		transition clauses
		&
		$g[\bx] \;\wedge\;
			\bigwedge_{x\in Z} x' \equals 0 \;\;\wedge
			\bigwedge_{x\in \bx \setminus Z} x' \equals x \;\wedge\;
		 	\inv_{\ell'}[\bx']$ \\
		 	&
		 	\strut\hspace{28ex} $\bigm\|\; \Reach(\ell, \bx) \rightarrow \Reach(\ell', \bx')$ \\
		&
		for every location transition $\<\ell, g, Z, \ell'\> \in \cT$.
	\end{tabular}	
\end{definition}
Corollary~4.3 in~\cite{Fietzke2012} states that for any  model of $N_\bA$, every location $\ell \in \Loc$, and every tuple $\br \in \Real^{|\bx|}$ we have $\cA, [\bx \Mapsto \br] \models \Reach(\ell,\bx)$ if and only if $\bA$ can reach the state $\<\ell, \br\>$ from its initial configuration.

Given any clock constraint $\psi \in \CC(\bx)$ and some location $\ell$, the timed automaton $\bA$ can reach at least one of the states $\<\ell, \br\>$ with $[\bx \Mapsto \br] \models \psi[\bx]$ from its initial configuration if and only if the clause set $N_\bA \cup \bigl\{ \psi[\bx] \,\|\, \Reach(\ell, \bx) \rightarrow \Box \bigr\}$ is unsatisfiable (cf.\ Proposition~4.4 in~\cite{Fietzke2012}).

Next, we argue that the passage of time does not have to be formalized as a synchronous progression of all clocks.
Instead, it is sufficient to require that clocks progress in such a way that their valuations do not drift apart excessively.

%%%%%%%%%%%%%%%%%%%%%%%%%%%%%%%%%%%%%%%%%%%%%%%%%%%%%%%%%%%%%%%%%%%%%%%%%%%%%%%
\begin{lemma}\label{lemma:DelayClausesWithDifferenceBounds}
	Consider any delay clause \\
		\centerline{$
			 	C :=\quad 
				z\geq 0 \;\;\wedge\;\; 
				\bigwedge_{x \in \bx} x' \equals x + z \;\;\wedge\;\; 
				\inv_\ell[\bx'] 
				\;\bigm\|\; \Reach(\ell, \bx) \rightarrow \Reach(\ell, \bx')
		$}
	that belongs to the FOL(LA) encoding of some timed automaton $\bA := \< \Loc, \ell_0, \bx,$ $\<\inv_\ell\>_{\ell\in \Loc}, \cT \>$.
	Let $\lambda$ be some positive integer.
	Let $M$ be a finite clause set corresponding to the following formula
		\begin{align*}
			~\\[-6ex]
			\varphi :=\;\; \bigwedge_{x_1, x_2 \in \bx}\; &\bigwedge_{-\lambda \leq k \leq \lambda} \bigl(x_1 - x_2 \leq k \;\leftrightarrow\; x'_1 - x'_2 \leq k \bigr) \\[-1.5ex]
				&\hspace{18ex} \wedge\; \bigl(x_1 - x_2 \geq k \;\leftrightarrow\; x'_1 - x'_2 \geq k \bigr) \\[0.25ex]
				&\wedge\; 
					\bigwedge_{x \in \bx} x' \geq x 
					\;\;\wedge\;\; 
					\inv_\ell[\bx'] 
					\;\;\bigm\|\;\; \Reach(\ell, \bx) \rightarrow \Reach(\ell, \bx') ~.
			~\\[-5ex]	
		\end{align*}
	For every $\simeq_\lambda$-uniform interpretation $\cA$ we have 
		$\cA, [\bx \Mapsto \br, \bx' \Mapsto \br'] \models C$
		for all tuples $\br, \br' \in [0, \lambda+1)^{|\bx|}$
	if and only if
		$\cA, [\bx \Mapsto \bq, \bx' \Mapsto \bq'] \models M$
		holds for all tuples $\bq, \bq' \in [0, \lambda+1)^{|\bx|}$.
\end{lemma}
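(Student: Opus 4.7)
The plan is to prove the two directions of the equivalence separately, with the hard direction relying on $\simeq_\lambda$-uniformity of $\cA$ together with a technical ``region matching'' step that produces a synchronous delay mimicking any desired region transition.

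I would first dispose of the direction $M \Rightarrow C$, which is essentially immediate: given any $\br, \br' \in [0, \lambda+1)^{|\bx|}$ satisfying the antecedent of $C$, so that $\br' = \br + z$ for some $z \geq 0$ with $\inv_\ell[\br']$, the differences are preserved exactly ($r'_i - r'_j = r_i - r_j$), we have $\br' \geq \br$ componentwise, and the invariant holds; hence the antecedent of $M$ is met and its consequent $\Reach(\ell, \br) \rightarrow \Reach(\ell, \br')$ transfers.

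For $C \Rightarrow M$ the idea is, given $\bq, \bq' \in [0, \lambda+1)^{|\bx|}$ satisfying the antecedent of $M$, to exhibit a single synchronous delay $t \geq 0$ with $\bq + t \in [0, \lambda+1)^{|\bx|}$ and $\bq + t \simeq_\lambda \bq'$. Granting such a $t$, $\simeq_\lambda$-uniformity of $\cA$ gives $\chi_\cA(\bq + t) = \chi_\cA(\bq')$ and hence $\cA \models \Reach(\ell, \bq + t) \leftrightarrow \Reach(\ell, \bq')$; moreover, every clock constraint with integer coefficients keeps its truth value across $\simeq_\lambda$-classes inside $[0, \lambda+1)^{|\bx|}$ (comparisons with integers of absolute value exceeding $\lambda$ are trivially decided by the domain bound), so $\inv_\ell[\bq + t]$ holds. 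Instantiating $C$ with $\bx \mapsto \bq$, $z \mapsto t$, $\bx' \mapsto \bq + t$ then yields $\Reach(\ell, \bq) \rightarrow \Reach(\ell, \bq + t)$, which chains with the uniformity equivalence to give the desired $\Reach(\ell, \bq')$.

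The hard part will be the construction of $t$. I would write $\alpha_i := \Fr(q_i)$, $\beta_i := \Fr(q'_i)$, and $k_i := \lfloor q'_i \rfloor - \lfloor q_i \rfloor \geq 0$, so that the requirement $\lfloor q_i + t \rfloor = \lfloor q'_i \rfloor$ confines $t$ to the intersection $\bigcap_i [k_i - \alpha_i, k_i + 1 - \alpha_i)$. Using the identity $k_i - k_j = [\beta_i < \beta_j] - [\alpha_i < \alpha_j]$ (valid because the biconditional in $\varphi$ forces $\lfloor q_i - q_j \rfloor = \lfloor q'_i - q'_j \rfloor$ for all integer thresholds in $[-\lambda, \lambda]$, and all relevant thresholds lie there because $\bq, \bq' \in [0, \lambda+1)^{|\bx|}$), a short case analysis on the signs of $\alpha_i - \alpha_j$ versus $\beta_i - \beta_j$ should show that the configurations which would either empty this intersection or force a wrong fractional ordering are precisely the ones ruled out by the \emph{integrality} part of the biconditional (an $\alpha$-tie paired with a $\beta$-strict-order, or vice versa). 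In all surviving cases one obtains $|(k_i - \alpha_i) - (k_j - \alpha_j)| < 1$ and an automatic match between the fractional orderings of $\bq + t$ and those of $\bq'$. The integrality constraint $\Fr(q_i + t) = 0 \Leftrightarrow \beta_i = 0$ is then handled by anchoring $t$ at the boundary $k_i - \alpha_i$ for some $i$ with $\beta_i = 0$, using the biconditional once more to show that all such boundaries coincide and miss the values $k_j - \alpha_j$ with $\beta_j > 0$. Finally, $t \geq 0$ and $\bq + t \in [0, \lambda+1)^{|\bx|}$ are inherited from $\bq' \geq \bq$ and $\bq' \in [0, \lambda+1)^{|\bx|}$. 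I expect this case-analytic bookkeeping to be the main technical effort of the proof.
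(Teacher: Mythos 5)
Your proposal is correct, but it is organized differently from the paper's proof. The paper first rewrites $C$ into an equivalent clause $C'$ by pushing the quantifier for $z$ into the premise (replacing the synchronous delay by the constraints $x_0 - x = x'_0 - x'$, $x' - x \geq 0$), and then proves a set identity (Lemma~\ref{lemma:DelaySetsViaDifferenceBounds}): for a region $S$, the set $\hS_1$ of points reachable from $S$ by an exact synchronous delay coincides with the set $\hS_2$ of points related to $S$ by the bounded-difference biconditionals of $\varphi$; the hard inclusion constructs, for a given target $\bq'$, a delayed \emph{source} $\bq$ inside the region of the given point with $\bq' = \bq + t$ exactly, so the invariant $\inv_\ell$ is only ever evaluated at $\bq'$ itself and uniformity is needed only for $\Reach$ on the source region. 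You instead work \emph{forward}: from the given $\bq$ you construct a synchronous delay $t$ with $\bq + t \simeq_\lambda \bq'$, and then use uniformity twice, once for $\Reach$ and once (implicitly) to transfer $\inv_\ell$ from $\bq'$ to $\bq+t$; that transfer is an extra obligation your route incurs, but it is sound for arbitrary integer constants, since within $[0,\lambda+1)^{|\bx|}$ the relation $\simeq_\lambda$ fixes integer parts, zero-fraction status and fractional orderings, which determine every atom $x \rel c$ and $x - y \rel c$. The combinatorial core is the same in both proofs (matching integer parts and fractional orderings under preservation of the bounded difference constraints), and your key identity $k_i - k_j = [\beta_i < \beta_j] - [\alpha_i < \alpha_j]$ together with the integrality observation does rule out exactly the bad configurations, so the interval intersection $\bigcap_i [k_i - \alpha_i,\, k_i + 1 - \alpha_i)$ is nonempty and any interior choice reproduces the fractional ordering of $\bq'$. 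Two small points you should make explicit when writing this out: in the anchored case ($\beta_i = 0$) you must check that $k_i - \alpha_i$ is the maximum of the left endpoints and that $k_i = 0$ forces $q_i = q'_i \in \mathbb{Z}$, hence $t = 0$; and in the unanchored case $t \geq 0$ is not literally ``inherited'' but must be arranged by choosing $t$ in $\bigl(\max(0, \max_j (k_j - \alpha_j)),\, \min_j (k_j + 1 - \alpha_j)\bigr)$ minus the finitely many left endpoints, which is possible because every $k_j + 1 - \alpha_j$ is positive. The easy direction ($M \Rightarrow C$) you give is the same as in the paper and needs no uniformity.
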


Our approach to decidability of \BsrBd-satisfiability exploits the observation that the allowed constraints cannot distinguish between tuples from one and the same equivalence class with respect to $\hsimeq_\lambda$, which induces only a finite number of such classes.
Decidability of the reachability problem for timed automata can be argued in a similar fashion, using a suitable equivalence relation on clock valuations~\cite{Alur1994J}.
We refer to the induced classes of indistinguishable clock valuations over $\Real^{|x|}$, which are induced by a given timed automaton $\bA = \< \Loc, \ell_0, \bx,$ $\<\inv_\ell\>_{\ell\in \Loc}, \cT \>$, as \emph{TA regions} of $\bA$.

In order to decide reachability for $\bA$, it is sufficient to consider a bounded subspace of $\Real^{|\bx|}$.
More precisely, there exists a computable integer $\lambda$, depending on the number of clocks $|\bx|$ and the constants occurring in clock constraints in $\bA$, such that any valuation $\br$ of $\bA$'s clocks can be projected to some valuation $\br' \in [0, \lambda+1)^{|\bx|}$ that $\bA$ cannot distinguish from $\br$ (see Section~\ref{section:appendix:TAReachability}). 
In the subspace $[0, \lambda+1)^{|\bx|}$, $\bA$'s TA regions coincide with (finite unions of) equivalence classes with respect to $\simeq_\lambda$.
In fact, the quotient $[0, \lambda+1)^{|\bx|}/_{\simeq_\lambda}$ constitutes a refinement of the division of $[0, \lambda+1)^{|\bx|}$ into TA regions.
Since any pair $\<\ell, \br\>$ with $\br \in R$ for some TA region $R$ is reachable if and only if all pairs $\<\ell, \br'\>$ with $\br \in R$ are reachable, any minimal  model $\cA$ of the encoding $N_\bA$ is $\simeq_\lambda$-uniform (where minimality of $\cA$ refers to the minimality of the set $\Reach^\cA$ with respect to set inclusion).
This is why Lemma~\ref{lemma:DelayClausesWithDifferenceBounds} may focus on $\simeq_\lambda$-uniform models.

%%%%%%%%%%%%%%%%%%%%%%%%%%%%%%%%%%%%%%%%%%%%%%%%%%%%%%%%%%%%%%%%%%%%%%%%%%%%%%%
\begin{theorem}\label{theorem:TAReachabilityInBdrBd}
	The reachability problem for a given timed automaton can be expressed in terms of satisfiability of a finite \BsrBd\ clause set.
\end{theorem}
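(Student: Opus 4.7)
The strategy is to transform the FOL(LA) encoding $N_\bA$ of Definition~\ref{definitionTAclauseSet} into an equisatisfiable \BsrBd\ clause set $N^{\BsrBd}_\bA$. In $N_\bA$, the initial and transition clauses already use only atoms of the forms $x \equals 0$, $x' \equals x$, $x \rel c$, and $x - y \rel c$ (the latter two coming from the clock constraints $g$ and $\inv_\ell$); all of these fit the \BsrBd\ syntax. The only non-\BsrBd\ constraint in $N_\bA$ is the atom $x' \equals x + z$ occurring in the delay clauses, which involves three variables at once and is not expressible as a difference constraint with a constant right-hand side.

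I would proceed as follows. First, fix the computable integer $\lambda$ supplied by Section~\ref{section:appendix:TAReachability} so that reachability of $\bA$ is already faithfully captured on the bounded subspace $[0,\lambda+1)^{|\bx|}$. Second, augment every clause with bound atoms $0 \leq x$ and $x < \lambda + 1$ for each occurring clock variable, primed or unprimed; these bounds simultaneously restrict the implicit universal quantification to the bounded subspace and provide the four companion bounds that Definition~\ref{definition:BsrBdSyntax} requires for every difference constraint $x - y \rel c$. Third, replace every delay clause by the clause set $M$ supplied by Lemma~\ref{lemma:DelayClausesWithDifferenceBounds} (instantiated with the same $\lambda$); the atoms of $M$ are only of the form $x_1 - x_2 \rel k$, $x' \geq x$, and the invariant constraints on $\bx'$, each of which already is \BsrBd-compatible. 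Given a reachability query $(\ell, \psi)$ with $\psi \in \CC(\bx)$, I finally append the goal clause $\psi[\bx] \wedge \bigwedge_{x \in \bx} (0 \leq x \wedge x < \lambda+1) \,\|\, \Reach(\ell, \bx) \to \Box$. The resulting set $N^{\BsrBd}_\bA$ is finite, effectively computable from $\bA$, and its membership in \BsrBd\ can be checked directly against Definition~\ref{definition:BsrBdSyntax}.

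For correctness, I would show that $N^{\BsrBd}_\bA$ together with the goal clause is unsatisfiable iff $\bA$ can reach a state $\<\ell, \br\>$ satisfying $\psi$, which by Proposition~4.4 of~\cite{Fietzke2012} reduces to equisatisfiability of $N_\bA \cup \{\text{goal}\}$ and $N^{\BsrBd}_\bA \cup \{\text{goal}\}$. The key argument invokes the observation in the paragraph preceding the theorem: the minimal model $\cA$ of $N_\bA$ whose $\Reach$-extension coincides with the set of reachable states of $\bA$ is necessarily $\simeq_\lambda$-uniform on $[0,\lambda+1)^{|\bx|}$, since the automaton cannot distinguish $\simeq_\lambda$-equivalent clock valuations. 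Combined with the projection from $\Real^{|\bx|}$ to $[0,\lambda+1)^{|\bx|}$ that preserves reachability, this reduces the verification to $\simeq_\lambda$-uniform models on the bounded subspace, which is exactly the setting of Lemma~\ref{lemma:DelayClausesWithDifferenceBounds}; the latter then yields the equivalence between each original delay clause and its \BsrBd\ replacement $M$. The main obstacle I expect is the careful coupling of the projection-and-uniformization argument with the equisatisfiability claim: one must verify that bounding every clause by $0 \leq x < \lambda+1$ does not drop any reachable behavior, and that passing to a $\simeq_\lambda$-uniform minimal model over the bounded subspace does not spuriously add $\Reach$-tuples that the automaton does not actually reach.
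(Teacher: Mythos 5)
Your plan follows essentially the same route as the paper's own argument: fix the computable bound $\lambda$ (the paper makes it concrete as $\lambda = |\bx|\cdot k$), replace each delay clause by the set $M$ from Lemma~\ref{lemma:DelayClausesWithDifferenceBounds}, conjoin the bounds $0 \leq x \wedge x < \lambda+1$ to every constraint so that the clauses become syntactically \BsrBd, and justify faithfulness via the $\simeq_\lambda$-uniformity of (minimal or $\hsimeq_{\lambda+1}$-uniform) models together with Proposition~4.4 of~\cite{Fietzke2012}. The proof obligations you flag at the end are exactly the points the paper discharges by the observation that TA regions in the bounded subspace are unions of $\simeq_\lambda$-classes and that any $\hsimeq_{\lambda+1}$-uniform model is $\simeq_\lambda$-uniform there, so no genuinely different ideas are needed.
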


%%%%%%%%%%%%%%%%%%%%%%%%%%%%%%%%%%%%%%%%%%%%%%%%%%%%%%%%%%%%%%%%%%%%%%%%%%%%%%%
%%%%%%%%%%%%%%%%%%%%%%%%%%%%%%%%%%%%%%%%%%%%%%%%%%%%%%%%%%%%%%%%%%%%%%%%%%%%%%%
%%%%%%%%%%%%%%%%%%%%%%%%%%%%%%%%%%%%%%%%%%%%%%%%%%%%%%%%%%%%%%%%%%%%%%%%%%%%%%%

\newpage

\appendix

\section{Appendix}
%%%%%%%%%%%%%%%%%%%%%%%%%%%%%%%%%%%%%%%%%%%%%%%%%%%%%%%%%%%%%%%%%%%%%%%%%%%%%%%
%%%%%%%%%%%%%%%%%%%%%%%%%%%%%%%%%%%%%%%%%%%%%%%%%%%%%%%%%%%%%%%%%%%%%%%%%%%%%%%
%%%%%%%%%%%%%%%%%%%%%%%%%%%%%%%%%%%%%%%%%%%%%%%%%%%%%%%%%%%%%%%%%%%%%%%%%%%%%%%

%%%%%%%%%%%%%%%%%%%%%%%%%%%%%%%%%%%%%%%%%%%%%%%%%%%%%%%%%%%%%%%%%%%%%%%%%%%%%%%
%%%%%%%%%%%%%%%%%%%%%%%%%%%%%%%%%%%%%%%%%%%%%%%%%%%%%%%%%%%%%%%%%%%%%%%%%%%%%%%
\subsection{Details Concerning Section~\ref{section:RamseyTheory}}
%%%%%%%%%%%%%%%%%%%%%%%%%%%%%%%%%%%%%%%%%%%%%%%%%%%%%%%%%%%%%%%%%%%%%%%%%%%%%%%
\subsubsection*{Proof of Lemma~\ref{lemma:RamseyPermute:Two}}

We start with two auxiliary results.

%%%%%%%%%%%%%%%%%%%%%%%%%%%%%%%%%%%%%%%%%%%%%%%%%%%%%%%%%%%%%%%%%%%%%%%%%%
\begin{lemma}\label{lemma:RamseyBasic:Two}
	Let $n, m, p > 0$ be positive integers and let $\chi: \Real^{m p} \to \cC$ be an arbitrary coloring. 
	Let $R_1, \ldots, R_p$ be \emph{sufficiently large} but finite subsets of $\Real$.
	
	There exist subsets $Q_1 \subseteq R_1, \ldots, Q_p \subseteq R_p$, each of cardinality $n$, such that for all ascending tuples 
		$\bar{r}_1 \in Q_1^m, \ldots, \bar{r}_p \in Q_p^m$ 
	 the colors $\chi(\bar{r}_1, \ldots, \bar{r}_p)$ are the same.	
\end{lemma}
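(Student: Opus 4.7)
\textbf{Proof plan for Lemma~\ref{lemma:RamseyBasic:Two}.} The approach is induction on $p$, reducing the $p$-set product Ramsey statement to the single-set statement of Lemma~\ref{lemma:RamseyBasic:One}. The base case $p=1$ is exactly Lemma~\ref{lemma:RamseyBasic:One}: apply it to $\chi$ and $R_1$ to obtain a $Q_1 \subseteq R_1$ of cardinality $n$ on which all ascending $m$-tuples share one color.

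For the inductive step, assume the statement holds for $p-1$ sets (with the same $n,m$) and let sufficiently large finite $R_1, \ldots, R_p \subseteq \Real$ be given. The key idea is to first uniformize the last block of coordinates by folding the dependence on the first $p-1$ blocks into the color set. Concretely, let $F$ be the (finite) set of functions from $R_1^m \times \cdots \times R_{p-1}^m$ into $\cC$, and define an auxiliary coloring $\chi^\ast : \Real^m \to F$ by
\[
	\chi^\ast(\bar{s})\,(\bar{r}_1, \ldots, \bar{r}_{p-1}) \;:=\; \chi(\bar{r}_1, \ldots, \bar{r}_{p-1}, \bar{s})
\]
for every ascending $m$-tuple $\bar{s} \in \Real^m$ (on non-ascending tuples and on arguments outside $R_1^m \times \cdots \times R_{p-1}^m$ we may set the value arbitrarily). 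Since $F$ is finite, Lemma~\ref{lemma:RamseyBasic:One} applied to $\chi^\ast$ and $R_p$—provided $R_p$ is taken sufficiently large relative to $|F|$ and $n$—yields a subset $Q_p \subseteq R_p$ of cardinality $n$ on which all ascending $m$-tuples receive the same $\chi^\ast$-value. Unfolding the definition, this means that for every fixed choice of ascending $\bar{r}_1 \in R_1^m, \ldots, \bar{r}_{p-1} \in R_{p-1}^m$, the value $\chi(\bar{r}_1, \ldots, \bar{r}_{p-1}, \bar{r}_p)$ does not depend on which ascending $\bar{r}_p \in Q_p^m$ is chosen.

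Now fix any ascending tuple $\bar{q}_p \in Q_p^m$ and define $\chi' : \Real^{m(p-1)} \to \cC$ by $\chi'(\bar{r}_1, \ldots, \bar{r}_{p-1}) := \chi(\bar{r}_1, \ldots, \bar{r}_{p-1}, \bar{q}_p)$. By the induction hypothesis, provided $R_1, \ldots, R_{p-1}$ are sufficiently large, there exist subsets $Q_1 \subseteq R_1, \ldots, Q_{p-1} \subseteq R_{p-1}$, each of cardinality $n$, such that $\chi'$ is constant on all combinations of ascending tuples from the $Q_i^m$. Combining this with the uniformity obtained in the previous step yields that $\chi(\bar{r}_1, \ldots, \bar{r}_p)$ is the same for every choice of ascending tuples $\bar{r}_1 \in Q_1^m, \ldots, \bar{r}_p \in Q_p^m$, which is exactly what we need.

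The only delicate point is the propagation of the phrase ``sufficiently large.'' In the inductive step, the size required of $R_p$ depends on $|\cC|^{|R_1|^m \cdots |R_{p-1}|^m}$ and on the required cardinality $n$, so one must choose the $R_i$ in a coordinated fashion: first settle on the sizes needed of $R_1, \ldots, R_{p-1}$ for the inductive hypothesis, and then take $R_p$ large enough to make Lemma~\ref{lemma:RamseyBasic:One} applicable with color set $F$. Since the statement only asserts existence of a bound, no explicit numerical tracking is needed; the main obstacle is simply making this dependency precise in the write-up.
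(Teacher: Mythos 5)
Your proposal is correct and follows essentially the same route as the paper: induction on $p$, uniformizing the last block via Lemma~\ref{lemma:RamseyBasic:One} with an enlarged (finite) color set, then fixing one ascending tuple from $Q_p^m$ and applying the induction hypothesis to the first $p-1$ blocks. The only cosmetic difference is that you package the dependence on the first $p-1$ blocks as a function-valued coloring, whereas the paper phrases the same step as an equivalence relation on $R_p^m$ whose classes serve as colors.
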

\begin{proof}[adaptation of the proof of Theorem 5 on page 113 in \cite{Graham1990}]~\\
	As in the proof of Lemma \ref{lemma:RamseyBasic:One}, we assume $n \geq m$.
	We proceed by induction on $p \geq 1$.
	
	The case $p=1$ is covered by Lemma \ref{lemma:RamseyBasic:One}.

	Suppose $p > 1$. We define an equivalence relation $\sim_p$ on the set $R_p^m$ so that $\bar{s} \sim_p \bar{s}'$  holds if and only if for all ascending tuples
		$\bar{r}_1 \in R_1^m, \ldots, \bar{r}_{p-1} \in R_{p-1}^m$
	 the colors 
	 	$\chi\bigl(\bar{r}_1, \ldots, \bar{r}_{p-1}, \bar{s}\bigr)$
	 and 
	 	$\chi\bigl(\bar{r}_1, \ldots, \bar{r}_{p-1}, \bar{s}'\bigr)$
	are identical.
	This equivalence relation partitions $R_p^m$ into at most $|\cC|^{{{|R_1|}\choose{m}}\cdot \ldots \cdot {{|R_{p-1}|}\choose{m}}}$ equivalence classes. It thus induces a coloring of $\chi' : R_p^m \to \cC'_p$ with one color for each equivalence class.
	
	By virtue of Lemma \ref{lemma:RamseyBasic:One}, we can construct a subset $Q_p \subseteq R_p$ with $n$ elements such that all ascending $m$-tuples $\bar{r} \in Q_p^m$ are colored identically by $\chi'$.
	
	Let the coloring $\chi''$ be defined by $\chi''(\bar{r}_1, \ldots, \bar{r}_{p-1}) := \chi(\bar{r}_1, \ldots, \bar{r}_{p-1}, \bar{s})$ for some fixed ascending $m$-tuple $\bar{s} \in Q_{p}^m$. 
	By induction, we find subsets $Q_1 \subseteq R_1, \ldots, Q_{p-1} \subseteq R_{p-1}$, each containing $n$ elements, such that for all ascending $m$-tuples $\bar{r}_1 \in R_1^m, \ldots, \bar{r}_{p-1} \in R_{p-1}^m$ the colors $\chi''(\bar{r}_1, \ldots, \bar{r}_{p-1})$ are identical.
	
	But then the definition of $\chi''$ and $\chi'$ entail that the sets $Q_1, \ldots, Q_p$ satisfy the requirements posed by the lemma.
\end{proof}

%%%%%%%%%%%%%%%%%%%%%%%%%%%%%%%%%%%%%%%%%%%%%%%%%%%%%%%%%%%%%%%%%%%%%%%%%%
Recall that we write $[k]$ to address the set $\{1, \ldots, k\}$ for any positive integer $k > 0$.

\begin{lemma}\label{lemma:RamseyPermute:One}
	Let $n, m, p > 0$ be positive integers, let $\kappa \geq 0$ be a nonnegative integer and let $\chi: \Real^m \to \cC$ be an arbitrary coloring.
	Let $R_1, \ldots, R_p$ be \emph{sufficiently large} but finite subsets of $\Real$.
	Let $q_1, \ldots, q_\kappa$ be fixed reals.
	Let $\varrho : [m] \to [p+\kappa]\times[m]$ be some mapping such that $\varrho(i) = \<k,\ell\>$ with $k > p$ implies $\ell = 1$.
	
	There exist subsets $Q_1 \subseteq R_1, \ldots, Q_p \subseteq R_p$, each of cardinality $n$, such that for all ascending tuples 
		\begin{align*}
			\bar{r}_1 = \<r_{\<1,1\>}, \ldots, r_{\<1,m\>}\> &\in Q_1^{m}\\
				\vdots\quad&~\\
			\bar{r}_p = \<r_{\<p,1\>}, \ldots, r_{\<p,m\>}\> &\in Q_p^{m}
		\end{align*}
	and the reals $r_{\<p+1,1\>} := q_1, \dots, r_{\<p+\kappa,1\>} := q_\kappa$	
	the colors $\chi(\bar{r}_{\varrho(1)}, \ldots, \bar{r}_{\varrho(m)})$ are the same.	
\end{lemma}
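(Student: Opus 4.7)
The plan is to reduce Lemma~\ref{lemma:RamseyPermute:One} to the already established Lemma~\ref{lemma:RamseyBasic:Two} by absorbing the permutation rule $\varrho$ and the fixed reals $q_1, \ldots, q_\kappa$ into a suitably defined new coloring of $mp$-tuples.

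More concretely, I would first define an auxiliary coloring $\chi' : \Real^{mp} \to \cC$ as follows. Given any tuple $\<\bar{r}_1, \ldots, \bar{r}_p\>$ with $\bar{r}_k = \<r_{\<k,1\>}, \ldots, r_{\<k,m\>}\>$ for every $k \in [p]$, I extend the family of entries $r_{\<k,\ell\>}$ to the index range $[p+\kappa] \times [m]$ by setting $r_{\<p+i,1\>} := q_i$ for every $i \in [\kappa]$ (these are constants independent of the tuple). Then I put
\[
\chi'(\bar{r}_1, \ldots, \bar{r}_p) \;:=\; \chi\bigl( r_{\varrho(1)}, \ldots, r_{\varrho(m)} \bigr).
\]
The hypothesis on $\varrho$, namely that $\varrho(i) = \<k,\ell\>$ with $k > p$ forces $\ell = 1$, is exactly what makes this definition sound: whenever $\varrho$ points to one of the added indices $k > p$, it may only access the unique (and well-defined) component $r_{\<k,1\>} = q_{k-p}$.

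Next, I would apply Lemma~\ref{lemma:RamseyBasic:Two} directly to $\chi'$ with the same data $n, m, p$ and the same sufficiently large finite sets $R_1, \ldots, R_p$. This yields subsets $Q_1 \subseteq R_1, \ldots, Q_p \subseteq R_p$, each of cardinality $n$, such that all tuples $\<\bar{r}_1, \ldots, \bar{r}_p\>$ where each $\bar{r}_k \in Q_k^m$ is ascending receive a single color under $\chi'$. Unfolding the definition of $\chi'$, this is precisely the conclusion we want: for any such choice of ascending tuples, the value $\chi\bigl(r_{\varrho(1)}, \ldots, r_{\varrho(m)}\bigr)$ is independent of the tuples.

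Since the reduction to Lemma~\ref{lemma:RamseyBasic:Two} is essentially a repackaging, no real obstacle arises in the argument; the only point that deserves explicit mention is the correctness of the definition of $\chi'$, which is why the side condition on $\varrho$ is imposed. The actual Ramsey-theoretic content is entirely delegated to Lemma~\ref{lemma:RamseyBasic:Two} (and ultimately to Lemma~\ref{lemma:RamseyBasic:One}). This also makes clear how one would then derive Lemma~\ref{lemma:RamseyPermute:Two}: iterate the construction over the finite enumeration $\varrho_1, \ldots, \varrho_L$, shrinking $Q_1, \ldots, Q_p$ one $\varrho_j$ at a time while keeping the sizes bounded from below by $n$, using that the $R_i$ (and hence each intermediate $Q_i$) can be taken sufficiently large.
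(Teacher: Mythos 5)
Your proposal is correct and is essentially the paper's own proof: both define the auxiliary coloring $\chi'$ on $mp$-tuples by plugging in the fixed reals $q_1,\ldots,q_\kappa$ for the indices $\<p+i,1\>$ and reading off $\chi(r_{\varrho(1)},\ldots,r_{\varrho(m)})$, and then invoke Lemma~\ref{lemma:RamseyBasic:Two}. Your remark on why the side condition on $\varrho$ makes $\chi'$ well defined is a fine (and accurate) elaboration of the same argument.
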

\begin{proof}
	We again assume $n \geq m$.
	We define a new coloring $\chi' : \Real^{m p} \to \cC$ by
		\[\chi'(r_{\<1,1\>}, \ldots, r_{\<1,m\>}, \ldots, r_{\<p,1\>}, \ldots, r_{\<p,m\>}) := \chi(r_{\varrho(1)}, \ldots, r_{\varrho(m)})\]
	for every $m p$-tuple $\<\bar{r}_1, \ldots, \bar{r}_p\> \in R_1^m \times \ldots \times R_p^m$ with ascending $\bar{r}_1, \ldots, \bar{r}_p$. 
	By Lemma~\ref{lemma:RamseyBasic:Two}, there exist subsets $Q_1 \subseteq R_1, \ldots, Q_p \subseteq R_p$, each with $n$ elements, such that for all ascending tuples $\bar{r}_1 \in Q_1^m, \ldots, \bar{r}_p \in Q_p^m$	 the colors $\chi'(\bar{r}_1, \ldots, \bar{r}_p)$ are the same.
	By definition of $\chi'$, the sets $Q_1, \ldots, Q_p$ satisfy the requirements of the lemma.
\end{proof}

%%%%%%%%%%%%%%%%%%%%%%%%%%%%%%%%%%%%%%%%%%%%%%%%%%%%%%%%%%%%%%%%%%%%%%%%%%%%%%%
Now we have the right tools at hand to prove Lemma~\ref{lemma:RamseyPermute:Two}

\begin{lemma*}
	Let $n, m, p > 0$ be positive integers, let $K \geq 0$ be a nonnegative integer and let $\chi: \Real^m \to \cC$ be an arbitrary coloring.
	Let $R_1, \ldots, R_p$ be \emph{sufficiently large} but finite subsets of $\Real$.
	Let $q_1, \ldots, q_K$ be fixed reals.
	Let $\varrho_1, \ldots, \varrho_L$ be some enumeration of all mappings $\varrho_j : [m] \to [p+K]\times[m]$ for which $\varrho_{j}(i) = \<k,\ell\>$ with $k > p$ entails $\ell = 1$.
	There exist subsets $Q_1 \subseteq R_1, \ldots, Q_p \subseteq R_p$, each of cardinality $n$, such that for all ascending tuples 
		$\bar{r}_1, \bar{r}'_1 \in Q_1^m, \ldots, \bar{r}_p, \bar{r}'_p \in Q_p^m$
	and the reals $r_{\<p+1,1\>} := q_1, \dots, r_{\<p+K,1\>} := q_K$	
	and every index $j$, $1\leq j\leq L$, we have \\
		\centerline{$\chi\bigl(r_{\varrho_j(1)}, \ldots, r_{\varrho_j(m)}\bigr) = \chi\bigl(r'_{\varrho_j(1)}, \ldots, r'_{\varrho_j(m)}\bigr)$.}
\end{lemma*}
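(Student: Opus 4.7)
The plan is to iterate Lemma~\ref{lemma:RamseyPermute:One} over the finite list of mappings $\varrho_1, \ldots, \varrho_L$. That earlier lemma already handles a single mapping $\varrho$: starting from \emph{sufficiently large} finite sets $R_1, \ldots, R_p$, it produces subsets $Q_i \subseteq R_i$ of any prescribed cardinality on which all ascending tuples $\bar{r}_1 \in Q_1^m, \ldots, \bar{r}_p \in Q_p^m$ receive one and the same color under the induced mapping. Since $L$ is a fixed finite number, bounded by $(pm + K)^m$, finitely many successive applications suffice.

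Concretely, I would construct a descending chain
$(R_1, \ldots, R_p) = (Q_1^{(0)}, \ldots, Q_p^{(0)}) \supseteq (Q_1^{(1)}, \ldots, Q_p^{(1)}) \supseteq \ldots \supseteq (Q_1^{(L)}, \ldots, Q_p^{(L)})$
by defining, for each $j \in \{1, \ldots, L\}$, the tuple $(Q_1^{(j)}, \ldots, Q_p^{(j)})$ to be the subsets returned by Lemma~\ref{lemma:RamseyPermute:One} when applied to $Q_1^{(j-1)}, \ldots, Q_p^{(j-1)}$, to the fixed reals $q_1, \ldots, q_K$, and to the mapping $\varrho_j$. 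The final sets $Q_i := Q_i^{(L)}$ are the candidates, each of cardinality $n$.

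To verify the claim, fix any index $j \in [L]$ together with two collections of ascending tuples $\bar{r}_1, \bar{r}'_1 \in Q_1^m, \ldots, \bar{r}_p, \bar{r}'_p \in Q_p^m$. Because $Q_i \subseteq Q_i^{(j)}$ by construction, all these tuples also lie in $\bigl(Q_i^{(j)}\bigr)^m$. The defining property of the $j$-th stage then forces both $\chi\bigl(r_{\varrho_j(1)}, \ldots, r_{\varrho_j(m)}\bigr)$ and $\chi\bigl(r'_{\varrho_j(1)}, \ldots, r'_{\varrho_j(m)}\bigr)$ to agree with the single color that $\chi$ assigns to all ascending tuples from $Q_1^{(j)}, \ldots, Q_p^{(j)}$ under $\varrho_j$, and hence to coincide with one another. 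Crucially, later shrinking steps can only restrict the set of admissible tuples, so they never destroy the uniformity obtained at earlier stages.

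The main obstacle is simply the bookkeeping required to ensure that after $L$ successive shrinkings each $Q_i^{(L)}$ still contains at least $n$ elements. If $g(n)$ denotes the size threshold implicit in Lemma~\ref{lemma:RamseyPermute:One} for obtaining output sets of cardinality $n$, then working back from stage $L$ to stage $0$ we need $|Q_i^{(j-1)}| \geq g\bigl(|Q_i^{(j)}|\bigr)$ for every $j \in [L]$. Taking the \emph{sufficiently large} threshold in the present lemma to be the $L$-fold iterate of $g$ evaluated at $n$ therefore suffices; since $L$ is determined by $m$, $p$ and $K$, and $g$ is computable from the parameters of Lemma~\ref{lemma:RamseyPermute:One}, the resulting bound is effective.
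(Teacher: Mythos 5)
Your proposal is correct and follows essentially the same route as the paper: the paper also builds a descending chain of subsets by applying Lemma~\ref{lemma:RamseyPermute:One} once for each mapping $\varrho_1, \ldots, \varrho_L$, keeping the intermediate sets of sufficient cardinality, and takes the final sets of the chain as $Q_1, \ldots, Q_p$. Your additional remarks on why the uniformity survives later shrinkings and on the iterated size threshold just make explicit what the paper leaves implicit.
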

\begin{proof}
	We again assume $n \geq m$.
	We construct sequences of subsets $S_{\ell,0} \supseteq \ldots \supseteq S_{\ell,L}$ for every $\ell$, $1\leq \ell \leq p$, such that
	\begin{itemize}
		\item $S_{\ell,0} = R_\ell$, and
		\item $S_{\ell, j+1} \subseteq S_{\ell, j}$ is a subset of \emph{sufficient cardinality} that is constructed by application of Lemma \ref{lemma:RamseyPermute:One} for $\varrho := \varrho_{j+1}$, i.e.\  for all ascending tuples 
		\begin{align*}
			\<s_{\<1,1\>}, \ldots, s_{\<1,m\>}\> &\in S_{1, j+1}^m\\
				\vdots\quad&~\\
			\<s_{\<p,1\>}, \ldots, s_{\<p,m\>}\> &\in S_{p, j+1}^m
		\end{align*}	
	 the colors $\chi(\bar{s}_{\varrho_{j+1}(1)}, \ldots, \bar{s}_{\varrho_{j+1}(m)})$ are the same.
	\end{itemize}
	Then the sets $S_{1, L}, \ldots, S_{p, L}$ are the sought $Q_1, \ldots, Q_p$.
\end{proof}

%%%%%%%%%%%%%%%%%%%%%%%%%%%%%%%%%%%%%%%%%%%%%%%%%%%%%%%%%%%%%%%%%%%%%%%%%%%%%%%
%%%%%%%%%%%%%%%%%%%%%%%%%%%%%%%%%%%%%%%%%%%%%%%%%%%%%%%%%%%%%%%%%%%%%%%%%%%%%%%
\subsection{Details Concerning Section~\ref{section:DecidabilityBsrSla}}
%%%%%%%%%%%%%%%%%%%%%%%%%%%%%%%%%%%%%%%%%%%%%%%%%%%%%%%%%%%%%%%%%%%%%%%%%%%%%%%

%%%%%%%%%%%%%%%%%%%%%%%%%%%%%%%%%%%%%%%%%%%%%%%%%%%%%%%%%%%%%%%%%%%%%%%%%%%%%%%
\subsubsection*{Proof of Lemma~\ref{lemma:ExistenceOfUniformModeslForBsrSla}}

\begin{lemma*}
	Let $\lambda$ be the maximal number of distinct base-sort variables in any single clause in $N$ but at least $m$, i.e.\ $\lambda := \max\bigl( \{m\} \cup \bigl\{ |\vars(C) \cap V_\cR| \bigm| C\in N \bigr\} \bigr)$. 
	Let $\cA$ be a  model of $N$.
	Let $J_0, \ldots, J_\kappa$ be an enumeration of all open intervals in $\JA$ so that $J_0 < \ldots < J_\kappa$. 
	Suppose we are given a collection of finite sets $Q_0, \ldots, Q_\kappa$ possessing the following properties, 
		\begin{enumerate}[label=(\roman{*}), ref=(\roman{*})]
			\item\label{enum:UniformModelConstruction:One} $Q_i \subseteq J_i$ and $|Q_i| = \lambda$ for every $i$, $0 \leq i \leq \kappa$.
			\item\label{enum:UniformModelConstruction:Two} Let $Q := \bigcup_i Q_i \cup \{c^\cA \mid c \in \bconsts(N) \}$. 
				For all $\JA$-equivalent $m$-tuples $\bq, \bq' \in Q^m$ we have $\chi_\cA(\bq) = \chi_\cA(\bq')$.
		\end{enumerate}
	Then we can construct a  model $\cB$ of $N$ that is $\JB$-uniform and that interprets the free sort $\cS$ as a finite set.
	Moreover, $\cB$ interprets all constant symbols in $N$ exactly as $\cA$ does.	
\end{lemma*}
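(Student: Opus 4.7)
The plan is to follow the sketch given in the main text, filling in the details: first establish Claim~I on canonical representatives in $Q^\mu$, then lift the predicate interpretations of $\cA$ to $\cB$ via the coloring-transfer prescribed by $Q$, and finally verify that the lift preserves satisfaction of every clause in $N$.

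First I would prove Claim~I: for every $\mu \leq \lambda$ and every tuple $\br = \<r_1, \ldots, r_\mu\> \in \Real^\mu$ there exists a tuple $\bq \in Q^\mu$ with $\bq \sim_\JA \br$. For each coordinate $r_i$ let $J(i) \in \JA$ be the unique interval containing $r_i$. If $J(i) = [c^\cA, c^\cA]$ is a point interval, set $q_i := c^\cA$. Otherwise $J(i)$ is one of the open intervals $J_\alpha$, and I assign $q_i$ from $Q_\alpha$. Since at most $\mu \leq \lambda = |Q_\alpha|$ coordinates of $\br$ land in any single $J_\alpha$, I can pick distinct elements of $Q_\alpha$ for them respecting the internal ordering of the corresponding $r_i$. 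The resulting tuple $\bq \in Q^\mu$ then satisfies conditions (i) and (ii) of the definition of $\sim_\JA$, giving $\bq \sim_\JA \br$.

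Next I would define $\cB$ exactly as described in the sketch: take $\cS^\cB := \hcS$, set $c^\cB := c^\cA$ for every constant symbol $c$ (which immediately yields $\JB = \JA$, $\sim_\JB{=}\sim_\JA$, and establishes the ``Moreover'' clause of the statement), and for each uninterpreted predicate symbol $P$ and each pair $\<\ba, \br\> \in \hcS^{m'} \times \Real^m$ use Claim~I with $\mu := m$ to select some $\bq(\br) \in Q^m$ with $\bq(\br) \sim_\JA \br$, then declare $\<\ba, \br\> \in P^\cB$ iff $\<\ba, \bq(\br)\> \in P^\cA$. The $\JB$-uniformity of $\cB$ is then immediate: whenever $\br \sim_\JA \br'$, the canonical representatives $\bq(\br), \bq(\br')$ both lie in $Q^m$ and are $\sim_\JA$-equivalent, so by hypothesis~(ii) their $\chi_\cA$-colors coincide, hence $\chi_\cB(\br) = \chi_\cB(\br')$.

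The remaining task is to show $\cB \models N$. Fix a clause $C = \Lambda \,\|\, \Gamma \to \Delta$ of $N$, a variable assignment $\beta$, and let $x_1, \ldots, x_\ell$ with $\ell \leq \lambda$ enumerate the base-sort variables of $C$. Apply Claim~I to $\<\beta(x_1), \ldots, \beta(x_\ell)\>$ to obtain $\<q_1, \ldots, q_\ell\> \in Q^\ell$ in the same $\sim_\JA$-class, and define $\hbeta_C(x_i) := q_i$, $\hbeta_C(u) := \beta(u)$ for free-sort variables, and arbitrarily elsewhere. Since $\cA \models N$, we have $\cA, \hbeta_C \models C$, so some literal of $C$ is made true under $\hbeta_C$; I would propagate this to $\cB, \beta$ by case distinction: (a)~if a constraint atom in $\Lambda$ fails under $\hbeta_C$ then it also fails under $\beta$, because after normalization every atom in $\Lambda$ compares variables to constants or to one another and is invariant under $\sim_\JA$ (here we use $c^\cB = c^\cA$ for all Skolem and rational constants); (b)~an equation on free-sort terms is evaluated identically under $\hbeta_C$ and $\beta$, since $\cA$ and $\cB$ agree on $\fconsts(N)$; (c)~for a predicate atom $P(\bs)$ in $\Gamma \cup \Delta$, the free-sort arguments yield the same tuple $\ba$ under either assignment, while the base-sort arguments yield tuples $\br_{\hbeta_C} \in Q^m$ and $\br_\beta \in \Real^m$ that are $\sim_\JA$-equivalent to each other and to $\bq(\br_\beta)$; hypothesis~(ii) then gives $\<\ba, \br_{\hbeta_C}\> \in P^\cA$ iff $\<\ba, \bq(\br_\beta)\> \in P^\cA$, which by definition of $P^\cB$ is equivalent to $\<\ba, \br_\beta\> \in P^\cB$.

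The main obstacle is case~(a): one must verify that normalization really does reduce every $\Lambda$-atom to a form whose truth depends only on the $\sim_\JA$-class of the variable assignment and on the interpretation of constants, so in particular the ground atoms arising from $\Ndef$-style handling of Skolem constants must remain well-behaved under the switch from $\cA$ to $\cB$. Once this invariance is in place, the case analysis above yields $\cB, \beta \models C$ for every $C \in N$ and every $\beta$, so $\cB \models N$; the finiteness of $\cS^\cB = \hcS$ follows from the finiteness of $\fconsts(N)$, completing the proof.
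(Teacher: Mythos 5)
Your proposal is correct and follows essentially the same route as the paper's proof: the same coordinate-wise construction of representatives in $Q^\mu$ (Claim~I), the same definition of $\cB$ with $c^\cB := c^\cA$ and color transfer via chosen $Q$-representatives, and the same $\hbeta_C$-based case distinction over $\Lambda$, $\Gamma$, $\Delta$. The only difference is cosmetic: for predicate atoms you appeal to hypothesis~(ii) directly to link $\br_{\hbeta_C}$ with $\bq(\br_\beta)$, whereas the paper routes the same fact through the $\JB$-uniformity of $\cB$; and the ``obstacle'' you flag for the ground/constraint case is already settled by the normal form together with $c^\cB = c^\cA$, exactly as you indicate.
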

\begin{proof}~
	\begin{description}
		\item \underline{Claim I:} Let $\mu$, $1\leq \mu \leq, \lambda$ be a positive integer.  
			For each of the finitely many equivalence classes in $\Real^\mu/_{\sim_\JA}$, we find a representative lying in $Q^\mu$.
			
		\item \underline{Proof:} Given an equivalence class $[\br]_{\sim_\JA} \in \Real^\mu/_{\sim_\JA}$, we define the following ascending sequences for every $i$, $0\leq i\leq \kappa$,
			\begin{itemize}
				\item $s_{i,1} < \ldots < s_{i,k_i}$, where the values $s_{i,j}$ are the reals in $\br$ that stem from $J_i$, enumerated in ascending order, and
				\item $q_{i,1} < \ldots < q_{i,\lambda}$, which comprises all reals in $Q_i$ in ascending order.
			\end{itemize}
			In every $Q_i \subseteq J_i$ we find $\lambda \geq \mu \geq k_i$ distinct reals.
			
			We can now construct a tuple $\bq' \in [\br]_{\sim_\JA} \cap Q^\mu$ by setting
				\[q'_\ell := \begin{cases}
						c^\cA		&\text{if $r_\ell = c^\cA$ for some $c \in \bconsts(N)$},\\
						q_{i,j}		&\text{if $r_\ell = s_{i,j}$ for some $i$, $0 \leq i \leq \kappa$, and some $j$, $1 \leq j \leq k_i$,}
					\end{cases}
				\]
			for every $\ell$, $1 \leq \ell \leq \mu$.	
			Clearly, $\br$ and $\bq'$ are $\JA$-equivalent.	\hfill $\Diamond$			
	\end{description}

	We construct the  interpretation $\cB$ as follows, where $\hcS$ denotes the set $\{ \fa \in \cS^\cA \mid \text{$\fa = c^\cA$ for some}$ $c \in \fconsts(N) \}$:
	\begin{itemize}
		\item $\cS^\cB := \hcS$,
		\item for every constant symbol $c$ occurring in $N$ we set $c^\cB := c^\cA$,
		\item for every uninterpreted predicate symbol $P$ occurring in $N$ and for all tuples $\ba \in \hcS^{m'}$ and $\br \in \Real^m$ we pick some tuple $\bq \in Q^m$ which is $\JA$-equivalent to $\br$, and we define $P^\cB$ so that
			\[ \<\ba, \br\> \in P^\cB \quad\text{if and only if}\quad \<\ba,\bq\> \in P^\cA ~.\]
	\end{itemize}
	
	\begin{description}
		\item \underline{Claim II:} The  interpretation $\cB$ is $\JB$-uniform.
		\item \underline{Proof:} By construction of $\cB$ and by requirement \ref{enum:UniformModelConstruction:Two}.	\hfill $\Diamond$
	\end{description}
	
	We next show $\cB\models N$. 
	Consider any clause $C = \Lambda \;\|\; \Gamma \to \Delta$ in $N$ and let $\beta$ be any variable assignment ranging over $\cS^\cB \cup \Real$. 
	Starting from $\beta$, we derive a special variable assignment $\hbeta_C$ as follows. 
	Let $x_1, \ldots, x_{\lambda_C}$ be an enumeration of all base-sort variables in $C$.
	By Claim~I, there is some tuple $\<q_1, \ldots, q_{\lambda_C}\> \in Q^{\lambda_C}$ such that $\<q_1, \ldots, q_{\lambda_C}\> \sim_\JA \bigl\< \beta(x_1), \ldots, \beta(x_{\lambda_C}) \bigr\>$.
	We define $\hbeta_C(x_i) := q_i$ for every $i$, $1 \leq i \leq \lambda_C$.
	For all other base-sort variables, $\hbeta_C$ can be defined arbitrarily.
	For every free-sort variable $u$ we set $\hbeta_C(u) := \beta(u)$.
	We observe 
	\centerline{$(*)\qquad
		\bigl\< \beta(x_1), \ldots, \beta(x_{\lambda_C}) \bigr\> \sim_\JB \bigl\< \hbeta_C(x_1), \ldots, \hbeta_C(x_{\lambda_C}) \bigr\>$.}

	As $\cA$ is a  model of $N$, we get $\cA,\hbeta_C \models C$. By case distinction on why $\cA,\hbeta_C \models C$ holds, we can infer $\cB,\beta\models C$.
		\begin{description}
			\item Case $\cA, \hbeta_C \not\models t\rel t'$ for some atomic LA constraint $t\rel t'$ in $\Lambda$, where $t, t'$ are constant symbols or base-sort variables. 
				Since $\cB$ and $\cA$ interpret constant symbols in the same way and due to $(*)$, we conclude $\cB, \beta \not\models t \rel t'$.

			\item Case $\cA, \hbeta_C \not\models t\approx t'$ for some free-sort equation $t\approx t' \in \Gamma$. 
				In this case, $t$ and $t'$ are either variables or constant symbols of the free sort, which means they do not contain subterms of the base sort. 
				Since $\cB$ and $\cA$ behave identical on free-sort constant symbols and $\beta(u) = \hbeta_C(u)$ for every variable $u\in V_\cS$, we have $\cB, \beta \not\models t\approx t'$.

			\item Case $\cA, \hbeta_C \models t\approx t'$ for some $t\approx t' \in \Delta$. Analogous to the above case, we get $\cB, \beta \models t\approx t'$.
			
			\item Case $\cA, \hbeta_C \not\models P(t'_1, \ldots, t'_{m'}, t_1, \ldots, t_m)$ for some non-equational atom\\ $P(t'_1, \ldots, t'_{m'}, t_1, \ldots, t_m) \in \Gamma$. 
				This translates to \\
					\centerline{$\bigl\< \cA(\hbeta_C)(t'_1), \ldots, \cA(\hbeta_C)(t'_{m'}), \cA(\hbeta_C)(t_1), \ldots, \cA(\hbeta_C)(t_m)\bigr\> \not\in P^\cA$.}
				By definition of $\hbeta_C$, we have $\cA(\hbeta_C)(t_j) \in Q$ for every $j$, $1 \leq j \leq m$.
				Therefore, and by construction of $\cB$, \\
					\centerline{$\bigl\< \cA(\hbeta_C)(t'_1), \ldots, \cA(\hbeta_C)(t'_{m'}), \cA(\hbeta_C)(t_1), \ldots, \cA(\hbeta_C)(t_m) \bigr\> \not\in P^\cB$.}
				We observe the following properties:
				\begin{itemize}
					\item We have $\cA(\hbeta_C)(t'_j) = \cB(\beta)(t'_j)$ for every $j$, $1 \leq j \leq m'$, due to the definition of $\cB$ and $\hbeta_C$.
					\item Since $\cA$ and $\cB$ interpret constant symbols in the same way, we get $\cA(\hbeta_C)(t_j) = \cB(\hbeta_C)(t_j)$ for every $j$, $1 \leq j \leq m$.
					\item The definition of $\hbeta_C$ entails that $\bigl\<\cB(\hbeta_C)(t_1), \ldots, \cB(\hbeta_C)(t_m)\bigr\>$ and\\
						 $\bigl\<\cB(\beta)(t_1), \ldots, \cB(\beta)(t_m)\bigr\>$ are $\JB$-equivalent. 
				\end{itemize}				
				The first two observations imply\\
					\centerline{$\bigl\< \cB(\beta)(t'_1), \ldots, \cB(\beta)(t'_{m'}), \cB(\hbeta_C)(t_1), \ldots, \cB(\hbeta_C)(t_m) \bigr\> \not\in P^\cB$.}
				Due to this result and the fact that $\cB$ is $\JB$-uniform (Claim II), the third observation leads to
					$\bigl\< \cB(\beta)(t'_1), \ldots, \cB(\beta)(t'_{m'}), \cB(\beta)(t_1), \ldots, \cB(\beta)(t_m) \bigr\> \not\in P^\cB$.
					
				Put differently, we have $\cB, \beta \not\models P(t'_1, \ldots, t'_{m'}, t_1, \ldots, t_m)$.
													
			\item Case $\cA, \hbeta_C \models P(t'_1, \ldots, t'_{m'}, t_1, \ldots, t_m)$ for some non-equational atom
				$P(t'_1, \ldots, t'_{m'}, t_1, \ldots, t_m)\in\Delta$. 
				Analogous to the previous case we can infer $\cB, \beta \models P(t'_1, \ldots,$ $t'_{m'}, t_1, \ldots, t_m)$.
		\end{description}		
	Altogether, we have shown $\cB\models N$.	
\end{proof}

%%%%%%%%%%%%%%%%%%%%%%%%%%%%%%%%%%%%%%%%%%%%%%%%%%%%%%%%%%%%%%%%%%%%%%%%%%%%%%%
\subsubsection*{Proof of Lemma~\ref{lemma:IndistuingishableReals}}

As an auxiliary result, we first show a correspondence between the equivalence classes with respect to\ $\sim_\JA$ and mappings \mbox{$\varrho : [m] \to [|\JA|] \times [m]$}.
\begin{lemma}\label{lemma:Correspondence}
	Let $\cA$ be an  interpretation.
	Let $\{q_1\}, \ldots, \{q_\kappa\}$ be an enumeration of all point intervals in $\JA$ such that $q_1 < \ldots < q_\kappa$
	and let $J_1, \ldots, J_{\kappa+1}$ be an enumeration of all open intervals in $\JA$ such that $J_1 < \ldots < J_{\kappa+1}$. 
	Let $S \in \Real^m/_{\sim_\JA}$ be any equivalence class with respect to $\sim_\JA$. There is a mapping $\varrho : [m] \to [|\JA|] \times [m]$ such that
	\begin{enumerate}[label=(\roman{*}), ref=(\roman{*})]
		\item\label{enum:lemmaCorrespondence:One} whenever $\varrho(i) = \<k,\ell\>$ with $k > \kappa + 1$ then $\ell = 1$, and
		\item\label{enum:lemmaCorrespondence:Two} for all ascending tuples 
			\begin{align*}
				\br_1 &= \<r_{\<1,1\>}, \ldots, r_{\<1,m\>}\> \in J_1^m, \\
				&\;\;\vdots \\
				\br_{\kappa+1} &= \<r_{\<\kappa+1,1\>}, \ldots, r_{\<\kappa+1,m\>}\> \in J_{\kappa+1}^m, \\
				\br_{\kappa+2} &= \<r_{\<\kappa+2,1\>}\> = \<q_1\> \\
				&\;\;\vdots \\
				\br_{2K+1} &= \<r_{\<2K+1,1\>}\> = \<q_\kappa\>
			\end{align*}
			we have $\<r_{\varrho(1)}, \ldots, r_{\varrho(m)}\> \in S$, and
			
		\item\label{enum:lemmaCorrespondence:Three} for every tuple $\<s_1, \ldots, s_m\> \in S$ there exist ascending tuples $\br_1, \ldots, \br_{2K+1}$ defined as in \ref{enum:lemmaCorrespondence:Two} such that $\<s_1, \ldots, s_m\> = \<r_{\varrho(1)}, \ldots, r_{\varrho(m)}\>$.
	\end{enumerate}
\end{lemma}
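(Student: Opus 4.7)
The plan is to extract the mapping $\varrho$ directly from an arbitrary representative of~$S$, and then to check conditions~\ref{enum:lemmaCorrespondence:One}--\ref{enum:lemmaCorrespondence:Three} using the fact that the equivalence relation $\sim_\JA$ depends only on (a)~which element of $\JA$ each coordinate lies in, and (b)~the relative order of coordinates lying in the \emph{same} open interval (coordinates in point intervals being forced to be equal, and coordinates in distinct open intervals having their order already fixed by the ordering of the intervals).

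First I would pick any representative $\<s_1,\ldots,s_m\> \in S$. For each position $i\in[m]$, I would locate the unique $J\in\JA$ containing $s_i$. If $J = J_k$ is the $k$-th open interval (with $1\leq k\leq\kappa+1$), I would look at the subset $I_k := \{j\in[m] \mid s_j \in J_k\}$ and set $\varrho(i) := \<k,\ell\>$ where $\ell$ is the rank of $s_i$ in the ascending enumeration of $\{s_j \mid j\in I_k\}$. If $J = \{q_h\}$ is the $h$-th point interval, I would set $\varrho(i) := \<\kappa+1+h,\,1\>$. This immediately yields condition~\ref{enum:lemmaCorrespondence:One}, since point-interval indices receive only second component $1$.

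For condition~\ref{enum:lemmaCorrespondence:Two}, let $\br_1,\ldots,\br_{2\kappa+1}$ be any collection of tuples of the prescribed form and set $\bs' := \<r_{\varrho(1)},\ldots,r_{\varrho(m)}\>$. By construction, for every $i$ the coordinate $s'_i$ lies in the same element of $\JA$ as $s_i$, so part~(i) of the definition of $\sim_\JA$ is satisfied. For part~(ii), consider any $i,j$: if $s_i$ and $s_j$ lie in different members of $\JA$, then so do $s'_i$ and $s'_j$, and the order between them is dictated by the ordering of the intervals, hence agrees; if they lie in the same open interval $J_k$, then by the ranking convention and the ascending property of $\br_k$ we have $s_i<s_j$ iff $s'_i<s'_j$; if they lie in the same point interval, then $s_i=s_j$ and $s'_i=s'_j$. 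Hence $\bs'\sim_\JA\<s_1,\ldots,s_m\>$, i.e.\ $\bs'\in S$.

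For condition~\ref{enum:lemmaCorrespondence:Three}, given any $\<t_1,\ldots,t_m\>\in S$, I would build ascending tuples $\br_1,\ldots,\br_{\kappa+1}$ as follows: for each open interval $J_k$, let $I_k=\{j_1<\ldots<j_{n_k}\}$ be the positions assigned to $J_k$ by $\varrho$ (equivalently, the positions with $s_{j}\in J_k$, which by $\bs\sim_\JA\<t_1,\ldots,t_m\>$ coincide with the positions with $t_j\in J_k$); put the ascending sequence of values $t_{j_1}<\ldots<t_{j_{n_k}}$ as the first $n_k$ entries of $\br_k$, and fill up the remaining $m-n_k$ entries with any strictly larger values from $J_k$ to obtain an ascending $m$-tuple. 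The point-interval tuples are forced to be $\<q_h\>$. By the ranking convention, $r_{\varrho(i)}=t_i$ for every $i$. Taking $\br_{\kappa+2}=\<q_1\>,\ldots,\br_{2\kappa+1}=\<q_\kappa\>$ completes the decomposition.

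I do not expect any serious obstacle; the main care needed is in the bookkeeping between the two coordinates of the pair $\varrho(i)=\<k,\ell\>$ and in checking that a single $\varrho$ (depending only on the class $S$, not on the chosen representative) simultaneously witnesses \ref{enum:lemmaCorrespondence:Two} and \ref{enum:lemmaCorrespondence:Three}. Representative-independence follows from the fact that $\sim_\JA$-equivalent tuples determine the same sets $I_k$ and the same ranking of coordinates within each $I_k$.
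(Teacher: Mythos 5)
Your proposal is correct and follows essentially the same route as the paper's proof: fix a representative of $S$, define $\varrho(i)$ from the interval containing the $i$-th coordinate together with its rank among the coordinates in that interval, and then verify \ref{enum:lemmaCorrespondence:Two} and \ref{enum:lemmaCorrespondence:Three} by the same rank-preservation argument (including the padding of the tuples $\br_k$ to length $m$). The only difference is that you spell out the verification and the representative-independence that the paper leaves as ``easy to verify.''
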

\begin{proof}
	Let $\bs'$ be some representative taken from $S$, i.e.\ $S = [\bs']_{\sim_\JA}$. 
	Given $\bs'$, we construct $2K+1$ possibly empty sequences $\bs''_k := \<s''_{k,1}, s''_{k,2}, \ldots\>$, 
	such that every $\bs''_k$ with $k \leq \kappa+1$ lists all elements of $\bs'$ in ascending order that lie in $J_k$, 
	and every $\bs''_k$ with $k > \kappa+1$ contains exactly the value $q_{k-\kappa-1}$.
	We construct the mapping $\varrho$ in such a way that $\varrho(i) = \<k, \ell\>$ holds if and only if $s'_i = s''_{k,\ell}$.
	
	Let $\br_1, \ldots, \br_{2K+1}$ be tuples of reals chosen in accordance with requirement \ref{enum:lemmaCorrespondence:Two}. It is easy to verify that $\br_\varrho := \<r_{\varrho(1)}, \ldots, r_{\varrho(m)}\>$ is $\JA$-equivalent to $\bs'$, i.e.\ $\br_\varrho$ belongs to $S$.
	
	In order to show \ref{enum:lemmaCorrespondence:Three}, we construct the tuples $\br_1, \ldots, \br_{2K+1}$ from $\<s_1, \ldots, s_m\>$ in the same way we have constructed the $\bs''_k$ from $\bs'$ when constructing $\varrho$ in the beginning of this proof.
	In addition, we pad them with suitable values from the respective intervals $J_k$ to reach the length $m$ for every tuple.
\end{proof}

%%%%%%%%%%%%%%%%%%%%%%%%%%%%%%%%%%%%%%%%%%%%%%%%%%%%%%%%%%%%%%%%%%%%%%%%%%%%%%%
We can now prove Lemma~\ref{lemma:IndistuingishableReals}.

\begin{lemma*}
	Let $\cA$ be an  interpretation.
	Let $\{q_1\}, \ldots, \{q_\kappa\}$ be an enumeration of all point intervals in $\JA$ such that $q_1 < \ldots < q_\kappa$
	and let $J_1, \ldots, J_{\kappa+1}$ be an enumeration of all open intervals in $\JA$ such that $J_1 < \ldots < J_{\kappa+1}$. 
	Let $\lambda$ be a positive integer.
	There is a collection of finite sets $Q_1, \ldots, Q_{\kappa+1}$ such that the following requirements are met.
	\begin{enumerate}[label=(\roman{*}), ref=(\roman{*})]
		\item\label{enum:lemmaIndistuingishableReals:One} For every $i$, $1 \leq i \leq \kappa+1$, it holds $Q_i \subseteq J_i$ and $|Q_i| = \lambda$.
		\item\label{enum:lemmaIndistuingishableReals:Two} Let $Q := \bigcup_i Q_i \cup \{q_1, \ldots, q_\kappa\}$. For all $\JA$-equivalent $m$-tuples $\bs, \bs' \in Q^m$ we have $\chi_\cA(\bs) = \chi_\cA(\bs')$.
	\end{enumerate}
\end{lemma*}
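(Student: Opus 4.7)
The plan is to derive the lemma as a direct specialization of Lemma~\ref{lemma:RamseyPermute:Two}. I would set $n := \lambda$, $p := \kappa+1$, $\chi := \chi_\cA$, $R_i := J_i$ for $1 \leq i \leq \kappa+1$, and reuse the given reals $q_1, \ldots, q_\kappa$ as the fixed constants. Since every $J_i$ is an infinite open interval, the ``sufficiently large'' hypothesis is automatic. Let $\varrho_1, \ldots, \varrho_L$ enumerate all mappings $[m] \to [2\kappa+1] \times [m]$ such that $\varrho_j(i) = \langle k, \ell\rangle$ with $k > \kappa+1$ forces $\ell = 1$. Lemma~\ref{lemma:RamseyPermute:Two} then delivers subsets $Q_1 \subseteq J_1, \ldots, Q_{\kappa+1} \subseteq J_{\kappa+1}$, each of cardinality $\lambda$, which immediately secures requirement~(i).

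For requirement~(ii), the missing link is an auxiliary correspondence (cf.\ Lemma~\ref{lemma:Correspondence}), which assigns to every $\sim_\JA$-equivalence class $S \in \Real^m/_{\sim_\JA}$ a mapping $\varrho$ of exactly the above shape, with the property that any tuple in $S$ can be written as $\langle r_{\varrho(1)}, \ldots, r_{\varrho(m)}\rangle$ for suitable ascending $m$-tuples $\br_1 \in J_1^m, \ldots, \br_{\kappa+1} \in J_{\kappa+1}^m$ together with the fixed singletons $\langle q_1\rangle, \ldots, \langle q_\kappa\rangle$. For two $\sim_\JA$-equivalent $\bs, \bs' \in Q^m$, let $S$ be their common class and $\varrho = \varrho_j$ its associated mapping. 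Every component of $\bs$ that falls into $J_k$ already lies in $Q_k$; listing the distinct such values in ascending order and padding with further elements of $Q_k$ (possible because $|Q_k| = \lambda$, assuming $\lambda \geq m$) produces an ascending $m$-tuple $\br_k \in Q_k^m$ that realises the $J_k$-part of $\bs$. Doing the same for $\bs'$ yields $\br'_k \in Q_k^m$, and the conclusion of Lemma~\ref{lemma:RamseyPermute:Two} applied to $\varrho_j$ gives $\chi_\cA(\bs) = \chi_\cA(\bs')$.

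The main obstacle is really the correspondence lemma itself, which must be established as a separate bookkeeping step: for each coordinate $i \in [m]$ one records which member of $\JA$ the $i$-th entry of a representative belongs to and, among the entries lying in the same open interval, its rank by ascending value. The asymmetry ``$k > \kappa+1 \Rightarrow \ell = 1$'' is dictated by the fact that each point interval $\{q_j\}$ supplies only the single value $q_j$, while every open interval $J_k$ may supply up to $m$ distinct values to a representative; this is precisely the restriction built into the index set of Lemma~\ref{lemma:RamseyPermute:Two}. Once this encoding and the accompanying padding argument are in place, the invocation of Lemma~\ref{lemma:RamseyPermute:Two} is entirely routine.
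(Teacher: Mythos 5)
Your proof follows the paper's argument essentially verbatim: the paper likewise obtains $Q_1,\ldots,Q_{\kappa+1}$ from Lemma~\ref{lemma:RamseyPermute:Two} with $n:=\lambda$, $p:=\kappa+1$, $\chi:=\chi_\cA$, $R_i:=J_i$, and establishes requirement~(ii) via exactly the correspondence you describe (the paper's Lemma~\ref{lemma:Correspondence}), which encodes each $\sim_\JA$-equivalence class by interval membership and rank within each open interval. Your explicit observation that the padding must be chosen inside $Q_k$ (using $\lambda\geq m$, which holds in all applications of the lemma) is a detail the paper leaves implicit, but the approach is the same.
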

\begin{proof}
	Let the sets $Q_1, \ldots, Q_{\kappa+1}$ be the $Q_1, \ldots, Q_p$ that we obtain by virtue of Lemma \ref{lemma:RamseyPermute:Two} when we set $n := \lambda$, $p := \kappa+1$, $\chi := \chi_\cA$, $R_1 := J_1, \ldots, R_{\kappa+1} := J_{\kappa+1}$.

	Requirement \ref{enum:lemmaIndistuingishableReals:One} is obviously satisfied for $Q_1, \ldots, Q_{\kappa+1}$.
	
	By Lemma \ref{lemma:Correspondence}, the equivalence class to which any two given $\JA$-equivalent tuples $\bs, \bs'$ belong corresponds to some mapping $\varrho : [m] \to [2K+1]\times[m]$. Part \ref{enum:lemmaCorrespondence:Two} of Lemma \ref{lemma:Correspondence} states that $\bs$ can be written in the form $\<r_{\varrho(1)}, \ldots, r_{\varrho(m)}\>$ for appropriate values $r_{\<k,\ell\>}$ and $\bs'$ can be represented in the form $\<r'_{\varrho(1)}, \ldots, r'_{\varrho(m)}\>$ for appropriate $r'_{\<k,\ell\>}$. We then know by Lemma  \ref{lemma:RamseyPermute:Two} that $\chi_\cA(\bs) = \chi_\cA(\<r_{\varrho(1)}, \ldots, r_{\varrho(m)}\>) = \chi_\cA(\<r'_{\varrho(1)}, \ldots, r'_{\varrho(m)}\>) = \chi_\cA(\bs')$.
\end{proof}

%%%%%%%%%%%%%%%%%%%%%%%%%%%%%%%%%%%%%%%%%%%%%%%%%%%%%%%%%%%%%%%%%%%%%%%%%%%%%%%
%%%%%%%%%%%%%%%%%%%%%%%%%%%%%%%%%%%%%%%%%%%%%%%%%%%%%%%%%%%%%%%%%%%%%%%%%%%%%%%
\subsection{Details Concerning Section~\ref{section:DecidabilityBsrBd}}
%%%%%%%%%%%%%%%%%%%%%%%%%%%%%%%%%%%%%%%%%%%%%%%%%%%%%%%%%%%%%%%%%%%%%%%%%%%%%%%

%%%%%%%%%%%%%%%%%%%%%%%%%%%%%%%%%%%%%%%%%%%%%%%%%%%%%%%%%%%%%%%%%%%%%%%%%%%%%%%
\subsubsection*{Proof of Lemma~\ref{lemma:ExistenceOfUniformModeslForBsrBd}}

\begin{lemma*}
	Let $\lambda := \max\bigl( \{m\} \cup \bigl\{ |\vars(C) \cap V_\cR| \bigm| C \in N \bigr\} \bigr)$. 
	Let $\cA$ be a  model of $N$ and let $\kappa := \max\bigl( \{1\} \cup \{ |c| \bigm| c \in \bconsts(N) \} \bigr)$.
	Suppose we are given a finite set $Q \subset [0,1)$ of cardinality $\lambda+1$ such that $0 \in Q$ and for all tuples $\br, \bs \in \hQ^m$,
		$\br \simeq_\kappa \bs$ entails $\chi_\cA(\br) = \chi_\cA(\bs)$,
	where \\
		\centerline{$\hQ := \bigl\{ q + k \bigm| q \in Q \text{ and } k \in \{-\kappa-1, \ldots, 0, \ldots, \kappa \} \bigr\}$.}
	Then we can construct a model $\cB$ of $N$ that is $\hsimeq_\kappa$-uniform and that interprets the free sort $\cS$ as a finite set.
\end{lemma*}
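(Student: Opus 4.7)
The proof follows the template of Lemma~\ref{lemma:ExistenceOfUniformModeslForBsrSla}, with $\sim_\JA$ replaced by $\hsimeq_\kappa$ and representative tuples drawn from $\hQ$ instead of $Q$. Concretely, I would define $\cB$ by setting $\cS^\cB := \hcS$ (where $\hcS := \{c^\cA \mid c \in \fconsts(N)\}$), copying $\cA$'s constant interpretations, and, for every uninterpreted predicate symbol $P$ and every pair $\<\ba, \br\> \in \hcS^{m'} \times \Real^m$, declaring $\<\ba, \br\> \in P^\cB$ iff $\<\ba, \bq\> \in P^\cA$ for a fixed canonical representative $\bq \in \hQ^m$ with $\br \hsimeq_\kappa \bq$ supplied by Claim~I. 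Since $\hsimeq_\kappa$ induces only finitely many equivalence classes on $\Real^m$ and $\cS^\cB$ is finite, the resulting $\cB$ is both $\hsimeq_\kappa$-uniform and finitely-based on $\cS$.

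The first technical step is to establish Claim~I: for every positive integer $\mu \leq \lambda$ and every $\br \in \Real^\mu$, there is $\bq \in \hQ^\mu$ with $\br \hsimeq_\kappa \bq$ obeying the stated separation of fractional parts. I would partition the indices of $\br$ into three groups---indices $i$ with $r_i < -\kappa$, with $r_i > \kappa$, and with $-\kappa \leq r_i \leq \kappa$---and allocate fractional parts from $Q$ in three separated layers: the unbounded-below indices receive the smallest fractional part $0 \in Q$ together with integer part $-\kappa-1$; the unbounded-above indices receive the next fractional layer with integer part $\kappa$; the bounded indices receive the remaining fractional parts, arranged to mirror the fractional ordering of $\br$'s bounded components (assigning $\Fr = 0$ precisely when $\Fr(r_i) = 0$), and paired with integer parts $\lfloor r_i \rfloor \in \{-\kappa, \ldots, \kappa\}$. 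Since $|Q| = \lambda + 1 \geq \mu + 1$, enough fractional slots are available to realise every ordering that can occur. The canonicity of this choice is exactly what delivers Claim~II: any two $\hsimeq_\kappa$-equivalent tuples $\br, \br'$ receive representatives $\bq, \bq'$ that are $\simeq_\kappa$-equivalent (same floors throughout $(-\kappa-1, \kappa+1)^m$ and the same fractional ordering by construction), and the hypothesis on $Q$ then yields $\chi_\cA(\bq) = \chi_\cA(\bq')$, hence $\chi_\cB(\br) = \chi_\cB(\br')$.

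Finally, to verify $\cB \models N$, I would reuse the argument from Lemma~\ref{lemma:ExistenceOfUniformModeslForBsrSla} almost verbatim: pick $C = \Lambda \,\|\, \Gamma \to \Delta$ in $N$ and any assignment $\beta$, derive $\hbeta_C$ by routing the base-sort variables $x_1, \ldots, x_{\lambda_C}$ of $C$ through Claim~I to a tuple in $\hQ^{\lambda_C}$ that is $\hsimeq_\kappa$-equivalent to $\<\beta(x_1), \ldots, \beta(x_{\lambda_C})\>$, and leave free-sort variables untouched. Since $\cA \models C$ gives $\cA, \hbeta_C \models C$, a case split on the satisfying literal transfers this to $\cB, \beta \models C$: free-sort literals are immediate because $\cB$ and $\cA$ agree on free-sort constants and $\hbeta_C, \beta$ agree on free-sort variables; uninterpreted atoms transfer by Claim~II ($\cB$ is $\hsimeq_\kappa$-uniform); and LA atoms in $\Lambda$ transfer because $\hsimeq_\kappa$-equivalent tuples satisfy the same BsrBd constraints. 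I expect this last point to be the main obstacle. It requires a careful per-atom verification: for $x \rel c$ with $|c| \leq \kappa$, preservation follows from the matching floors and fractional comparisons on the bounded region; for $x \rel y$, preservation uses the combined floor/fractional information together with the fact that values outside $[-\kappa, \kappa]$ are totally ordered as required; for $x - y \rel c$, the BsrBd well-formedness condition forces both $x$ and $y$ into intervals $[c_x, d_x], [c_y, d_y]$ with $|c_x|, |d_x|, |c_y|, |d_y| \leq \kappa$, so that $\simeq_\kappa$ (which refines $\hsimeq_\kappa$ on the bounded subspace) is the relevant relation and again preserves the constraint. It is precisely to license this last reduction that $\kappa$ was chosen as the maximal absolute value of any rational in $N$.
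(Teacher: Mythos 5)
Your overall strategy is the same as the paper's: choose for every tuple an $\hsimeq_\kappa$-equivalent representative in $\hQ^m$ obeying a canonical separation of fractional parts across the three index groups, define $P^\cB$ by copying $\cA$ on these representatives, obtain uniformity because representatives of $\hsimeq_\kappa$-equivalent tuples are then $\simeq_\kappa$-equivalent so the hypothesis on $Q$ applies, and transfer satisfaction clause by clause via $\hbeta_C$. However, one concrete step in your Claim~I construction fails: you send \emph{every} index $i$ with $r_i < -\kappa$ to the single value $-\kappa-1$ (fractional part $0$, integer part $-\kappa-1$). Condition (ii.i) in the definition of $\hsimeq_\kappa$ requires the relative order of the components lying below $-\kappa$ to be preserved, so as soon as $\br$ has two distinct components below $-\kappa$, your $\bq$ is \emph{not} $\hsimeq_\kappa$-equivalent to $\br$, and Claim~I is not established. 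This is not cosmetic: a clause may contain a constraint $x \rel y$ whose variables are both assigned values below $-\kappa$ by $\beta$ (say $\beta(x) < \beta(y) < -\kappa$ with the constraint $x < y$ in $\Lambda$); under your $\hbeta_C$ both variables receive $-\kappa-1$, the constraint is falsified, $\cA, \hbeta_C \models C$ holds vacuously, and nothing follows about $\cB, \beta \models C$. The below-range components must be mapped injectively and order-preservingly into $(-\kappa-1,-\kappa)$, exactly as you already do for the above-range ``layer''; the paper does this by scaling the out-of-range values into thin strips of $(-\kappa-1,-\kappa)$ and $(\kappa,\kappa+1)$ whose fractional parts lie strictly below every positive fractional part of the bounded components, and then transporting all fractional parts order-isomorphically into $Q$, which has enough elements because $|Q| = \lambda+1 \geq \mu+1$.

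Once Claim~I is repaired in this way, the remainder of your sketch matches the paper's proof. Two points that deserve to be made explicit: in the predicate-atom case you need not only that $\cB$ is $\hsimeq_\kappa$-uniform but also that $P^\cA$ and $P^\cB$ agree on the $\hbeta_C$-tuple itself, which is again the ``both representatives respect the separation convention, hence are $\simeq_\kappa$-equivalent, hence have the same $\chi_\cA$-color'' argument; and in the case of a constraint $x - y \rel c$ the official argument is a case split rather than an outright confinement to the bounded region: if one of the accompanying bound constraints $c_x \leq x$, $x \leq d_x$, $c_y \leq y$, $y \leq d_y$ is violated under $\hbeta_C$, the case of constraints $x \rel c$ applies, and otherwise both $\hbeta_C$ and $\beta$ place $x,y$ in $[-\kappa,\kappa]$, where the floor/ceiling computation of the difference shows that the constraint transfers.
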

\begin{proof}
	The construction of $\cB$ from $\cA$ is similar to the construction of uniform models outlined in the proof of Lemma~\ref{lemma:ExistenceOfUniformModeslForBsrSla}.
	
	\begin{description}
		\item \underline{Claim I:} Let $\mu$ be a positive integer with $1 \leq \mu \leq \lambda$.  
			For each of the finitely many equivalence classes $S \in \Real^\mu/_{\hsimeq_\kappa}$ and every $\br \in S$, there is some $\bq \in S \cap \hQ^\mu$ such that $\br \hsimeq_\kappa \bq$ and for all $i_1, i_2, i_3$ with $r_{i_1} < -\kappa$ and $r_{i_2} > \kappa$ and $-\kappa \leq r_{i_3} \leq \kappa$ we have $\Fr(q_{i_1}) < \Fr(q_{i_2}) < \Fr(q_{i_3})$.
		\item \underline{Proof:} 
			Let $i_1, i_2, \ldots$ be all the indices from $\{1, \ldots, \mu\}$ for which we have $r_{i_j} > \kappa$ for every $j$.
			Analogously, let $\ell_1, \ell_2, \ldots$ be all the indices from $\{1, \ldots, \mu\}$ such that $r_{\ell_j} < -\kappa$ holds for every $j$.
			We define the real \\
				\centerline{$\delta := \min \bigl\{ \Fr(r_i) \bigm| \text{$-\kappa \leq r_i \leq \kappa$ and $\Fr(r_i) > 0$ and $1 \leq i \leq m$} \bigr\} \cup \bigl\{ \frac{1}{2} \bigr\}$.}
			There must be some integer $t$ for which we get $0 < \frac{1}{t}r_{i_j} < \frac{1}{2}\delta$ and $-\frac{1}{2}\delta < \frac{1}{t}r_{\ell_j} < 0$ for every $j$.
			Let $\br'$ be the tuple that we obtain from $\br$ by replacing every $r_{i_j}$ with $\frac{1}{t}r_{i_j} + \frac{1}{2}\delta + \kappa$ and every $r_{\ell_j}$ with $\frac{1}{t}r_{\ell_j} + \frac{1}{2}\delta - \kappa$.
			By construction, we observe $\br' \in (-\kappa-1, \kappa+1)^\mu$ and $\br \hsimeq_\kappa \br'$.
			Moreover, we have $\frac{1}{2}\delta < \Fr(\br'_{i_j}) < \delta$ and $0 < \Fr(\br'_{\ell_j}) < \frac{1}{2}\delta$ for every $j$.
		
			Next, we define the following ascending sequences
			\begin{itemize}
				\item $s'_0 < s'_1 < \ldots < s'_k$, where $s'_0 = 0$ and the values $s'_j$, $j \geq 1$, are the strictly positive fractional parts in ascending order that occur in $\Fr(\br')$, and
				\item $q'_0 < q'_1 < \ldots < q'_\lambda$, which comprises all reals in $Q$ in ascending order, including $q'_0 = 0$.
			\end{itemize}
		
			We can now construct a tuple $\bq \in S \cap \hQ^\mu$ by setting $q_\ell := \floor{r'_\ell} + q'_j$ for $j$ such that $\Fr(r'_\ell) = s'_j$.
			
			Clearly, $\br'$ and $\bq$ are $\simeq_\kappa$-equivalent.
			Since $\simeq_\kappa$ is a refinement of $\hsimeq_\kappa$ on the subspace $(-\kappa - 1, \kappa + 1)^\mu$, this entails $\br \hsimeq_\kappa \bq$. 
			\hfill $\Diamond$			
	\end{description}
	
	Let $\hcS$ denote the set $\{ \fa \in \cS^\cA \mid \text{$\fa = c^\cA$ for some $c \in \fconsts(N)$} \}$.
	The  interpretation $\cB$ can be constructed as follows:
	\begin{itemize}
		\item $\cS^{\cB} := \hcS$,
		\item for every constant symbol $c$ occurring in $N$ we set $c^{\cB} := c^\cA$,
		\item for every uninterpreted predicate symbol $P$ occurring in $N$ and for all tuples $\ba \in \hcS^{m'}$ and $\br \in \Real^m$ we pick some tuple $\bq \in \hQ^m$ in accordance with Claim~I---i.e.\ $\bq$ satisfies $\br \hsimeq_\kappa \bq$---and define $P^{\cB}$ in such a way that
			\[ \<\ba, \br\> \in P^{\cB} \quad\text{if and only if}\quad \<\ba,\bq\> \in P^\cA ~.\]
	\end{itemize}
	
	\begin{description}
		\item \underline{Claim II:} The  interpretation $\cB$ is $\hsimeq_\kappa$-uniform.
		\item \underline{Proof:}
			Let $\br^1, \br^2 \in \Real^m$ be two $\hsimeq_\kappa$-equivalent tuples.
			By Claim~I, there exist two tuples $\bq^1, \bq^2$ such that $\bq^1 \hsimeq_\kappa \br^1$ and $\bq^2 \hsimeq_\kappa \br^2$.
			Clearly, by transitivity and symmetry of $\hsimeq_\kappa$, we have $\bq^1 \hsimeq_\kappa \bq^2$.
			Even stronger, we can show $\bq^1 \simeq_\kappa \bq^2$.
			Suppose, $\bq^1 \not\simeq_\kappa \bq^2$.
			We observe the following properties, which follow from $\bq^1 \hsimeq_\kappa \bq^2$:
			\begin{itemize}
				\item  $\floor{\bq^1} = \floor{\bq^2}$ and $\ceil{\bq^1} = \ceil{\bq^2}$.
				\item For all $i, j$, $1 \leq i, j \leq m$, for which $-\kappa \leq q^1_i, q^1_j \leq \kappa$, 
					we have $\Fr(q^1_i) \leq \Fr(q^1_j)$ if and only if $\Fr(q^2_i) \leq \Fr(q^2_j)$.
				\item For all $i, j$, $1 \leq i, j \leq m$, for which $\kappa < q^1_i, q^1_j$ or $q^1_i, q^1_j < -\kappa$, 
					we have $q^1_i \leq q^1_j$ if and only if $q^2_i \leq q^2_j$.
					Because of $\bq^1, \bq^2 \in (-\kappa-1, \kappa+1)^m$, 
					we even obtain $\Fr(q^1_i) \leq \Fr(q^1_j)$ if and only if $\Fr(q^2_i) \leq \Fr(q^2_j)$.					
			\end{itemize}
			Hence, our assumption $\bq^1 \not\simeq_\kappa \bq^2$ entails that there are two indices $i,j$ such that
				$\Fr(q^1_i) \leq \Fr(q^1_j)$ and $\Fr(q^2_i) > \Fr(q^2_j)$,
			and one of the following cases applies:
				\begin{enumerate}[label=(\arabic{*}), ref=(\arabic{*})]
					\item\label{enum:proofExistenceOfUniformModeslForBsrBd:I} $q^1_i, q^2_i > \kappa$ and $-\kappa \leq q^1_j, q^2_j \leq \kappa$, or
					\item\label{enum:proofExistenceOfUniformModeslForBsrBd:II} $q^1_i, q^2_i > \kappa$ and $q^1_j, q^2_j < -\kappa$, or
					\item\label{enum:proofExistenceOfUniformModeslForBsrBd:III} $-\kappa \leq q^1_i, q^2_i \leq \kappa$ and $q^1_j, q^2_j >\kappa$, or
					\item\label{enum:proofExistenceOfUniformModeslForBsrBd:IV} $-\kappa \leq q^1_i, q^2_i \leq \kappa$ and $-\kappa < q^1_j, q^2_j$, or
					\item\label{enum:proofExistenceOfUniformModeslForBsrBd:V} $q^1_i, q^2_i < -\kappa$ and $-\kappa \leq q^1_j, q^2_j$.
				\end{enumerate}
			\begin{description}
				\item Ad~\ref{enum:proofExistenceOfUniformModeslForBsrBd:I}.
					By Claim~I, we have $\Fr(q^1_i) < \Fr(q^1_j)$ and $\Fr(q^2_i) < \Fr(q^2_j)$.
				\item Ad~\ref{enum:proofExistenceOfUniformModeslForBsrBd:II}.
					By Claim~I, we have $\Fr(q^1_j) < \Fr(q^1_i)$ and $\Fr(q^2_j) < \Fr(q^2_i)$.
				\item Ad~\ref{enum:proofExistenceOfUniformModeslForBsrBd:III}.
					By Claim~I, we have $\Fr(q^1_i) < \Fr(q^1_j)$ and $\Fr(q^2_i) < \Fr(q^2_j)$.
				\item Ad~\ref{enum:proofExistenceOfUniformModeslForBsrBd:IV}.
					By Claim~I, we have $\Fr(q^1_j) < \Fr(q^1_i)$ and $\Fr(q^2_j) < \Fr(q^2_i)$.
				\item Ad~\ref{enum:proofExistenceOfUniformModeslForBsrBd:V}.
					By Claim~I, we have $\Fr(q^1_i) < \Fr(q^1_j)$ and $\Fr(q^2_i) < \Fr(q^2_j)$.
			\end{description}
			Since all cases lead to a contradiction, we must have $\bq^1 \simeq_\kappa \bq^2$.
		
			Because of $\bq^1, \bq^2 \in \hQ^m$ and due to our assumptions regarding $Q$ and $\hQ^m$, we have $\chi_\cA(\bq^1) = \chi_\cA(\bq^2)$.
			Moreover, by construction of $\cB$, we have $\chi_\cB(\br^1) = \chi_\cA(\bq^1)$ and $\chi_\cB(\br^2) = \chi_\cA(\bq^2)$.
			Consequently, $\chi_\cB(\br^1) = \chi_\cB(\br^2)$.
			\strut\hfill $\Diamond$
	\end{description}
	
	We next show $\cB\models N$. 
	Consider any clause $C = \Lambda \;\|\; \Gamma \to \Delta$ in $N$ and let $\beta$ be any variable assignment ranging over $\cS^{\cB} \cup \Real$. 
	Starting from $\beta$, we derive a special variable assignment $\hbeta_C$ as follows. 
	Let $x_1, \ldots, x_\ell$ be an enumeration of all base-sort variables in $C$.
	By Claim~I, there exists some tuple $\bq := \<q_1, \ldots, q_\ell\>$ such that $\<q_1, \ldots, q_\ell\> \hsimeq_\kappa \bigl\< \beta(x_1), \ldots, \beta(x_\ell) \bigr\>$ and $\bq \in \hQ^\ell$.	
	We define $\hbeta_C(x_i) := q_i$ for every $i$, $1 \leq i \leq \ell$.
	Hence, we have
	\begin{equation*}(*)\qquad
		\bigl\<\hbeta_C(x_1), \ldots, \hbeta_C(x_\ell)\bigr\> \hsimeq_\kappa \bigl\<\beta(x_1), \ldots, \beta(x_\ell)\bigr\> ~.
	\end{equation*}	
	For all other base-sort variables $y \not\in \{x_1, \ldots, x_\ell\}$, $\hbeta_C(y)$ can be defined arbitrarily.
	For every free-sort variable $u$ we set $\hbeta_C(u) := \beta(u)$.

	As $\cA$ is a  model of $N$, we know $\cA,\hbeta_C \models C$. By case distinction on why $\cA,\hbeta_C \models C$ holds, we may use this result to obtain $\cB,\beta\models C$.
		\begin{description}
			\item Case $\cA, \hbeta_C \not\models x\rel c$ for some constraint $x\rel c$ in $\Lambda$. 
				Hence, $\beta_C(x) \not\rel c$.
				Due to $(*)$, the assumption $|c| \leq \kappa$, and the definition of $\hsimeq_\kappa$, we know that $\hbeta_C(x) \rel c$ holds if and only if $\beta(x) \rel c$ holds. 
				Consequently, we get $\beta(x) \not\rel c$ and thus $\cB, \beta \not\models x \rel c$.
									
			\item Case $\cA, \hbeta_C \not\models x \rel y$ for some $x \rel y$ in $\Lambda$. 
				By $(*)$ and the definition of $\hsimeq_\kappa$, we know that $\hbeta_C(x) \rel \hbeta_C(y)$ if and only if $\beta(x) \rel \beta(y)$. 
				Consequently, we get $\cB, \beta \not\models x \rel y$.

			\item Case $\cA, \hbeta_C \not\models x - y \rel c$ for some constraint $x - y \rel c$ in $\Lambda$. 
				By definition of \BsrBd\ clause sets, $\Lambda$ must also contain constraints $c_x \leq x$, $x \leq d_x$, $c_y \leq y$, and $y \leq d_y$ for certain constants $c_x, d_x,  c_y, d_y$ whose absolute value is at most $\kappa$.
				If one of these constraints is violated by $\hbeta_C$, then the first case applies.
				
				If all of these constraints are satisfied by $\hbeta_C$, then, by $(*)$, they are also satisfied by $\beta$.
				Moreover, $(*)$ and the definition of $\hsimeq_\kappa$, entail $\floor{\hbeta_C(x)} = \floor{\beta(x)}$, $\floor{\hbeta_C(y)} = \floor{\beta(y)}$, $\ceil{\hbeta_C(x)} = \ceil{\beta(x)}$, $\ceil{\hbeta_C(y)} = \ceil{\beta(y)}$, $\Fr(\hbeta_C(x)) \leq \Fr(\hbeta_C(y))$ if and only if $\Fr(\beta(x)) \leq \Fr(\beta(y))$, and $\Fr(\hbeta_C(x)) \geq \Fr(\hbeta_C(y))$ if and only if $\Fr(\beta(x)) \geq \Fr(\beta(y))$.
				Hence, the following two observations hold:
					\begin{align*}
						\floor{\hbeta_C(x) - \hbeta_C(y)}
							&= \floor{\hbeta_C(x)} - \floor{\hbeta_C(y)} + \bigl\lfloor \Fr(\hbeta_C(x)) - \Fr(\hbeta_C(y)) \bigr\rfloor \\
							&= \floor{\beta(x)} - \floor{\beta(y)} + \bigl\lfloor \Fr(\beta(x)) - \Fr(\beta(y)) \bigr\rfloor \\
							&= \floor{\beta(x) - \beta(y)}
					\end{align*}		
				and	
					\begin{align*}
						\ceil{\hbeta_C(x) - \hbeta_C(y)}
							&= \ceil{\hbeta_C(x)} - \ceil{\hbeta_C(y)} + \bigl\lceil \Fr(\hbeta_C(x)) - \Fr(\hbeta_C(y)) \bigr\rceil \\
							&= \ceil{\beta(x)} - \ceil{\beta(y)} + \bigl\lceil \Fr(\beta(x)) - \Fr(\beta(y)) \bigr\rceil \\
							&= \ceil{\beta(x) - \beta(y)} ~.
					\end{align*}		
				Consequently, we have $\hbeta_C(x) - \hbeta_C(y) \rel c$ if and only if $\beta(x) - \beta(y) \rel c$.
				In other words, $\cA, \beta \not\models x - y \rel c$.

			\item Case $\cA, \hbeta_C \not\models t\approx t'$ for some free atom $t\approx t' \in \Gamma$. Hence, $t$ and $t'$ are either variables or constant symbols of the free sort, which means they do not contain subterms of the base sort. Since $\cB$ and $\cA$ behave identical on free-sort constant symbols and $\beta(u) = \hbeta_C(u)$ for every variable $u\in V_\cS$, we get $\cB, \beta \not\models t\approx t'$.

			\item Case $\cA, \hbeta_C \models t\approx t'$ for some $t\approx t' \in \Delta$. Analogous to the above case, $\cB, \beta \models t\approx t'$.
			
			\item Case $\cA, \hbeta_C \not\models P(t'_1, \ldots, t'_{m'}, t_1, \ldots, t_m)$ for some non-equational atom\\ $P(t'_1, \ldots, t'_{m'}, t_1, \ldots, t_m) \in \Gamma$. 
				This translates to \\
					\centerline{$\bigl\< \cA(\hbeta_C)(t'_1), \ldots, \cA(\hbeta_C)(t'_{m'}), \cA(\hbeta_C)(t_1), \ldots, \cA(\hbeta_C)(t_m)\bigr\> \not\in P^\cA$.}
				By construction of $\hbeta_C$, we have $\cA(\hbeta_C)(t_j) \in \hQ$ for every $j$, $1 \leq j \leq m$.
				Due to our assumptions regarding $\hQ$ and by construction of $\cB$, we therefore have \\
					\centerline{$\bigl\< \cA(\hbeta_C)(t'_1), \ldots, \cA(\hbeta_C)(t'_{m'}), \cA(\hbeta_C)(t_1), \ldots, \cA(\hbeta_C)(t_m) \bigr\> \not\in P^{\cB}$.}
				We observe the following properties:
				\begin{itemize}
					\item We have $\cA(\hbeta_C)(t'_j) = \cB(\beta)(t'_j)$ for every $j$, $1 \leq j \leq m'$, due to the definition of $\cB$ and $\hbeta_C$.
					\item Since all the $t_j$ are base-sort variables, we get $\cA(\hbeta_C)(t_j) = \cB(\hbeta_C)(t_j)$ for every $j$, $1 \leq j \leq m$.
				\end{itemize}				
				These two observations yield \\
					\centerline{$\bigl\< \cB(\beta)(t'_1), \ldots, \cB(\beta)(t'_{m'}), \cB(\hbeta_C)(t_1), \ldots, \cB(\hbeta_C)(t_m) \bigr\> \not\in P^{\cB}$.}
				Because of this result, and due to $\hsimeq_\kappa$-uniformity of $\cB$, \\
					\centerline{$\bigl\< \cB(\hbeta_C)(t_1), \ldots, \cB(\hbeta_C)(t_m) \bigr\> \hsimeq_\kappa \bigl\< \cB(\beta)(t_1), \ldots, \cB(\beta)(t_m) \bigr\>$}
				leads to \\
					\centerline{$\bigl\< \cB(\beta)(t'_1), \ldots, \cB(\beta)(t'_{m'}), \cB(\beta)(t_1), \ldots, \cB(\beta)(t_m) \bigr\> \not\in P^{\cB}$.}
					
				Put differently, we have $\cB, \beta \not\models P(t'_1, \ldots, t'_{m'}, t_1, \ldots, t_m)$.
													
			\item Case $\cA, \hbeta_C \models P(t'_1, \ldots, t'_{m'}, t_1, \ldots, t_m)$ for some non-equational atom\\
				$P(t'_1, \ldots, t'_{m'}, t_1, \ldots, t_m)\in\Delta$. 
				Analogously to the previous case we can infer $\cB, \beta \models P(t'_1, \ldots, t'_{m'}, t_1, \ldots, t_m)$.		
		\end{description}		
	Altogether, we have shown $\cB \models N$.	
\end{proof}

%%%%%%%%%%%%%%%%%%%%%%%%%%%%%%%%%%%%%%%%%%%%%%%%%%%%%%%%%%%%%%%%%%%%%%%%%%
\subsubsection*{Proof of Lemma~\ref{lemma:ExistenceOfIndistinguishableFractionalParts}}

We first need the following auxiliary result.

\begin{lemma}\label{lemma:CorrespondenceSynchronous}
	Let $S \in (-\kappa-1, \kappa+1)^m/_{\simeq_\kappa}$ be an equivalence class with respect to $\simeq_\kappa$. There are two mappings $\varrho : [m] \to \{0, 1, \ldots, m\}$ and $\sigma : [m] \to \{-\kappa-1, \ldots, 0, \ldots, \kappa\}$ such that
	\begin{enumerate}[label=(\roman{*}), ref=(\roman{*})]
		\item\label{enum:lemmaCorrespondenceSynchronous:One} for any ascending tuple $\<r_0, r_1, \ldots, r_m\> \in [0,1)^{m+1}$ with $r_0 = 0$ we have $\bigl\<r_{\varrho(1)} + \sigma(1), \ldots, r_{\varrho(m)} + \sigma(m)\bigr\> \in S$, and
			
		\item\label{enum:lemmaCorrespondenceSynchronous:Two} for every tuple $\<s_1, \ldots, s_m\> \in S$ there is an ascending tuple $\<r_0, r_1, \ldots, r_m\> \in [0,1)^{m+1}$ with $r_0 = 0$ such that $\bigl\<s_1, \ldots, s_m\bigr\> = \bigl\<r_{\varrho(1)} + \sigma(1), \ldots, r_{\varrho(m)}+\sigma(m)\bigr\>$.
	\end{enumerate}
\end{lemma}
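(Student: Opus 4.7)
The plan is to construct $\varrho$ and $\sigma$ directly from any fixed representative $\bs' = \langle s'_1, \ldots, s'_m\rangle \in S$, using the integer parts to define $\sigma$ and the rank ordering of fractional parts to define $\varrho$. Concretely, I would set $\sigma(i) := \lfloor s'_i \rfloor$, which automatically lies in $\{-\kappa-1, \ldots, \kappa\}$ because $s'_i \in (-\kappa-1, \kappa+1)$. Then I would let $0 = t_0 < t_1 < \ldots < t_k$ (with $k \leq m$) be the enumeration in ascending order of the set $\{0\} \cup \{\Fr(s'_1), \ldots, \Fr(s'_m)\}$, and define $\varrho(i)$ to be the unique index $j$ such that $\Fr(s'_i) = t_j$.

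For property \ref{enum:lemmaCorrespondenceSynchronous:One}, I would take an arbitrary ascending tuple $\langle r_0, r_1, \ldots, r_m\rangle \in [0,1)^{m+1}$ with $r_0 = 0$ and verify that $\bq := \bigl\langle r_{\varrho(1)} + \sigma(1), \ldots, r_{\varrho(m)} + \sigma(m) \bigr\rangle$ is $\simeq_\kappa$-equivalent to $\bs'$. Since $r_{\varrho(i)} \in [0,1)$, we immediately get $\lfloor q_i\rfloor = \sigma(i) = \lfloor s'_i\rfloor$. Moreover, $\Fr(q_i) = r_{\varrho(i)} = 0$ holds if and only if $\varrho(i) = 0$ (using that the $r_j$ are strictly ascending with $r_0 = 0$), which in turn holds if and only if $\Fr(s'_i) = 0$. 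Finally, because both sequences $(r_j)$ and $(t_j)$ are strictly ascending, $\Fr(q_i) \leq \Fr(q_j) \Leftrightarrow \varrho(i) \leq \varrho(j) \Leftrightarrow t_{\varrho(i)} \leq t_{\varrho(j)} \Leftrightarrow \Fr(s'_i) \leq \Fr(s'_j)$. Hence $\bq \simeq_\kappa \bs'$, so $\bq \in S$.

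For property \ref{enum:lemmaCorrespondenceSynchronous:Two}, I would take any $\bs = \langle s_1, \ldots, s_m\rangle \in S$ and use that $\bs \simeq_\kappa \bs'$ implies $\lfloor s_i\rfloor = \sigma(i)$ and that the ordering of fractional parts is preserved. The ordering preservation, applied in both directions, yields that equality of fractional parts is also preserved, so $\Fr(s_i) = \Fr(s_j)$ iff $\varrho(i) = \varrho(j)$; consequently, the distinct values of $\{0\} \cup \{\Fr(s_1), \ldots, \Fr(s_m)\}$, sorted ascendingly, form a sequence $0 = u_0 < u_1 < \ldots < u_k$ of the same length as $(t_j)$, and $\Fr(s_i) = u_{\varrho(i)}$. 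I would then extend this to an ascending $(m+1)$-tuple by choosing arbitrary values $u_k < r_{k+1} < \ldots < r_m < 1$ (which is possible since $u_k < 1$) and setting $r_j := u_j$ for $0 \leq j \leq k$. The resulting tuple satisfies $r_{\varrho(i)} + \sigma(i) = u_{\varrho(i)} + \lfloor s_i\rfloor = \Fr(s_i) + \lfloor s_i\rfloor = s_i$, as required.

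The only mildly delicate step is the claim that equality of fractional parts is preserved, which must be extracted from the antisymmetric use of condition (ii) in the definition of $\simeq_\kappa$; once that is in hand, the rest is bookkeeping. No use of Ramsey-theoretic machinery is needed here—this is a purely combinatorial description of an equivalence class.
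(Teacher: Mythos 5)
Your proposal is correct and follows essentially the same route as the paper's own proof: fix a representative of $S$, define $\sigma$ via integer parts and $\varrho$ via the rank of fractional parts (with $0$ receiving rank $0$), verify \ref{enum:lemmaCorrespondenceSynchronous:One} by checking the two conditions of $\simeq_\kappa$ directly, and obtain \ref{enum:lemmaCorrespondenceSynchronous:Two} from the same construction with ascending padding of the unused ranks. Your handling of \ref{enum:lemmaCorrespondenceSynchronous:Two} is in fact slightly more explicit than the paper's (which leaves the transfer from the fixed representative to an arbitrary $\bs \in S$ implicit), but it is the same argument.
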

\begin{proof}
	Fix some tuple $\bq$ taken from $S$.
	Given $\bq$, we set $q'_0 := 0$ and further construct the sequence $q'_1, q'_2, \ldots$ in such a way that it lists all strictly positive fractional values in $\Fr(\bq)$ in ascending order.
	
	We construct $\sigma$ by setting $\sigma(i) := \lfloor q_i \rfloor$ for every $i = 1, \ldots, m$, 
	and $\varrho$ such that $\varrho(i) = k$ holds if and only if $\Fr(q_i) = q'_k$.
	Consequently, we have 
	\begin{itemize}
		\item[($*$)] $\<q_1, \ldots, q_m\> = \bigl\< \Fr(q_1) + \floor{q_1}, \ldots, \Fr(q_m) + \floor{q_m} \bigr\> = \bigl\< q'_{\varrho(1)} + \sigma(1), \ldots, q'_{\varrho(m)} + \sigma(m) \bigr\>$.
	\end{itemize}
	
	Let $\<r_0, r_1, \ldots, r_m\> \in [0,1)^{m+1}$ be any ascending tuple with $r_0 = 0$. 
	For all $i,j$, we observe the following properties:
		\begin{enumerate}[label=(\arabic{*}), ref=(\arabic{*})]
			\item $\floor{r_{\varrho(i)} + \sigma(i)} = \sigma(i) = \floor{q_i}$.
			
			\item $\Fr(r_{\varrho(i)} + \sigma(i)) = \Fr(r_{\varrho(i)}) = r_{\varrho(i)}$.
			
			\item $\varrho(i) = 0$ if and only if $\Fr(q_i) = q'_0 = 0$, which entails that $\Fr(r_{\varrho(i)} + \sigma(i)) = 0$ holds if and only if we have $\Fr(q_i) = 0$.

			\item $\Fr(q_i) = q'_{\varrho(i)}$.
									
			\item We have 
					$\Fr(r_{\varrho(i)} + \sigma(i)) \;\leq\; \Fr(r_{\varrho(j)} + \sigma(j))$ \\
				if and only if 
					$r_{\varrho(i)} \;\leq\; r_{\varrho(j)}$ \\
				if and only if 
					$\varrho(i) \;\leq\; \varrho(j)$ \\
				if and only if 
					$q'_{\varrho(i)} \;\leq\; q'_{\varrho(j)}$ \\
				if and only if 
					$\Fr(q_i) \;\leq\; \Fr(q_j)$.
		\end{enumerate}
		Taken together, these observations imply $\bq \simeq_\kappa \<r_{\varrho(1)} + \sigma(1), \ldots, r_{\varrho(m)} + \sigma(m)\>$. Hence, we have just proved \ref{enum:lemmaCorrespondenceSynchronous:One}.
		
		In fact, we have also already proved \ref{enum:lemmaCorrespondenceSynchronous:Two}, by giving the construction of the sequence $q'_0, q'_1, q'_2, \ldots$ and by having derived ($*$). If the sequence $q'_1, q'_2, \ldots$ is shorter than $m$ elements, we can simply pad it in an ascending fashion with arbitrary values from $(0,1)$.
\end{proof}

We can now prove Lemma~\ref{lemma:ExistenceOfIndistinguishableFractionalParts}.

\begin{lemma*}
	Let $\cA$ be an  interpretation and let $\kappa, \lambda$ be positive integers.
	There exists a finite set $Q \subset [0,1)$ of cardinality $\lambda+1$ such that $0 \in Q$ and for all tuples $\bs, \bs' \in \hQ^m$,
		$\bs \simeq_\kappa \bs'$ entails $\chi_\cA(\bs) = \chi_\cA(\bs')$,
	where \\
		\centerline{$\hQ := \bigl\{ q + k \bigm| q \in Q \text{ and } k \in \{-\kappa-1, \ldots, 0, \ldots, \kappa \} \bigr\}$.}
\end{lemma*}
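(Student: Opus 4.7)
The plan is to reduce the statement to an application of Lemma~\ref{lemma:RamseyBasic:One} after packaging all possible ways in which $m$-tuples from $\hQ^m$ can sit inside $\simeq_\kappa$-equivalence classes. The essential combinatorial observation is Lemma~\ref{lemma:CorrespondenceSynchronous}: every $\simeq_\kappa$-equivalence class $S$ over $(-\kappa-1,\kappa+1)^m$ can be described by a pair $(\varrho,\sigma)$ with $\varrho:[m]\to\{0,1,\ldots,m\}$ and $\sigma:[m]\to\{-\kappa-1,\ldots,\kappa\}$, so that a tuple lies in $S$ exactly when it has the shape $\bigl\langle r_{\varrho(1)}+\sigma(1),\ldots,r_{\varrho(m)}+\sigma(m)\bigr\rangle$ for some ascending $\langle r_0,r_1,\ldots,r_m\rangle\in[0,1)^{m+1}$ with $r_0=0$. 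Since there are only finitely many such pairs $(\varrho,\sigma)$, I can enumerate them as $\langle\varrho_1,\sigma_1\rangle,\ldots,\langle\varrho_k,\sigma_k\rangle$ and cover every $\simeq_\kappa$-class over $(-\kappa-1,\kappa+1)^m$ simultaneously.

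Next I would define a product coloring that records, for a candidate ascending tuple of fractional parts, the $\chi_\cA$-color induced in \emph{every} equivalence class it could shape. Concretely, for $\br=\langle r_1,\ldots,r_m\rangle\in(0,1)^m$ (with $r_0:=0$) set
\[
\hchi(\br):=\Bigl\langle \chi_\cA\bigl(\langle r_{\varrho_j(1)}+\sigma_j(1),\ldots,r_{\varrho_j(m)}+\sigma_j(m)\rangle\bigr)\Bigr\rangle_{j=1}^{k}.
\]
Applying Lemma~\ref{lemma:RamseyBasic:One} to $\hchi$ on the real interval $(0,1)$ (taking the Ramsey parameter to be $\lambda$), I obtain a set $Q'\subseteq(0,1)$ of cardinality $\lambda$ such that every ascending $m$-tuple drawn from $Q'$ receives the same $\hchi$-color. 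I then put $Q:=Q'\cup\{0\}$, which has cardinality $\lambda+1$ and contains $0$ as required.

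It remains to verify the uniformity statement for $\hQ$. Given $\bs,\bs'\in \hQ^m$ with $\bs\simeq_\kappa\bs'$, both lie in one and the same equivalence class $S$, which corresponds to some $(\varrho_j,\sigma_j)$ in the enumeration. Because every element of $\hQ$ has integral part in $\{-\kappa-1,\ldots,\kappa\}$ and fractional part in $Q$, and because $|Q|=\lambda+1\geq m+1$, I can realize $\bs$ and $\bs'$ via part~(ii) of Lemma~\ref{lemma:CorrespondenceSynchronous} as
\[
\bs=\bigl\langle r_{\varrho_j(i)}+\sigma_j(i)\bigr\rangle_{i=1}^{m},\qquad \bs'=\bigl\langle r'_{\varrho_j(i)}+\sigma_j(i)\bigr\rangle_{i=1}^{m},
\]
where $\langle r_0,\ldots,r_m\rangle$ and $\langle r'_0,\ldots,r'_m\rangle$ are ascending tuples in $Q$ with $r_0=r'_0=0$ (padding with unused elements of $Q'$ in ascending order if the strictly positive fractional parts of $\bs$ or $\bs'$ do not already exhaust $m$ slots). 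Hence $\langle r_1,\ldots,r_m\rangle$ and $\langle r'_1,\ldots,r'_m\rangle$ are ascending $m$-tuples in $Q'^{m}$, and Ramsey-uniformity of $\hchi$ on $Q'^{m}$ forces equality of their $j$-th component colors, i.e.\ $\chi_\cA(\bs)=\chi_\cA(\bs')$.

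The main obstacle I anticipate is bookkeeping: one has to check that the representation produced by Lemma~\ref{lemma:CorrespondenceSynchronous} can always be realised using elements of $Q$ rather than arbitrary reals from $[0,1)$, which is exactly why $0$ is forced into $Q$ (it plays the role of $r_0$ and represents the ``no fractional part'' case $\varrho_j(i)=0$) and why the cardinality is $\lambda+1$ instead of $\lambda$. Beyond this, the argument is a clean Ramsey-style colouring collapse.
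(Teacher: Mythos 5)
Your proposal follows the paper's own proof essentially verbatim: the same enumeration of $\simeq_\kappa$-classes via Lemma~\ref{lemma:CorrespondenceSynchronous}, the same product coloring $\hchi$ collapsed by Lemma~\ref{lemma:RamseyBasic:One} on $(0,1)$ with parameter $\lambda$, the same choice $Q := Q' \cup \{0\}$, and the same verification through part~(ii) of the correspondence lemma. Even the padding bookkeeping you flag at the end is handled in the paper in exactly the way you describe, so there is nothing substantively different to report.
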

\begin{proof}
	Let $S_1, \ldots, S_k$ be some enumeration of all equivalence classes in $(-\kappa-1, \kappa+1)^m/_{\simeq_\kappa}$.
	By Lemma~\ref{lemma:CorrespondenceSynchronous}, there is a (not necessarily unique) sequence $\<\varrho_1, \sigma_1\>, \ldots, \<\varrho_k, \sigma_k\>$ of pairs of functions such that each pair $\<\varrho_j, \sigma_j\>$ corresponds to the equivalence class $S_j$ in the sense of Lemma~\ref{lemma:CorrespondenceSynchronous}.
	
	Let $\hcS := \{ \fa \in \cS^\cA \mid \text{$\fa = c^\cA$ for some $c \in \fconsts(N)$}\}$ be the set of all domain elements assigned to free-sort constant symbols by $\cA$.
	We define a coloring $\hchi : \Real^m \to \bigl( \cP\{P_i \ba \mid \text{$\ba \in \hcS^{m'}$ and $P_i$ occurs}$ $\text{in $N$}\} \bigr)^k$ by setting 
		\begin{align*}
			\hchi(\br) := \bigl\< \chi_\cA &\bigl( \<r_{\varrho_1(1)} + \sigma_1(1), \ldots, r_{\varrho_1(m)} + \sigma_1(m)\> \bigr),\\
					 &\ldots, \chi_\cA \bigl( \<r_{\varrho_k(1)} + \sigma_k(1), \ldots, r_{\varrho_k(m)} + \sigma_k(m)\> \bigr) \bigr\>
		\end{align*}	
	for every tuple $\br = \<r_1, \ldots, r_m\> \in (0,1)^m$, where we define $r_0$ to be $0$.
	By virtue of Lemma~\ref{lemma:RamseyBasic:One}, there is a set $Q' \subseteq (0,1)$ of cardinality $\lambda$ such that all ascending tuples $\<r_1, \ldots, r_m\> \in {Q'}^m$ are assigned the same color by $\chi$. 
	We then set $Q := Q' \cup \{0\}$.
	
	Consider any equivalence class $S_j$ and the corresponding pair $\<\varrho_j, \sigma_j\>$ and let $\bs, \bs' \in \hQ^m$ be two $\simeq_\kappa$-equivalent tuples.
	Let $q_1, q_2, \ldots$ be an enumeration of all the strictly positive fractional parts in $\Fr(\bs)$ in ascending order and let $q_0 := 0$.	
	Hence, $q_0 < q_1 < q_2 < \ldots$.
	
	By Lemma~\ref{lemma:CorrespondenceSynchronous}, there are two ascending tuples $\bq := \<0, q_1, \ldots, q_m\>$ and $\bq' := \<0, q'_1, \ldots, q'_m\>$ in $[0,1)^{m+1}$ such that \\
		\centerline{$\bs = \<q_{\varrho_j(1)} + \sigma(1), \ldots, q_{\varrho_j(m)} + \sigma(m) \>$}
	and \\
		\centerline{$\bs' = \<q'_{\varrho_j(1)} + \sigma(1), \ldots, q'_{\varrho_j(m)} + \sigma(m) \>$.}
	Because of $\bs, \bs' \in \hQ^m$, we know that $\<q_1, \ldots, q_m\> \in {Q'}^m$ and $\<q'_1, \ldots, q'_m\> \in {Q'}^m$.
	Then, $\hchi(\<q_1, \ldots, q_m\>) = \hchi(\<q'_1, \ldots, q'_m\>)$ entails
		\begin{align*}
			\strut\hspace{10ex}
			\chi_\cA(\bs) &= \chi_\cA \bigl( \<q_{\varrho_j(1)} + \sigma(1), \ldots, q_{\varrho_j(m)} + \sigma(m) \> \bigr) \\
					 &= \chi_\cA \bigl( \<q'_{\varrho_j(1)} + \sigma(1), \ldots, q'_{\varrho_j(m)} + \sigma(m) \> \bigr) = \chi_\cA(\bs') ~.
			\qedhere
		\end{align*}	
\end{proof}

%%%%%%%%%%%%%%%%%%%%%%%%%%%%%%%%%%%%%%%%%%%%%%%%%%%%%%%%%%%%%%%%%%%%%%%%%%%%%%%
%%%%%%%%%%%%%%%%%%%%%%%%%%%%%%%%%%%%%%%%%%%%%%%%%%%%%%%%%%%%%%%%%%%%%%%%%%%%%%%
\subsection{Details Concerning Section~\ref{section:ReachabilityForTimedAutomata}}\label{section:appendix:TAReachability}
%%%%%%%%%%%%%%%%%%%%%%%%%%%%%%%%%%%%%%%%%%%%%%%%%%%%%%%%%%%%%%%%%%%%%%%%%%%%%%%
\subsubsection*{Proof of Lemma~\ref{lemma:DelayClausesWithDifferenceBounds}}

We first need an auxiliary result.

%%%%%%%%%%%%%%%%%%%%%%%%%%%%%%%%%%%%%%%%%%%%%%%%%%%%%%%%%%%%%%%%%%%%%%%%%%%%%%%
\begin{lemma}\label{lemma:DelaySetsViaDifferenceBounds}
	Let $S \in [0, \lambda+1)^{|\bx|}/_{\simeq_\lambda}$ be some equivalence class with respect to $\simeq_\lambda$.
	We define the two sets $\hS_1, \hS_2$ as follows:
		\begin{align*}
			\hS_1 := \bigl\{ \bq' \in [0, \lambda+1)^{|\bx|} \bigm|\, 
					&\text{there is some $\bq \in S$ such that for every $i$, $1 \leq i \leq |\bx|$,} \\
					&\text{we have $q_i \leq q'_i$ and $q'_0 - q'_i = q_0 - q_i$} \bigl\} ~,
		\end{align*}
	and	
		\begin{align*}
			\hS_2 := \bigl\{ \bq' \in [0&, \lambda+1)^{|\bx|} \bigm| \\
				  	&\text{there is some $\bq \in S$ such that for all $i_1, i_2$, $1 \leq i_1, i_2 \leq |\bx|$,} \\
					&\text{$q_{i_1} \leq q'_{i_1}$ and for every integer $k$, $-\lambda \leq k \leq \lambda$, we have} \\
					&\qquad\text{$q_{i_1} - q_{i_2} \leq k$ if and only if $q'_{i_1} - q'_{i_2} \leq k$, and}\\
					&\qquad\text{$q_{i_1} - q_{i_2} \geq k$ if and only if $q'_{i_1} - q'_{i_2} \geq k$} \bigl\} ~,
		\end{align*}
	where $q_0, q'_0$ are some fixed reals in the tuples $\bq, \bq'$, respectively, that correspond to the same index.	
	We observe $\hS_1 = \hS_2$.
\end{lemma}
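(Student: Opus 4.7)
}

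The inclusion $\hS_1 \subseteq \hS_2$ is essentially immediate: if $\bq' = \bq + t \cdot \mathbf{1}$ for some $\bq \in S$ and $t \geq 0$, then $q'_{i_1} - q'_{i_2} = q_{i_1} - q_{i_2}$ for every pair $i_1, i_2$, so every atomic constraint of the form ``$q_{i_1} - q_{i_2} \leq k$'' or ``$q_{i_1} - q_{i_2} \geq k$'' transfers verbatim from $\bq$ to $\bq'$, and $q_i \leq q'_i$ follows from $t \geq 0$. All of the real work therefore goes into $\hS_2 \subseteq \hS_1$.

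Fix $\bq' \in \hS_2$ together with its witness $\bq \in S$, set $n_i := \floor{q_i}$, and define $d_i := q'_i - n_i$. The pivotal step is the following observation: for all indices $i, j$ one has $d_i < d_j$ iff $\Fr(q_i) < \Fr(q_j)$, $d_i = d_j$ iff $\Fr(q_i) = \Fr(q_j)$, and in every case $|d_i - d_j| < 1$. To verify this, decompose $q_i - q_j = (n_i - n_j) + (\Fr(q_i) - \Fr(q_j))$; since $n_i, n_j \in \{0, \ldots, \lambda\}$, the integer $k := n_i - n_j$ lies in $[-\lambda, \lambda]$, so the $\hS_2$-conditions at threshold $k$ (and at $k \pm 1$ when available) force $q'_i - q'_j$ to equal $k$ when $\Fr(q_i) = \Fr(q_j)$, to lie in $(k, k+1)$ when $\Fr(q_i) > \Fr(q_j)$, and to lie in $(k-1, k)$ when $\Fr(q_i) < \Fr(q_j)$. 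At the boundary cases $k = \pm\lambda$, where the enclosing threshold $k \pm 1$ falls outside $[-\lambda, \lambda]$, the containment $\bq, \bq' \in [0, \lambda+1)^{|\bx|}$ rules out the wrong open unit interval on its own. Rewriting $d_i - d_j = (q'_i - q'_j) - (n_i - n_j)$ yields the claim.

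Having the correspondence between the $d_i$ and the $\Fr(q_i)$, I define the shift $t$ by: if some index $i$ has $\Fr(q_i) = 0$, take $t := d_i$ (by the pivotal observation this value is identical for all such $i$, equals $\min_i d_i$, and is strictly below $d_j$ whenever $\Fr(q_j) > 0$); otherwise, pick any $t$ in $\bigl(\max_i d_i - 1,\, \min_i d_i\bigr) \cap [0, \infty)$, which is nonempty because $\min_i d_i \geq \min_i \Fr(q_i) > 0$ and $\max_i d_i - \min_i d_i < 1$. Setting $\bq^* := \bq' - t \cdot \mathbf{1}$, one has $q^*_i = n_i + (d_i - t)$ with $d_i - t \in [0, 1)$, and $d_i - t = 0$ precisely when $\Fr(q_i) = 0$; hence $\floor{q^*_i} = n_i$, fractional-part integrality is matched, and the fractional-part order is preserved, so $\bq^* \simeq_\lambda \bq$, i.e., $\bq^* \in S$. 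Since $\bq' = \bq^* + t \cdot \mathbf{1}$ with $t \geq 0$, this witnesses $\bq' \in \hS_1$.

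The main obstacle is the pivotal observation bridging the integer-threshold difference constraints of $\hS_2$ with the fine-grained ordering of the $d_i$; its proof requires a three-way split on the sign of $\Fr(q_i) - \Fr(q_j)$ and a separate boundary argument invoking $\bq, \bq' \in [0, \lambda+1)^{|\bx|}$ at the extremes $n_i - n_j = \pm\lambda$. Once this is in place, choosing the correct shift $t$ and verifying $\bq^* \in S$ is routine bookkeeping.
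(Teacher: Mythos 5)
Your proof is correct, but it reaches the hard inclusion $\hS_2 \subseteq \hS_1$ by a genuinely different route than the paper. The paper keeps the integer parts of the $\hS_2$-witness $\bs$ and builds a fresh representative of $S$ by constructing its fractional parts from those of $\bq'$: it first shows that the threshold constraints force $\floor{s_0 - s_i} = \floor{q'_0 - q'_i}$ and $\ceil{s_0 - s_i} = \ceil{q'_0 - q'_i}$, that fractional-part equalities transfer (its Claim~III), and that the fractional order of $\bq'$ is a cyclic shift of that of $\bs$ (Claim~IV); it then fixes a small $\varepsilon$, defines the new fractional parts piecewise on the two arcs of that cycle, and needs further claims (V--VII) plus several preliminary case splits to check that the constructed tuple preserves the exact differences to $\bq'$ and lies below it. You instead normalize by $d_i := q'_i - \floor{q_i}$ and prove a single pivotal observation: the integer-threshold biconditionals, together with containment in $[0,\lambda+1)^{|\bx|}$ at the extremes $k = \pm\lambda$, force the order of the $d_i$ to coincide with the order of the $\Fr(q_i)$ with all pairwise gaps strictly below $1$. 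The witness is then simply $\bq^* := \bq' - t\cdot\mathbf{1}$ for a scalar $t$ chosen from the resulting window, so exact preservation of differences and $q^*_i \leq q'_i$ are automatic, and only $\bq^* \simeq_\lambda \bq$ remains to be verified --- which your choice of $t$ (two cases, depending on whether some $\Fr(q_i) = 0$) handles directly. This buys a noticeably shorter argument with fewer case distinctions and nothing analogous to the paper's Claims~VI--VII, whereas the paper's cyclic-shift analysis makes the geometric picture of a delay rotating the fractional parts more explicit. Both proofs hinge on the same bridging fact that integer thresholds in $[-\lambda, \lambda]$ pin down floors and ceilings of coordinate differences; your explicit boundary remark at $k = \pm\lambda$ covers the point where the paper implicitly relies on the same containment.
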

\begin{proof}				
	We obviously have $\hS_1 \subseteq \hS_2$.
	
	In order to prove $\hS_2 \subseteq \hS_1$, consider any $\bq' \in \hS_2$.
	Pick some $\bs \in S$ for which $s_i \leq q'_i$ for every $i$, $1 \leq i \leq |\bx|$.
	By construction of $\hS_2$, we observe $\floor{s_0 - s_i} = \floor{q'_0 - q'_i}$ and $\ceil{s_0 - s_i} = \ceil{q'_0 - q'_i}$  for every $i$, $1 \leq i \leq |\bx|$.
	\begin{description}
			
		\item \underline{Claim III:}
			For all indices $j_1, j_2 \in \{1, \ldots, |\bx|\}$ we have $\Fr(s_{j_1}) = \Fr(s_{j_2})$ if and only if $\Fr(q'_{j_1}) = \Fr(q'_{j_2})$.
		\item \underline{Proof:}
			For all reals $r,t$ we have $\Fr(r) = \Fr(t)$ if and only if $\floor{r - t} = \ceil{r - t}$.
			Using this fact, we get that
				$\Fr(s_{j_1}) = \Fr(s_{j_2})$
			entails	
				$\floor{q'_{j_1} - q'_{j_2}} = \floor{s_{j_1} - s_{j_2}} = \ceil{s_{j_1} - s_{j_2}} = \ceil{q'_{j_1} - q'_{j_2}}$
			which in turn implies
				$\Fr(q'_{j_1}) = \Fr(q'_{j_2})$.
			Symmetrically, 	
				$\Fr(q'_{j_1}) = \Fr(q'_{j_2})$
			entails	
				$\Fr(s_{j_1}) = \Fr(s_{j_2})$.
			\strut\hfill$\Diamond$
		
		\item \underline{Claim IV:}
			Let $k_1, \ldots, k_{|\bx|}$be some enumeration of the indices in $\{1, \ldots, |\bx|\}$ such that $\Fr(s_{k_1}) \leq \ldots \leq \Fr(s_{k_{|\bx|}})$.
			There is some $\ell$ such that \\
			\centerline{$\Fr(q'_{k_{\ell+1}}) \leq \ldots \leq \Fr(q'_{k_{|\bx|}}) \leq \Fr(q'_{k_1}) \leq \ldots \leq \Fr(q'_{k_\ell})$.}
		\item \underline{Proof:}			
			Suppose  Claim~IV does not hold, while Claim~III is respected.
			Hence, suppose there are indices $j_1, j_2, j_3 \in \{1, \ldots, |\bx|\}$ such that 
				$\Fr(s_{j_1}) < \Fr(s_{j_2}) < \Fr(s_{j_3})$  
			and
				$\Fr(q'_{j_3}) < \Fr(q'_{j_2}) < \Fr(q'_{j_1})$.\footnote{There are analogous arguments leading to contradictions in the cases where $\Fr(q'_{j_2}) < \Fr(q'_{j_1}) < \Fr(q'_{j_3})$ and $\Fr(q'_{j_1}) < \Fr(q'_{j_3}) < \Fr(q'_{j_2})$.}

			For all reals $r, t$ we have $\floor{r-t} = \floor{r} - \floor{t} + \floor{\Fr(r) - \Fr(t)}$, where \\
				\centerline{$\floor{\Fr(r) - \Fr(t)} = \begin{cases} 0 &\text{if $\Fr(r) \geq \Fr(t)$} \\ -1 &\text{if $\Fr(r) < \Fr(t)$.} \end{cases}$}
			Hence, we get the following system of equations:
			\begin{center}
				$
				\begin{array}{r@{$\;\;=\;\;$}l@{$\qquad=\qquad$}r@{$\;\;=\;\;$}l}
					\floor{s_{j_1}} - \floor{s_{j_2}} - 1 & \floor{s_{j_1} - s_{j_2}} 	& 	\floor{q'_{j_1} - q'_{j_2}} & \floor{q'_{j_1}} - \floor{q'_{j_2}} \\ 
					\floor{s_{j_1}} - \floor{s_{j_3}} - 1 & \floor{s_{j_1} - s_{j_3}} 	& 	\floor{q'_{j_1} - q'_{j_3}} & \floor{q'_{j_1}} - \floor{q'_{j_3}} \\ 
					\floor{s_{j_2}} - \floor{s_{j_3}} - 1 & \floor{s_{j_2} - s_{j_3}} 	& 	\floor{q'_{j_2} - q'_{j_3}} & \floor{q'_{j_2}} - \floor{q'_{j_3}}
				\end{array}
				$
			\end{center}	
			As this system entails $0 = 1$, we obtain a contradiction.
			\strut\hfill$\Diamond$
	\end{description}

	It remains to prove the existence of some tuple $\bq$ that satisfies the following requirements:
	\begin{enumerate}[label=(\roman{*}), ref=(\roman{*})]
		\item\label{enum:proofAlternativeDelayClauses:II:I} $\floor{\bq} = \floor{\bs}$ and $\ceil{\bq} = \ceil{\bs}$.
		\item\label{enum:proofAlternativeDelayClauses:II:II} $\floor{s_0 - s_i} = \floor{q_0 - q_i}$ and $\ceil{s_0 - s_i} = \ceil{q_0 - q_i}$ for every $i$.
		\item\label{enum:proofAlternativeDelayClauses:II:III} $q_0 - q_i = q'_0 - q'_i$ for every $i$.
		\item\label{enum:proofAlternativeDelayClauses:II:IV} $q_i \leq q'_i$ for every $i$.
	\end{enumerate}
	Notice that Requirement~\ref{enum:proofAlternativeDelayClauses:II:II} is entailed by Requirement~\ref{enum:proofAlternativeDelayClauses:II:III} and the definition of $S_2$.
	
	Consider any $i$ with $1 \leq i \leq |\bx|$.
	Requirement~\ref{enum:proofAlternativeDelayClauses:II:I} entails that $\bq$ must satisfy $q_i = \floor{s_i} + \Fr(q_i)$.
	It follows that $q_0 - q_i = \floor{s_0} + \Fr(q_0) - \floor{s_i} - \Fr(q_i)$ and $q'_0 - q'_i = \floor{q'_0} + \Fr(q'_0) - \floor{q'_i} - \Fr(q'_i)$.
	Hence, Requirement~\ref{enum:proofAlternativeDelayClauses:II:III} entails \\
		\centerline{$\floor{s_0} - \floor{s_i} + \Fr(q_0) - \Fr(q_i) \;\;=\;\;\ \floor{q'_0} - \floor{q'_i} + \Fr(q'_0) - \Fr(q'_i)$,}
	which is equivalent to
		\begin{equation*}(*)\qquad
			\Fr(q_0) - \Fr(q_i) \;\;=\;\;(\floor{q'_0} - \floor{q'_i}) - (\floor{s_0} - \floor{s_i}) + \Fr(q'_0) - \Fr(q'_i) ~.
		\end{equation*}

	We distinguish several cases:
	\begin{description}
		\item If $\bq' \in S$, then we set $\bq := \bq'$.
		\item If there is some $j$ such that $\floor{s_j} = \ceil{s_j}$, then, by Requirement~\ref{enum:proofAlternativeDelayClauses:II:I}, we must satisfy $\Fr(q_j) = 0$ and, therefore, for every $i$, $\Fr(q_i)$ is determined by $(*)$.
		\item If $\Fr(s_1) = \ldots = \Fr(s_{|\bx|})$, then we observe $\floor{q'_0 - q'_i} = \floor{s_0 - s_i} = \ceil{s_0 - s_i} = \ceil{q'_0 - q'_i}$ for every $i$.
			Hence, we have $\floor{q'_0 - q'_i} = \ceil{q'_0 - q'_i}$, which implies $\Fr(q'_0) = \Fr(q'_i)$ for every $i$.
			As this entails $q'_0 - q'_i = \floor{q'_0 - q'_i} = \floor{s_0 - s_i} = s_0 - s_i$, Requirement~\ref{enum:proofAlternativeDelayClauses:II:III} is satisfied if we set $\bq := \bs$.

		\item If none of the above cases apply, we have $\floor{s_i} = \ceil{s_i} - 1$ for every $i$.
			Moreover, we know that there are indices $i_1, i_2$ such that $\Fr(s_{i_1}) < \Fr(s_{i_2})$.
									
			Let $k_1, \ldots, k_{|\bx|}$be some enumeration of the indices in $\{1, \ldots, |\bx|\}$ such that $\Fr(s_{k_1}) \leq \ldots \leq \Fr(s_{k_{|\bx|}})$.
			Notice that $\Fr(s_{k_1}) < \Fr(s_{k_{|\bx|}})$ holds due to our assumptions.
			By Claim~IV, there is some $\ell$ such that \\
			\centerline{$\Fr(q'_{k_{\ell+1}}) \leq \ldots \leq \Fr(q'_{k_{|\bx|}}) \leq \Fr(q'_{k_1}) \leq \ldots \leq \Fr(q'_{k_\ell})$.}
			
			In fact, Claim~III together with $\Fr(s_{k_1}) < \Fr(s_{k_{|\bx|}})$ entails that $\Fr(q'_{k_{|\bx|}})$ is strictly smaller than $\Fr(q'_{k_1})$.

			We pick some real $\varepsilon$ such that $0 < \varepsilon < \Fr(q'_{k_1}) - \Fr(q'_{k_{|\bx|}})$.
			For every $j$, $1 \leq j \leq \ell$, we set 
				$\Fr(q_{k_j}) := \varepsilon + \bigl( \Fr(q'_{k_j}) - \Fr(q'_{k_1}) \bigr)$. 
			For every $j$, $\ell+1 \leq j \leq |\bx|$, we set 
				$\Fr(q_{k_j}) := \varepsilon + 1 - \bigl( \Fr(q'_{k_1}) - \Fr(q'_{k_j}) \bigr)$. 
			\begin{description}
				\item \underline{Claim V:}
					$0 < \Fr(q_{k_1}) \leq \ldots \leq \Fr(q_{k_\ell}) \leq \Fr(q_{k_{\ell+1}}) \leq \ldots \leq \Fr(q_{k_{|\bx|}}) < 1$.
				\item \underline{Proof:}

					We observe 
						\begin{itemize}
							\item $\Fr(q_{k_1}) = \varepsilon+\bigl( \Fr(q'_{k_1}) - \Fr(q'_{k_1}) \bigr) = \varepsilon > 0$.
							\item $\Fr(q_{k_{|\bx|}}) = \varepsilon + 1 - \bigl( \Fr(q'_{k_1}) - \Fr(q'_{k_{|\bx|}}) \bigr) < \bigl( \Fr(q'_{k_1}) - \Fr(q'_{k_{|\bx|}}) \bigr) + 1 - \bigl( \Fr(q'_{k_1}) - \Fr(q'_{k_{|\bx|}}) \bigr) = 1$.
							\item Because of $\Fr(q'_{k_\ell}) \in [0,1)$ and $\Fr(q'_{k_{\ell+1}}) \in [0,1)$, we obtain $\Fr(q'_{k_\ell}) \leq \Fr(q'_{k_{\ell+1}}) + 1$.
								Hence, we get
								$	\Fr(q_{k_\ell}) 
									= \varepsilon + \bigl( \Fr(q'_{k_\ell}) - \Fr(q'_{k_1}) \bigr) 
									\leq \varepsilon + \Fr(q'_{k_{\ell+1}}) + 1 - \Fr(q'_{k_1}) 
									= \varepsilon + 1 - \bigl( \Fr(q'_{k_1}) - \Fr(q'_{k_{\ell+1}}) \bigr) 
									= \Fr(q_{k_{\ell+1}})
								$.
						\end{itemize}
						The above observations entail $0 < \Fr(q_{k_1})$, $\Fr(q_{k_\ell}) \leq \Fr(q_{k_{\ell+1}})$, and $\Fr(q_{k_{|\bx|}}) < 1$.
						By definition of the $\Fr(q_{k_j})$ and our assumptions $\Fr(q'_{k_1}) \leq \ldots \leq \Fr(q'_{k_\ell})$ and $\Fr(q'_{k_{\ell+1}}) \leq \ldots \leq \Fr(q'_{k_{|\bx|}})$, these observations imply Claim~V. 
					\strut\hfill$\Diamond$
						
				\item \underline{Claim VI:}
					For every $j$ we have
						$\bigl( \floor{s_{k_1}} + \Fr(q_{k_1}) \bigr) - \bigl( \floor{s_{k_j}} + \Fr(q_{k_j}) \bigr) = q'_{k_1} - q'_{k_j}$.
				\item \underline{Proof:}
					If $1 \leq j \leq \ell$, then we have 
						\begin{align*}
							&\bigl( \floor{s_{k_1}} + \Fr(q_{k_1}) \bigr) - \bigl( \floor{s_{k_j}} + \Fr(q_{k_j}) \bigr) \\
							&= \floor{s_{k_1}} + \varepsilon + \bigl( \Fr(q'_{k_1}) - \Fr(q'_{k_1}) \bigr) - \floor{s_{k_j}} - \varepsilon - \bigl( \Fr(q'_{k_j}) - \Fr(q'_{k_1}) \bigr) \\
							&= \floor{s_{k_1}} - \floor{s_{k_j}} + \Fr(q'_{k_1}) - \Fr(q'_{k_j}) \\
							&= \floor{s_{k_1} - s_{k_j}} + \delta + \Fr(q'_{k_1}) - \Fr(q'_{k_j}) \\
							&= \floor{q'_{k_1} - q'_{k_j}} + \delta + \Fr(q'_{k_1}) - \Fr(q'_{k_j}) \\
							&= \floor{q'_{k_1}} -\floor{q'_{k_j}} + \Fr(q'_{k_1}) - \Fr(q'_{k_j}) \\
							&= q'_{k_1} - q'_{k_j} ~,
						\end{align*}	\
					where $\delta = 0$ in case of $\Fr(q'_{k_1}) = \Fr(q'_{k_j})$ (or $\Fr(s_{k_1}) = \Fr(s_{k_j})$) and $\delta = 1$ if $\Fr(q'_{k_1}) < \Fr(q'_{k_j})$ (or $\Fr(s_{k_1}) < \Fr(s_{k_j})$).
					
					If $\ell + 1 \leq j \leq |\bx|$, then we have 
						\begin{align*}
							&\bigl( \floor{s_{k_1}} + \Fr(q_{k_1}) \bigr) - \bigl( \floor{s_{k_j}} + \Fr(q_{k_j}) \bigr) \\
							&= \floor{s_{k_1}} + \varepsilon + \bigl( \Fr(q'_{k_1}) - \Fr(q'_{k_1}) \bigr) - \floor{s_{k_j}} - \varepsilon - 1 + \bigl( \Fr(q'_{k_1}) - \Fr(q'_{k_j}) \bigr) \\
							&= \floor{s_{k_1}} - \floor{s_{k_j}} + \Fr(q'_{k_1}) - \Fr(q'_{k_j}) -1 ~.
						\end{align*}
					Since $\Fr(q'_{k_{|\bx|}})$ is strictly smaller than $\Fr(q'_{k_1})$, we get $\Fr(q'_{k_j}) < \Fr(q_{k_1})$.
					Moreover, Claim~III together with $\Fr(s_{k_1}) \leq \Fr(s_{k_j})$ entails $\Fr(s_{k_1}) < \Fr(s_{k_j})$.
					Hence, $\floor{s_{k_1}} - \floor{s_{k_j}} = \floor{s_{k_1} - s_{k_j}} + 1$ and $\floor{q'_{k_1} - q'_{k_j}} = \floor{q'_{k_1}} - \floor{q'_{k_j}}$.
					Consequently, we get
						\begin{align*}
							&\floor{s_{k_1}} - \floor{s_{k_j}} + \Fr(q'_{k_1}) - \Fr(q'_{k_j}) -1 \\
							&= \floor{s_{k_1} - s_{k_j}} + 1 + \Fr(q'_{k_1}) - \Fr(q'_{k_j}) - 1 \\
							&= \floor{q'_{k_1} - q'_{k_j}} + \Fr(q'_{k_1}) - \Fr(q'_{k_j}) \\
							&= \floor{q'_{k_1}} -\floor{q'_{k_j}} + \Fr(q'_{k_1}) - \Fr(q'_{k_j}) \\
							&= q'_{k_1} - q'_{k_j}.
						\end{align*}					
					\strut\hfill$\Diamond$

				\item \underline{Claim VII:}
					For every $j$ we have $\floor{s_{k_j}} + \Fr(q_{k_j}) \leq q'_{k_j}$.
						
				\item \underline{Proof:}
					As we assume $\bq' \not\in S$, there is at least one $i$ such that $\floor{q'_{k_i}} > \floor{s_{k_i}}$.
					This entails $q'_{k_i} \geq \floor{q'_{k_i}} > \floor{s_{k_i}} + \Fr(q_{k_i})$.
					As one consequence of Claim~VI, we get that $\bigl( \floor{s_{k_i}} + \Fr(q_{k_i}) \bigr) - \bigl( \floor{s_{k_j}} + \Fr(q_{k_j}) \bigr) = q'_{k_i} - q'_{k_j}$ for every $j$.
					This can be rewritten into the equivalent equation $q'_{k_j} - \bigl( \floor{s_{k_j}} + \Fr(q_{k_j}) \bigr) = q'_{k_i} - \bigl( \floor{s_{k_i}} + \Fr(q_{k_i}) \bigr)$.
					In other words, we have $q'_{k_j} > \bigl( \floor{s_{k_j}} + \Fr(q_{k_j}) \bigr)$ for every $j$.
					\strut\hfill$\Diamond$
			\end{description}
			This means, if we set $q_{k_j} := \floor{s_{k_j}} + \Fr(q_{k_j})$ for every $j$, $1 \leq j \leq |\bx|$, then the stipulated requirements are satisfied.	
			\qedhere
	\end{description}	
\end{proof}

%%%%%%%%%%%%%%%%%%%%%%%%%%%%%%%%%%%%%%%%%%%%%%%%%%%%%%%%%%%%%%%%%%%%%%%%%%%%%%%
Using the above result, we can prove Lemma~\ref{lemma:DelayClausesWithDifferenceBounds}
\begin{lemma*}
	Consider any delay clause 
		\[ 	C :=\quad 
			z\geq 0 \;\;\wedge\;\; 
			\bigwedge_{x \in \bx} x' \equals x + z \;\;\wedge\;\; 
			\inv_\ell[\bx'] 
			\;\bigm\|\; \Reach(\ell, \bx) \rightarrow \Reach(\ell, \bx') .
		\]
	that belongs to the FOL(LA) encoding of some timed automaton $\bA := \< \Loc, \ell_0, \bx,$ $\<\inv_\ell\>_{\ell\in \Loc}, \cT \>$.
	Let $\lambda$ be some positive integer.
	Let $M$ be a finite clause set corresponding to the following formula
		\begin{align*}
			\varphi :=\;\; \bigwedge_{x_1, x_2 \in \bx} &\bigwedge_{-\lambda \leq k \leq \lambda} \bigl(x_1 - x_2 \leq k \;\leftrightarrow\; x'_1 - x'_2 \leq k \bigr) \\[-0.75ex]
				&\hspace{13ex} \wedge\; \bigl(x_1 - x_2 \geq k \leftrightarrow x'_1 - x'_2 \geq k \bigr) \\[0.25ex]
				&\wedge\; 
					\bigwedge_{x \in \bx} x' \geq x 
					\;\;\wedge\;\; 
					\inv_\ell[\bx'] 
					\;\;\bigm\|\;\; \Reach(\ell, \bx) \rightarrow \Reach(\ell, \bx') .
		\end{align*}
	For every $\simeq_\lambda$-uniform interpretation $\cA$ we have 
		$\cA, [\bx \Mapsto \br, \bx' \Mapsto \br'] \models C$
		for all tuples $\br, \br' \in [0, \lambda+1)^{|\bx|}$
	if and only if
		$\cA, [\bx \Mapsto \bq, \bx' \Mapsto \bq'] \models M$
		holds for all tuples $\bq, \bq' \in [0, \lambda+1)^{|\bx|}$.
\end{lemma*}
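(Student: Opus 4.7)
The plan is to prove the two directions separately and to isolate the main obstacle in the nontrivial one. The easy direction, $\cA \models M$ implies $\cA \models C$, follows almost by inspection: any substitution $\br, \br' \in [0,\lambda+1)^{|\bx|}$ and $z \mapsto t$ satisfying the $C$-premises has $\br' = \br + t$ with $t \geq 0$, and a uniform delay preserves every pairwise difference $x_{i_1} - x_{i_2}$ exactly, so all biconditionals $r_{i_1} - r_{i_2} \rel k \leftrightarrow r'_{i_1} - r'_{i_2} \rel k$ (for $\rel \in \{\leq, \geq\}$ and $-\lambda \leq k \leq \lambda$) hold trivially. Together with the conjunct $r'_i \geq r_i$ (from $t \geq 0$) and $\inv_\ell[\br']$, the premises of the $M$-clause are met, firing its conclusion $\Reach(\ell, \br')$.

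The main effort goes into the converse, $\cA \models C$ implies $\cA \models M$, and rests on Lemma~\ref{lemma:DelaySetsViaDifferenceBounds}. Given $\bq, \bq' \in [0,\lambda+1)^{|\bx|}$ satisfying the $M$-premises—so $\bq' \geq \bq$ componentwise, all bounded difference biconditionals hold, $\inv_\ell[\bq']$ holds, and $\Reach(\ell, \bq)$ holds—I would set $S := [\bq]_{\simeq_\lambda}$. The premises then state exactly that $\bq' \in \hS_2$ with $\bq$ as witness, so the lemma yields $\bq' \in \hS_1$. Unpacking that definition produces some $\bar{p} \in S$ and a common $t \geq 0$ (namely $t = q'_0 - p_0$, which is coordinate-independent by the definition of $\hS_1$) with $\bq' = \bar{p} + t$ componentwise. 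Since $\bar{p} \simeq_\lambda \bq$ and both tuples lie in $[0,\lambda+1)^{|\bx|} \subseteq (-\lambda-1,\lambda+1)^{|\bx|}$, $\simeq_\lambda$-uniformity of $\cA$ gives $\chi_\cA(\bar{p}) = \chi_\cA(\bq)$, and in particular $\Reach(\ell, \bar{p})$ holds. Instantiating $C$ by $\bx \mapsto \bar{p}$, $\bx' \mapsto \bq'$, $z \mapsto t$ then makes its LA body true ($t \geq 0$, $\bq' = \bar{p} + t$, $\inv_\ell[\bq']$) and its antecedent $\Reach(\ell, \bar{p})$ true, so $C$ delivers the desired $\Reach(\ell, \bq')$.

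The real obstacle is the gap between the \emph{flexible} difference-bound premises of $\varphi$, which impose no constraint on the common rate of progress of the clocks, and the \emph{rigid} uniform delay $\bx' = \bx + z$ that $C$ demands. Lemma~\ref{lemma:DelaySetsViaDifferenceBounds} bridges precisely this gap by identifying the two modes of reaching a target valuation from a $\simeq_\lambda$-class (the identity $\hS_2 = \hS_1$), while $\simeq_\lambda$-uniformity of $\cA$ is indispensable for transporting $\Reach(\ell, \cdot)$ from the original valuation $\bq$ to the alternative source $\bar{p}$ without affecting the truth of the atom. Neither ingredient alone would suffice: the lemma only provides the valuation-level correspondence, and uniformity alone cannot reshape $\bq$ into a uniform predecessor of $\bq'$.
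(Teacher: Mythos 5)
Your proof is correct and follows essentially the same route as the paper: the nontrivial direction hinges on the identity $\hS_1 = \hS_2$ from Lemma~\ref{lemma:DelaySetsViaDifferenceBounds} together with $\simeq_\lambda$-uniformity to transport $\Reach(\ell,\cdot)$ within the class of $\bq$, exactly as in the paper's argument. The only (cosmetic) differences are that you argue directly rather than by contradiction and instantiate $z$ with the common shift $t$ instead of first rewriting $C$ into the equivalent $z$-free clause $C'$ as the paper does.
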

\begin{proof}~
	We first show that the clause $C$ is equivalent to the clause
			\begin{align*}
				C' :=\;
				\bigwedge_{x \in \bx} x_0 - x = x'_0 - x' \;\;\wedge\; &\bigwedge_{x \in \bx} x' - x \geq 0 \\
				&\;\;\wedge\;\; 
				\inv_\ell[\bx'] 
				\;\bigm\|\; \Reach(\ell, \bx) \rightarrow \Reach(\ell, \bx') ~,
			\end{align*}
	where $x_0$ is some fixed clock variable $x_0 \in \bx$.		
	Although the variable $z$ in $C$ is universally quantified, the fact that $z$ does not occur on the right-hand side of the implication entails that $z$'s quantifier can be moved inside the premise of the implication $C$ represents, where universal quantification will turn into existential quantification (the quantifier moves into the scope of one implicit negation symbols).			
	This transformation yields an equivalent clause with the constraint $\exists z.\; \bigwedge_{x \in \bx} x' - x = z \;\wedge\; z \geq 0 \;\wedge\; 
	\inv_\ell[\bx']$.
	In addition, we observe
	\begin{align*}
		&\exists z.\; \bigwedge_{x \in \bx} x' - x = z \;\;\wedge\;\; z \geq 0 \\
		&\semequiv\;\; \bigwedge_{x_1,x_2 \in \bx} x'_1 - x_1 = x'_2 - x_2 \;\;\wedge\;\; \bigwedge_{x \in \bx} x' - x \geq 0 \\
		&\semequiv\;\; \bigwedge_{x \in \bx} x'_0 - x_0 = x' - x \;\;\wedge\;\; \bigwedge_{x \in \bx} x' - x \geq 0 \\
		&\semequiv\;\; \bigwedge_{x \in \bx} x_0 - x = x'_0 - x' \;\;\wedge\;\; \bigwedge_{x \in \bx} x' - x \geq 0 ~.
	\end{align*}
	Consequently, the clauses $C$ and $C'$ are equivalent.

	Let $S \subseteq [0, \lambda+1)^{|\bx|}$ be any equivalence class with respect to $\simeq_\lambda$.
	Since we assume $\cA$ to be $\simeq_\lambda$-uniform, we have that, if $\cA, [\bx \Mapsto \br] \models \Reach(\ell, \bx)$ holds for one $\br \in S$, then $\cA, [\bx \Mapsto \bq] \models \Reach(\ell, \bx)$ holds for every $\bq \in S$.

	Now suppose $\cA,[\bx \Mapsto \br, \bx' \Mapsto \br'] \models C'$ holds for all tuples $\br, \br' \in [0, \lambda+1)^{|\bx|}$.
	Moreover, suppose there is some pair of tuples $\bq, \bq' \in [0, \lambda+1)^{|\bx|}$ such that $\cA,[\bx \Mapsto \bq, \bx' \Mapsto \bq'] \not\models \varphi$.
	Thus, we have that $\cA,[\bx \Mapsto \bq, \bx' \Mapsto \bq']$ satisfies the premises of $\varphi$---among them $\inv_\ell[\bx']$--- but does not satisfy the consequent $\Reach(\ell, \bx')$.
	As $\cA, [\bx \Mapsto \bq, \bx' \Mapsto \bq'] \models 
		\bigwedge_{x_1, x_2 \in \bx} \bigwedge_{-\lambda \leq k \leq \lambda} \bigl(x_1 - x_2 \leq k \;\leftrightarrow\; x'_1 - x'_2 \leq k \bigr)
		\;\wedge\; \bigl(x_1 - x_2 \geq k \leftrightarrow x'_1 - x'_2 \geq k \bigr)
		\;\wedge\; \bigwedge_{x \in \bx} x' \geq x$,
	we know that $\bq' \in \hS_2$, where $S \subseteq [0, \lambda+1)^{|\bx|}$ is the equivalence class with respect to $\simeq_\lambda$ to which $\bq$ belongs and $\hS_2$ is defined as in Lemma~\ref{lemma:DelaySetsViaDifferenceBounds}.
	Moreover, we know that $\cA, [\bx \Mapsto \bs] \models \Reach(\ell, \bx)$ for every $\bs \in S$, as $\cA$ is $\simeq_\lambda$-uniform.
	The fact that $\cA, [\bx \Mapsto \br, \bx' \Mapsto \br'] \models C'$ holds for all tuples $\br, \br' \in [0, \lambda+1)^{|\bx|}$ entails $\cA, [\bx' \Mapsto \bs'] \models \Reach(\ell, \bx')$ for every $\bs' \in \hS_1$ for which $[\bx' \Mapsto \bs'] \models \inv_\ell[\bx']$, where $\hS_1$ is defined as in Lemma~\ref{lemma:DelaySetsViaDifferenceBounds}.
	Hence, Lemma~\ref{lemma:DelaySetsViaDifferenceBounds} entails $\cA, [\bx' \Mapsto \bs''] \models \Reach(\ell, \bx')$ for every $\bs'' \in \hS_2$ for which $[\bx' \Mapsto \bs''] \models \inv_\ell[\bx']$, in particular for $\bs'' = \bq'$.
	This contradiction implies that $\cA, [\bx \Mapsto \bq, \bx' \Mapsto \bq'] \models \varphi$ holds for all tuples $\bq, \bq' \in [0, \lambda+1)^{|\bx|}$.
	
	The opposite direction can be argued analogously.
\end{proof}

%%%%%%%%%%%%%%%%%%%%%%%%%%%%%%%%%%%%%%%%%%%%%%%%%%%%%%%%%%%%%%%%%%%%%%%%%%%%%%%
\subsubsection*{Details regarding Theorem~\ref{theorem:TAReachabilityInBdrBd}}

Figure~\ref{figure:EquivalenceClassesForTAs} illustrates the TA regions for some timed automaton with two clocks and in which all integer constants have an absolute value of at most $2$.
For every TA region $R \subseteq \Real^2$ of such an automaton, there is at least one representative $\br \in R$ which lies in $[0, 5)^2$.
%%%%%%%%%%%%%%%%%%%%%%
\begin{figure}[tbh]
\centerline{
\begin{tabular}{p{20pt}p{100pt}}
	\begin{picture}(20,0)
		\put(0,-5){\mbox{$\<0,0\>$}}
	\end{picture}
	&
	\begin{picture}(100, 115)
		\multiput(0,0)(20,0){3}{\circle*{3}}
		\multiput(0,20)(20,0){4}{\circle*{3}}
		\multiput(0,40)(20,0){4}{\circle*{3}}
		\multiput(20,60)(20,0){2}{\circle*{3}}
		\put(40,80){\circle*{3}}
		\put(80,40){\circle*{3}}
		\linethickness{1.5pt}
		% vertical lines
		\multiput(40,62)(20,0){1}{\line(0,1){16}}
		\multiput(20,42)(20,0){2}{\line(0,1){16}}
		\multiput(0,22)(20,0){3}{\line(0,1){16}}
		\multiput(0,2)(20,0){3}{\line(0,1){16}}
		\put(0,42){\line(0,1){76}}
		\put(20,62){\line(0,1){56}}
		\put(40,82){\line(0,1){36}}
		% horizontal lines
		\multiput(2,40)(20,0){4}{\line(1,0){16}}
		\multiput(2,20)(20,0){4}{\line(1,0){16}}
		\multiput(2,0)(20,0){2}{\line(1,0){16}}
		\put(42,0){\line(1,0){76}}
		\put(62,20){\line(1,0){56}}
		\put(82,40){\line(1,0){36}}
		%\diaginal lines
		\multiput(2,42.2)(20,20){2}{\line(1,1){15.8}}
		\multiput(2,42)(20,20){2}{\line(1,1){16}}
		\multiput(2.2,42)(20,20){2}{\line(1,1){15.8}}
		\put(42,82.2){\line(1,1){35.8}}
		\put(42,82){\line(1,1){36}}
		\put(42.2,82){\line(1,1){35.8}}
		\multiput(2,22.2)(20,20){2}{\line(1,1){15.8}}
		\multiput(2,22)(20,20){2}{\line(1,1){16}}
		\multiput(2.2,22)(20,20){2}{\line(1,1){15.8}}
		\put(42,62.2){\line(1,1){55.8}}
		\put(42,62){\line(1,1){56}}
		\put(42.2,62){\line(1,1){55.8}}
		\multiput(2,2.2)(20,20){2}{\line(1,1){15.8}}
		\multiput(2,2)(20,20){2}{\line(1,1){16}}
		\multiput(2.2,2)(20,20){2}{\line(1,1){15.8}}
		\put(42,42.2){\line(1,1){75.8}}
		\put(42,42){\line(1,1){76}}
		\put(42.2,42){\line(1,1){75.8}}
		\multiput(22,2.2)(20,20){2}{\line(1,1){15.8}}
		\multiput(22,2)(20,20){2}{\line(1,1){16}}
		\multiput(22.2,2)(20,20){2}{\line(1,1){15.8}}
		\put(62,42.2){\line(1,1){55.8}}
		\put(62,42){\line(1,1){56}}
		\put(62.2,42){\line(1,1){55.8}}
		\multiput(42,2.2)(20,20){2}{\line(1,1){15.8}}
		\multiput(42,2)(20,20){2}{\line(1,1){16}}
		\multiput(42.2,2)(20,20){2}{\line(1,1){15.8}}
		\put(82,42.2){\line(1,1){35.8}}
		\put(82,42){\line(1,1){36}}
		\put(82.2,42){\line(1,1){35.8}}
	\end{picture}
\end{tabular}	
}
\caption{
	Partition of the set $[0,\infty)^2$ into classes of clock valuations that cannot be distinguished by a timed automaton with two clocks in which the absolute value of integer constants occurring in location invariants and transition guards does not exceed~$2$.
	Every dot, line segment, and white area represents an equivalence class.}
\label{figure:EquivalenceClassesForTAs}
\end{figure}
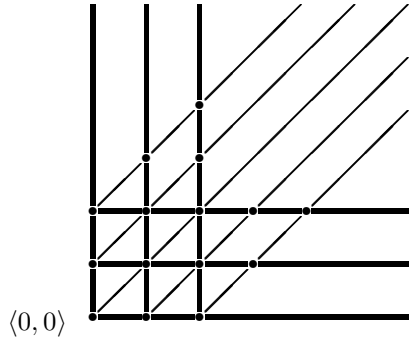
%%%%%%%%%%%%%%%%%%%%%%

Let $\bA := \< \Loc, \ell_0, \bx,$ $\<\inv_\ell\>_{\ell\in \Loc}, \cT \>$ be a timed automaton and let $k$ be the maximal absolute value of any integer constant occurring in the invariants and the transition guards of $\bA$.
Let $x_1, \ldots, x_\ell$ be some enumeration of all the clock variables in $\bx$.
Consider a constraint of the form \\
	\centerline{$\psi :=\;\; x_1 - x_2 = k \;\wedge\; x_2 - x_3 = k \;\wedge \ldots \wedge\; x_{\ell-1} - x_\ell = k$.}
We observe that $\psi$ entails $x_1 - x_\ell = (\ell-1) \cdot k$.
Of course, $\psi$ can also be conjoined with the constraint $x_1 < -k$, say, which entails $x_\ell < -k-(\ell-1)\cdot k$.
This example illustrates that one can combine several difference constraints $x - y \rel c$ over different clock variables in such a way that bounds are achieved which cannot be formulated with a single constraint $u - v \rel d$ with $|d| \leq k$.
However, all of those combined constraints can be equivalently represented with atomic constraints $x - y \rel c$ or $x \rel c$, where $|c| \leq |\bx|\cdot k$.

In the main text (in the discussion preceding Theorem~\ref{theorem:TAReachabilityInBdrBd} in Section~\ref{section:ReachabilityForTimedAutomata}), we mention that there exists a computable integer $\lambda$ such that any valuation $\br$ of $\bA$'s clocks can be projected to some valuation $\br' \in [0, \lambda+1)^{|\bx|}$ which $\bA$ cannot distinguish from $\br$.
Due to the above observations, we find that $\lambda = |\bx|\cdot k$ meets the stipulated requirements.
Hence, in order to decide reachability for $\bA$, it is sufficient to consider the bounded subspace $[0,\lambda+1)^{|\bx|} \subseteq \Real^{|\bx|}$.
This means, given the FOL(LA) encoding $N_\bA$ of $\bA$, we obtain a \BsrBd\ encoding $N'_\bA$ of reachability with respect to $\bA$ in the following two steps: \\
(1) Replace every delay clause in $N_\bA$ with a corresponding finite set of clauses $M$ in accordance with Lemma~\ref{lemma:DelayClausesWithDifferenceBounds}, where we set $\lambda := |\bx|\cdot k$.\\
(2) Conjoin the constraints $0 \leq x \;\wedge\; x < \kappa$ for $\kappa := \lambda+1 = |\bx|\cdot k + 1$ to every constraint in which a base-sort variable $x$ occurs.\\
Since any $\hsimeq_{\lambda+1}$-uniform model of $N'_\bA$ is $\simeq_{\lambda}$-uniform over the subspace $(-\lambda-1, \lambda+1)^{|\bx|}$, Lemma~\ref{lemma:DelayClausesWithDifferenceBounds} entails that $N'_\bA$ faithfully encodes reachability with respect to $\bA$.

\end{document}